\newcommand\coloredcomponent[2]
\theoremstyle{definition}
\newtheorem{theorem}{Theorem}[section]
\newtheorem{proposition}[theorem]{Proposition}
\newtheorem{corollary}[theorem]{Corollary}
\newtheorem{notation}[theorem]{Notation}
\newtheorem{definition}[theorem]{Definition}
\newtheorem{formulation}[theorem]{Formulation}
\newtheorem{discussion}[theorem]{Discussion}
\newtheorem{example}[theorem]{Example}
\newtheorem{remark}[theorem]{Remark}
\newcommand{\norm}[1]{\left\lVert#1\right\rVert}
\newcommand{\abs}[1]{\left\lvert#1\right\rvert}
\newcommand{\inner}[2]{\langle#1,#2\rangle}
\newcommand{\tr}{\mathop{\rm tr}\nolimits}
\newcommand{\Ker}{\mathop{\rm Ker}\nolimits}
\newcommand{\Hom}{\mathop{\rm Hom}\nolimits}
\newcommand{\vol}{\mathop{\rm vol}\nolimits}
\newcommand{\rel}{\varepsilon^\perp}
\newcommand{\OR}{\mathop{\rm or}\nolimits}
\newcommand{\N}{\mathbb{N}}
\newcommand{\R}{\mathbb{R}}
\newcommand{\set}[2]{\{#1 \mid #2\}}
\newcommand{\restrict}[2]{\left.#1\right|_{#2}}
\newcommand{\precdot}{%
  \mathrel{%
    \ensurestackMath{%
      \stackinset{c}{0.6ex}{c}{-0.3pt}{\cdot}{\prec}%
    }%
  }%
}
\newcommand{\succdot}{%
  \mathrel{%
    \ensurestackMath{%
      \stackinset{c}{-0.6ex}{c}{-0.3pt}{\cdot}{\succ}%
    }%
  }%
}
\newcommand{\newterm}[1]{\textbf{#1}}
\newcommand{\amount}{\mathsf{X}}
\newcommand{\potential}{\mathsf{Y}}
\newcommand{\mass}{\mathsf{M}}
\newcommand{\length}{\mathsf{L}}
\renewcommand{\time}{\mathsf{T}}
\newcommand{\temperature}{\theta}
\newcommand{\charge}{\mathsf{C}}
\newcommand{\diffusive}{\mathcal{D}}
\newcommand{\advective}{\mathcal{A}}
\newcommand{\Dirichlet}{\mathbf{D}}
\newcommand{\Neumann}{\mathbf{N}}
\newcommand{\topStrut}{\rule{0pt}{2.6ex}}
\begin{document}

\hypersetup{
  pdfauthor = {Kiprian Berbatov, Andrey Jivkov},
  pdfcreator = {pdflatex},
  pdfproducer = {Latex2e with hyperref},
  colorlinks = true,
  linkcolor = blue,
  citecolor = green,
  urlcolor = cyan,
  bookmarksnumbered = true,
  bookmarksopen = false,
}

\begin{frontmatter}

\title{Variational formulations of transport phenomena on combinatorial meshes}
\author[1]{Kiprian Berbatov\corref{fn1}}
\author[1]{Andrey P. Jivkov\corref{fn2}}
\address[1]{Department of Mechanical and Aerospace Engineering,
  The University of Manchester, Oxford Road, Manchester M13 9PL, UK}
\cortext[fn1]{Corresponding author: kiprian.berbatov@manchester.ac.uk}
\cortext[fn2]{Corresponding author: andrey.jivkov@manchester.ac.uk}

\begin{abstract}
  We develop primal and mixed variational formulations of transport phenomena on
  cell complexes with simple polytope connectivity.
  This framework addresses materials with internal structures comprising
  components of different topological dimensions, where cells of each dimension
  may possess distinct physical properties.
  The approach, which we call Combinatorial Mesh Calculus (CMC), extends
  Forman's combinatorial differential forms, previously used to formulate strong
  conservation laws.
  CMC operates directly on meshes without requiring smooth
  embeddings, using discrete analogues of the exterior derivative, Hodge star,
  and co-differential operators.
  Our mixed formulation leads to a block-diagonal mass-like matrix arising
  from inner products weighted by material coefficients, enabling efficient
  local elimination strategies within the mixed system.
  CMC differs from Discrete Exterior Calculus, which requires circumcentric
  duality and well-centred meshes, and from Finite Element Exterior Calculus,
  which constructs polynomial spaces on smooth domains.
  Our framework applies to general cell complexes, including curved cells and
  irregular meshes; nonetheless irregularity leads to worse numerical
  performance.
  The mathematical development proceeds in parallel between the smooth and
  discrete settings, establishing correspondences between continuous and
  discrete operators.
  Initial boundary value problems are formulated for mass diffusion, heat
  conduction, charge transport, and fluid flow through porous media.
  Numerical examples on regular and irregular meshes in two and three dimensions
  demonstrate agreement with analytical solutions.
  The framework enables modelling of transport in materials where
  microstructural topology influences macroscopic behaviour, with applications
  to polycrystalline materials, composites, and porous media.
\end{abstract}

\begin{keyword}
Exterior calculus \sep Combinatorial mesh calculus \sep Combinatorial differential forms \sep Initial boundary value problems \sep Primal weak formulation \sep Mixed weak formulation
\MSC[2010] 52B70 \sep 57-01 \sep 65N22 \sep 65N99 \sep 80A20
\end{keyword}

\end{frontmatter}

\section{Introduction}

\subsection{Motivation}
\label{sec:introduction/motivation}

\noindent Modern experimental techniques are revealing, with increasing detail, the complexity of internal structures in materials \cite{kumar2023MC,hayashi2024MC} and their evolution under various forces \cite{stock2020XCT,woodruff2021SCT}.

These observations demonstrate that \textit{material structures at any length scale consist of finite-size components appearing with different topological dimensions}. Relative to the bulk, materials contain point defects, line defects, and surface defects. For example, at the polycrystalline level, metals and alloys appear as collections of 3D grains, 2D grain or interphase boundaries, and 1D boundary junctions. At sub-grain level, they contain regions with perfect atomic lattice (3D), stacking faults (2D), and dislocations (1D). This multi-dimensional view extends to the atomic scale, where 3D Wigner-Seitz cells around atoms connect through 2D faces and 1D edges \cite{hunklinger2022SSP}.

The behaviour of these structures emerges from interactions between components of all dimensions. Critically, \textit{any given component affects all components of lower dimensions on its boundary and all components of higher dimensions containing it}. Transport phenomena, including mass diffusion, heat conduction, and charge transport, occur differently through bulk regions, along grain boundaries and junction lines, with each pathway characterised by distinct material properties \cite{trovalusci2016MSM,brancherie2017MSM}. The evolution of these structures involves discrete, finite changes in arrangement, shape, size, and nature of components through processes such as shear, separation, diffusion, dissolution, and chemical reactions. These microstructural features localise and channel transport, making their accurate representation essential for predictive modelling \cite{burczynski2022MSM,baniassadi2023MCM}.

From a computational perspective, these multi-scale structures present unique challenges: (i) the need for structure-preserving discretisations that maintain conservation properties exactly, (ii) efficient solution of the resulting algebraic systems with heterogeneous material properties, and (iii) numerical methods that handle arbitrary polytopes without geometric quality constraints. Moreover, the discrete topological changes in material evolution require numerical frameworks that can naturally accommodate discontinuities and interface phenomena without artificial smoothing or enrichment techniques.

Current modelling approaches fail to capture this multi-dimensional complexity adequately. Continuum methods based on smooth topology \cite{morro2023Continuum} operate with averaged intensive properties and lead to partial differential equations that require numerical approximation. The most popular approaches - finite element methods \cite{brenner2008FEM}, finite difference methods \cite{thomas1995FDM}, and finite volume methods \cite{leveque2012FVM} - effectively capture macroscopic properties but struggle with discontinuities at interfaces and defects. These are addressed through various approximations including cohesive zone models \cite{ruiz2001CZM,park2011CZM}, extended finite element methods \cite{mohammadi2012XFEM,vellwock2024XFEM}, and phase-field methods \cite{chen2022phase,chen2024phase}. However, phase-field approaches use artificial smoothing for sharp transitions, potentially compromising accuracy for localised phenomena. Furthermore, continuum constitutive laws may inadequately capture the complexities of real materials, particularly under extreme deformations, multi-phase interactions, or non-equilibrium processes.

Discrete topology methods offer an alternative foundation through particle-based approaches including molecular dynamics \cite{santamaria2023MD}, peridynamics \cite{madenci2014PD}, smoothed particle hydrodynamics \cite{filho2019SPH}, and discrete element methods \cite{jebahi2015DEM}. These methods naturally handle atomic and mesoscopic phenomena but lack explicit topological connectivity, making it difficult to enforce constraints and predict emergent behaviours. Their absence of intrinsic volume concepts hampers effective treatment of long-range interactions. A significant limitation is their reliance on heuristic force laws that may lack physical justification.

The constraints of both discrete and continuous topologies can be overcome by adopting cell complexes \cite{kozlov2008,knudson2022} as the topological foundation for modelling. Cell complexes are collections of cells of different topological dimensions arranged following specific rules, studied in algebraic topology. Unlike continuum approaches relying on smooth differentiable manifolds, or discrete particle methods lacking explicit topological structure, cell complexes naturally encode the connectivity and geometry of complex material domains. This makes them particularly suited for modelling materials with intricate microstructures, such as polycrystalline metals, composite materials, and biological tissues.

In previous work \cite{berbatov2022diffusion}, we demonstrated a formulation of scalar conservation laws (mass, energy, charge, volume) on cell complexes analogous to the strong (differential) form in continuum settings. The formulation used combinatorial differential forms, introduced by Forman \cite{forman2002combinatorial}, extended with metric-dependent operations for physics applications. We showed construction of boundary value problems on cell complexes and applied the method to diffusion in composites with complex component arrangements. However, the strong form's analogue approximated Neumann boundary conditions. Weak (integral) formulations avoid this approximation and provide the natural setting for numerical implementation.

\subsection{Overview and contributions}
\label{sec:introduction/contribution}

\noindent
In \Cref{sec:exterior_calculus} we recall the essential elements of exterior calculus on smooth manifolds. The presentation is self-contained, offering abstract definitions of key terms used throughout the paper. While this overview is not a substitute for a textbook on differential geometry (e.g., \cite{lee2012introduction}), it introduces the mathematical machinery essential for the developments that follow. It also lays the foundation for the discrete mesh calculus introduced in \Cref{sec:combinatorial}, where all operations have natural counterparts in exterior calculus.

In \Cref{sec:continuum}, we formulate the continuum model for transport phenomena using exterior calculus. Two variational formulations on a $D$-manifold $M$ are derived: a primal weak formulation with the potential as a $0$-form (\Cref{sec:continuum/primal}), and a mixed weak formulation with the flow rate as a $(D-1)$-form and the dual potential as a $D$-form (\Cref{sec:continuum/mixed}).
Crucially, all physical quantities are defined on their natural geometric domains as differential forms, and the governing laws are expressed in the language of exterior calculus. To the best of our knowledge, such modelling of transport phenomena has not previously appeared in the literature. While the formulations could be rephrased in vector calculus, their expression in exterior calculus better captures the underlying physical structure. We note that the formulation highlights structural parallels between the continuous and discrete settings, as shown in \Cref{sec:discrete}, offering conceptual clarity for further theoretical development, and we believe it has significant pedagogical value.

Although the variational derivations conceptually mirror those in vector calculus, we derive them entirely within the language of exterior calculus. The resulting formulations -- particularly the mixed one -- are related to known results, such as the mixed variational formulation for the Dirichlet problem, a special case of the more general weak Hodge Laplacian problem in Finite Element Exterior Calculus (FEEC) \cite[Chapter 8]{arnold2018finite}.
While FEEC naturally accommodates essential and natural boundary conditions, our
emphasis is on their explicit interpretation in terms of primal and dual
combinatorial forms and associated physical quantities.

While the algebraic structure of the present framework is closely related to
Discrete Exterior Calculus and Finite Element Exterior Calculus, the intent of
this work is not to discretise a continuum problem posed on a smooth domain.
Instead, physical balance laws are formulated directly on a given cell complex,
which is treated as a primary representation of the system.
Cell measures and material parameters are introduced subsequently to encode
geometry and physics.
The framework should therefore be interpreted as a native discrete physical
model rather than as a numerical approximation of a continuous one.

In \Cref{sec:combinatorial} we introduce the formalism of combinatorial meshes and their Forman subdivisions. This builds on our previous work in \cite{berbatov2022diffusion}, which developed meshes of simple polytopes and combinatorial differential forms \cite{forman2002combinatorial}, interpreted as cochains on subdivisions of quasi-cubical meshes (meshes with cube-like connectivity). Notable differences from \cite{berbatov2022diffusion} include: (1) we now work with combinatorial meshes -- intrinsically defined discrete structures that can be embedded as manifold subdivisions, including with non-flat cells; (2) we adopt the inner product introduced in later work \cite{berbatov2023discrete}; (3) we drop the convexity requirement on mesh cells, leading to adjustments in the inner product and Hodge star definitions; and (4) we adopt the standard sign convention for the Hodge star.

In \Cref{sec:discrete}, we present the primal and mixed variational formulations for transport phenomena on combinatorial meshes. Although this section contains the core contribution of the paper, it is also the most straightforward. With the continuum variational formulations established in \Cref{sec:continuum} and the discrete machinery developed in \Cref{sec:combinatorial}, the discrete formulations follow by direct translation. The paper is structured to build the theory step by step - from the smooth to the discrete setting - ensuring all necessary mathematical foundations are established before their synthesis.

Finally, in \Cref{sec:simulations}, we verify our approach with several steady-state problems across different spatial dimensions and with varying domain and mesh geometries. While our long-term goal is to apply the method to complex microstructures (as in \cite[Section 4]{berbatov2022diffusion}), here we test it on manufactured solutions of continuum problems and report discretisation errors. The method can be viewed both as a discretisation of continuum models and as an intrinsically discrete modelling framework. The accompanying code \cite{berbatov2026cmc}, \url{https://github.com/kipiberbatov/cmc}, includes additional steady-state and transient examples, though only steady-state results are reported in this article for clarity.

\subsection{Style and notations}
\label{sec:introduction/style}

\noindent The sections are broken down into items named discussion, definition, notation, remark, proposition, corollary, formulation, and example, which are numbered to facilitate cross-referencing.
All unary operators (excluding taking the additive inverse) have higher precedence than binary operators (excluding taking exponents).
For instance, $f v \wedge w$ will mean $(f v) \wedge w$.
Further, we record extensively the physical dimensions of the quantities, operators, and equations we discuss.
If a quantity $v$ has dimension $X$, we will write $v [X]$ when we want to record the dimension of $v$, and $[[v]] = X$ when we do calculations with dimensions.
Physical dimensions will be represented by the SI quantities for mass $\mass$, length $\length$, time $\time$, temperature $\temperature$, and charge $\charge$.
(All except the last one are base SI quantities; for historical reasons electric current $\mathsf{I} = \charge \time^{-1}$ is used instead of $\charge$ as a base quantity, but we prefer charge since it is a conserved property of matter.)
If an operator $\varphi$ takes quantities $v_1 [X_1], ..., v_n[X_n]$ and returns a quantity $\varphi(v_1, ..., v_n) [Y X_1 ... X_n]$, we will say that $\varphi$ is of dimension $Y$, and write $\varphi [Y]$ or $[[\varphi]] = Y$.
Finally, for equations, if $v [X]$ and $w [X]$, we will emphasise their dimensions in equalities, writing $v = w [X]$.

\section{Overview of exterior calculus}
\label{sec:exterior_calculus}

\begin{discussion}
  We present only the elements of exterior calculus needed in the rest of the work.
  Some basic knowledge of general topology is required.
  Experience with smooth manifolds is also beneficial, although we introduce the terminology we use anyway.
  A comprehensive book on smooth manifolds is, for instance, \cite{lee2012introduction}.
  Exterior calculus offers several advantages over vector calculus: its notation is succinct and dimension-independent, and the underlying mathematical objects are coordinate-free and globally defined.
  The benefits are:
  \begin{enumerate}
    \item
      A clear distinction between manifold's intrinsic (non-metric) calculus (\Cref{sec:exterior_calculus/topological}) and (additional) calculus depending on a structure such as a Riemannian metric (\Cref{sec:exterior_calculus/metric}).
      This elucidates the distinction between different types of physical relations and the mapping of these relations straightforwardly to their discrete counterparts, for which we build a similar calculus.
    \item
      A representation of physical quantities by differential forms ready to be integrated over their respective domains.
      In contrast, in $3$D vector calculus, there are only scalar and vector fields, and it is impossible to differentiate between ``scalar'' quantities living on $0$- and $3$-cells, and ``vector'' quantities living on $1$-cells and $2$-cells.
      Terms like \newterm{pseudo-scalar fields} and \newterm{pseudo-vector fields} are often used, but they are related to sign invariance under change of orientation. Because sign invariance is an important notion, we will use the term \newterm{pseudo-forms}.
    \item
      Differential forms can express extensive physical quantities, not just densities (intensive ones).
    \item
      The post-processing of non-primary quantities in their respective formulations, such as flow rate in primal formulations and potential in mixed formulations, is clear in the respective discrete formulations.
  \end{enumerate}
\end{discussion}
\subsection{Topological operations}
\label{sec:exterior_calculus/topological}
\begin{definition}
  Let $D \in \N$, $M$ be a topological space.
  We say that $M$ is a \newterm{locally Euclidean} of \newterm{dimension} $D$ (write it $\dim M = D$) if every point $ x \in M$ has
  a neighbourhood $U$ homeomorphic to $\R^D$ ($x$ is an \newterm{interior point}),
  or a neighbourhood $U$ homeomorphic to the half-space $\R_{\geq 0} \times \R^{D - 1}$ ($x$ is a \newterm{boundary point}).
  A \newterm{chart} on $M$ is a pair $(U, \varphi)$, where $U$ is an open subset of $M$ and
  $\varphi \colon U \to \R^D$ (interior chart),
  or $\varphi \colon U \to \R_{\geq 0} \times \R^{D - 1}$ (boundary chart)
  is a homeomorphism.
  A chart $(U, \varphi)$ generates a \newterm{local coordinate system}, denoted by $\{x^p\}_{p = 1}^D$, where for $y \in U$, $x^p(y) := (\varphi(y))^p$ (here, for $z \in \R^D$, $z^p$ is the $p^{\rm th}$ component of $z$).

  We say that $M$ is a \newterm{topological manifold with boundary} if it is locally Euclidean, Hausdorff and second countable; Hausdorffness and second countability are technical requirements that will not be discussed.
  An \newterm{atlas} on $M$ is a collection of charts whose domains cover $M$.
  A \newterm{smooth atlas} is an atlas such that all \newterm{transition maps} between two charts $(U, \varphi)$ and $(V, \psi)$ are smooth in the sense of multivariable calculus, i.e.,
  \begin{equation}
    \restrict{\varphi}{U \cap V} \circ \psi^{-1}
  \end{equation}
  are smooth maps; their domain and codomain are subsets of $\R^D$.
  A \newterm{smooth manifold with boundary} is a topological manifold with boundary supplied with a smooth atlas.
  The \newterm{boundary} $\partial M$ of $M$ consists of all boundary points of $M$.
  It is a manifold without boundary with $\dim(\partial M) = D - 1$.
\end{definition}
\begin{definition}
  Let $M$ and $N$ be smooth manifolds with boundary, $\dim M = D$, $\dim N = D'$.
  A function $f \colon M \to \R$ is \newterm{smooth}
  if for every smooth charts $(U, \varphi)$ of $M$ and $(V, \psi)$ of $N$
  the multivariable function
  \begin{equation}
    \restrict{\psi}{V \cap f(U)} \circ \restrict{f}{U} \circ \varphi^{-1}
  \end{equation}
  is smooth in the sense of multivariable calculus; its domain and codomain are subsets of $R^D$ and $R^{D'}$, respectively.
  The set of smooth functions between $M$ and $N$ will be denoted by $\mathcal{C}^\infty(M, N)$.
  A smooth function $f \in \mathcal{C}^\infty(M, N)$ is called a \newterm{diffeomorphism} if it is invertible and its inverse is also smooth.
  $f$ is called \newterm{embedding}, if it restricts to a diffeomorphism to its image.
\end{definition}
\begin{remark}
  When $N = \R$ we will use the notation
  \begin{equation}
    \mathcal{F} M := \mathcal{C}^\infty(M, \R).
  \end{equation}
  Under pointwise addition, multiplication, and scalar multiplication, $\mathcal{F} M$ forms an algebra over $\R$.
  Disregarding scalar multiplication, $\mathcal{F} M$ is a \newterm{commutative ring with unity}, and as such will serve as a base ring for the $(\mathcal{F} M)$-\newterm{modules} of vector fields and differential forms.
  (Recall that a \newterm{module over commutative ring with unity} is the straightforward generalisation of a vector space over field, the base being the more general structure of a commutative ring with unity instead of a field.)
\end{remark}
\begin{definition}
  Let $M$ be a smooth manifold, $S \subseteq M$ that is a smooth manifold in its own.
  $S$ is called a \newterm{submanifold} of $M$ if the \newterm{inclusion map} $\iota_S \colon S \to M,\ \iota(s) = s$ is an embedding.

  For instance, $\partial M$ is a submanifold of $M$.
  Also, any open subset of $M$ is a submanifold of $M$.
\end{definition}
\begin{definition}
  Let $M$ be a smooth manifold, $x \in M$.
  A \newterm{tangent vector at $x$} is a map
  \begin{equation}
    v \colon \mathcal{F} M \to \R
  \end{equation}
  with the following property: for any $f, g \in \mathcal{F} M$,
  \begin{equation}
    v(f g) = v(f) \cdot g(x) + f(x) \cdot v(g).
  \end{equation}
  The space of all tangent vectors at $x$ is called the \newterm{tangent space at $x$}, denoted by $T_x M$.
  It is a real vector space.
\end{definition}
\begin{definition}
  Let $M$ be a smooth manifold of dimension $D$.
  A \newterm{vector field} on $M$ is a \newterm{derivation} on $\mathcal{F} M$, i.e., a map
  \begin{equation}
    X \colon \mathcal{F} M \to \mathcal{F} M
  \end{equation}
  that is $\R$-linear and for all $f, g \in \mathcal{F} M$,
  \begin{equation}
    X(f g) = (X f) \cdot g + f \cdot (X g).
  \end{equation}
  The space of vector fields over $M$ will be denoted by $\mathfrak{X} M$.
  It is a real vector space but also an $(\mathcal{F} M)$-module with the following multiplication: for every $f \in \mathcal{F} M,\ X \in \mathfrak{X} M$, $f X \in \mathfrak{X} M$ is defined so that for each $g \in \mathcal{F} M$,
  \begin{equation}
    (f X)(g) = f \cdot (X g).
  \end{equation}
\end{definition}
\begin{remark}
  Any vector field $X$ can be restricted to a tangent vector $\restrict{X}{x}$ at $x$ as follows: for any $f \in \mathcal{F} M$,
  \begin{equation}
    \restrict{X}{x}(f) := (X f)(x).
  \end{equation}
  Conversely, if a tangent vector is defined in a \emph{continuous fashion} at every point $x$, this gives rise to a vector field on $M$. The precise definition requires the theory of bundles and sections which will not be discussed here.
\end{remark}
\begin{remark}
  It can be proven \cite[Corollary 3.3]{lee2012introduction} that for any $y \in M$, $T_y M$ is a $D$-dimensional real vector space,
  and $\mathfrak{X} M$ is a $D$-dimensional module over $\mathcal{F} M$.
  In local coordinates (Cartesian coordinates when $M = \R^D$) a vector field $X$ and a tangent vector $\restrict{X}{y}$ have the representation
  \begin{equation}
    X = \sum_{p = 1}^D f_p \frac{\partial}{\partial x^p},\
    \restrict{X}{y} = \sum_{p = 1}^D f_p(y) \restrict{\frac{\partial}{\partial x^p}}{y}
  \end{equation}
  for some smooth functions $f_1, ..., f_D$. The operators $\partial / (\partial x^p)$ are the partial derivative operators.
\end{remark}
\begin{definition}
  Let $R$ be a commutative ring with unity (for our purposes, $\R$ or $\mathcal{F} M$ for some manifold $M$), $V$ be a module over $R$.

  The \newterm{dual space} $V^*$ of $V$ is defined as $\Hom(V, R)$, the space of linear functions on $V$.
  If $V$ is finite dimensional, $\dim V = D$, then also $\dim V^* = D$.
  To a basis $(e_1, ..., e_D)$ of $V$ corresponds the \newterm{dual basis} $(e^1, ..., e^D)$ of $V$, so that for $1 \leq p, q \leq D$, $e^p(e_q) = \delta^p_q$, where $\delta^p_q$ is the \newterm{Kronecker delta} ($1$ if $p = q$ and $0$ if $p \neq q$).

  The \newterm{exterior algebra} $\Lambda^\bullet V$ of $V$ is the freest associative and alternating algebra containing $V$, i.e.,
  it is the algebra generated by an associative operation $\wedge$, the \newterm{wedge (exterior) product},
  subject to the \newterm{alternating property}: for all $v \in V$, $v \wedge v = 0$.
  From the alternating property it follows the antisymmetric property:
  for any $v, w \in V$, $v \wedge w = - w \wedge v$.
  When $V$ is of dimension $D$ with basis $e_1, ..., e_D$, $\Lambda^\bullet V$ is decomposed into
  \begin{equation}
    \Lambda^\bullet V = \bigoplus_{p = 0}^D \Lambda^p V.
  \end{equation}
  Here, for $p \in \{0, ..., D\}$, $\Lambda^p V$ is the space of \newterm{$p$-vectors}, having dimension $\binom{D}{p}$, and basis $e_{i_1} \wedge ... \wedge e_{i_p}$, where
  $1 \leq i_1 < i_2 < ... < i_p \leq D$
  (when $p = 0$, the unit element $1 \in R$ forms a basis of $\Lambda^0 V$).
  It follows that the wedge product restricts to
  \begin{equation}
    \wedge_{p, q} \colon \Lambda^p V \times \Lambda^q V \to \Lambda^{p + q} V\quad
      (0 \leq p, q,\ p + q \leq D).
  \end{equation}
  The antisymmetric property is generalised to the \newterm{graded-commutative rule}:
  $\wedge_{q, p} = (-1)^{p q} \wedge_{p, q}$.
\end{definition}
\begin{definition}
  Let $M$ be a smooth manifold of dimension $D$, $0 \leq p \leq D$.
  The space $\Omega^p M$ of \newterm{differential $p$-forms} on $M$ is the space $\Lambda^p((\mathfrak{X} M)^*)$ of $p$-vectors on the $(\mathcal{F} M)$-dual of $\mathfrak{X} M$.
  The space of differential forms $\Omega^\bullet M$ is defined by the following decomposition:
  \begin{equation}
    \Omega^\bullet M := \bigoplus_{p = 0}^D \Omega^p M,
  \end{equation}
  and is an exterior algebra of over $\mathcal{F} M$, together with $\wedge$, the \newterm{exterior product of differential forms}.
  In a local coordinate system $\{x^q\}_{q = 1}^D$, for any $p \in \{0, ..., D\}$, the space $\Omega^p M$ of $p$-forms is spanned by
  \begin{equation}
    d x^{i_1} \wedge ... \wedge d x^{i_p},\
    1 \leq i_1 < i_2 < ... < i_p \leq D.
  \end{equation}
  Here, the $1$-forms $\{d x^q\}_{q = 1}^D$ form the basis of $\Omega^1 M = (\mathfrak{X} M)^*$ dual to the basis $\{\partial / (\partial x^q)\}_{q = 1}^D$ of $\mathfrak{X} M$.

  If $y \in M$, then the \newterm{fibre} $\Omega^p_y$ can be formed as the exterior algebra (over $\R$) of the \newterm{cotangent space} $(T_y M)^*$,
  and the above $(\mathcal{F} M)$-basis of $\Omega^p M$ restricts to the $\R$-basis of
  $\Omega^p_y = \Lambda^p((T_y M)^*)$ spanned by:
  \begin{equation}
    \restrict{d x^{i_1}}{y} \wedge ... \wedge \restrict{d x^{i_p}}{y},\
    1 \leq i_1 < i_2 < ... < i_p \leq D.
  \end{equation}
  The wedge product $\wedge$ is dimensionless.
\end{definition}
\begin{definition}
  Let $M$ be a smooth manifold of dimension $D$.
  The \newterm{exterior derivative} of forms,
  \begin{equation}
    \begin{split}
      & d_p \colon \Omega^p M \to \Omega^{p + 1} M
      & (0 \leq p < D), \\
      & d \colon \Omega^\bullet M \to \Omega^\bullet M
      & (d = \bigoplus_{p = 0}^{D - 1} d_p)
    \end{split}
  \end{equation}
  is an analytic operation on forms, and is the unique linear operation on forms that satisfies the following conditions:
  \begin{enumerate}
    \item
      $(\Omega^\bullet, d)$ is a \newterm{cochain complex}, i.e.,
      \begin{equation}
        d \circ d = 0\ (d_p \circ d_{p - 1} = 0,\ p = 1, ..., D - 1);
      \end{equation}
    \item
      \newterm{graded Leibniz rule}: for $\omega \in \Omega^p M$ and $\eta \in \Omega^q M$,
      \begin{equation}
        \label{eq:exterior_calculus/leibniz}
        d_{p + q}(\omega \wedge \eta) = d_p \omega \wedge \eta + (-1)^p \omega \wedge d_q \eta;
      \end{equation}
    \item
      $d_0$ is the \newterm{differential} on functions, i.e., for a function $f \in \mathcal{F} M$, and a vector field $X \in \mathfrak{X} M$,
      \begin{equation}
        (d f)(X) = X(f).
      \end{equation}
  \end{enumerate}
  The above equation in a coordinate system $\{x^q\}_{q = 1}^D$ takes the form
  \begin{equation}
    d f = \sum_{q = 1}^D \frac{\partial f}{\partial x^q} d x^q.
  \end{equation}
  In algebraic terms all conditions except the last one mean that $(\Omega^\bullet M, \wedge, d)$ is a \newterm{differential graded algebra}.

  For any $p$ the operator $d_p$ is dimensionless, hence $d$ is dimensionless as well.
\end{definition}
\begin{definition}
  Let $M$ and $N$ be smooth manifolds, $f \colon N \to M$ be a smooth function.
  The \newterm{pullback} of $f$ is the unique linear map
  \begin{equation}
    f^* \colon \Omega^\bullet M \to \Omega^\bullet N,\ f^*_p \colon \Omega^p M \to \Omega^p N,
  \end{equation}
  satisfying the following conditions:
  \begin{enumerate}
    \item
      it is a cochain map:
      \begin{equation}
        d \circ f^* = f^* \circ d,\ d_p \circ f^*_p = f^*_{p + 1} \circ d_p;
      \end{equation}
    \item
      it respects wedge products:
      for any $\omega, \eta \in \Omega^\bullet M$,
      \begin{equation}
        f^*(\omega \wedge \eta) = f^* \omega \wedge f^* \eta;
      \end{equation}
    \item
      pullback coincides with composition for functions:
      for any $g \in \Omega^0 M = \mathcal{F} M$,
      \begin{equation}
        f^* g = g \circ f.
      \end{equation}
  \end{enumerate}
\end{definition}
\begin{example}
  Let $M = N = \R^2$,
  \begin{equation}
    f \colon N \to M,\ f(r, \varphi) = (r \cos \varphi, r \sin \varphi),
  \end{equation}
  and take an arbitrary $2$-form
  \begin{equation}
    \omega \in \Omega^2 N,\ \omega = g\, d x \wedge \, d y.
  \end{equation}
  Denote $\tilde{g}(r, \varphi) = g(r \cos \varphi, r \sin \varphi)$.
  Then
  \begin{equation}
    \begin{split}
    (f^* \omega)(r, \varphi)
    & = (f^* g)(r, \varphi)\, f^*(d x) \wedge f^*(d y) \\
    & = g(r \cos \varphi, r \sin \varphi)\, d(r \cos \varphi) \wedge d(r \sin \varphi) \\
    & = \tilde{g}(r, \varphi) (\cos \varphi\, d r - r \sin \varphi\, d \varphi) \wedge (\sin \varphi\, d r + r \cos \varphi\, d \varphi) \\
    & = r \tilde{g}(r, \varphi)\, d r \wedge d \varphi.
    \end{split}
  \end{equation}
\end{example}
\begin{definition}
  Let $M$ be a smooth manifold, $S$ be a submanifold of $\partial M$, $\iota_S \colon S \to M$ be the inclusion map, i.e., $\iota_S(s) = s$ for all $s \in S$.
  The \newterm{trace} operator is defined by
  \begin{equation}
    \tr_S := \iota_S^* \colon \Omega^\bullet M \to \Omega^\bullet S.
  \end{equation}
\end{definition}
\begin{example}
  Let $M = \set{(x, y) \in \R^2}{x^2 + y^2 \leq 1}$ be the unit disk,
  whose boundary is the unit circle
  $\partial M = \set{(x, y) \in \R^2}{x^2 + y^2 = 1}$,
  $\iota_{\partial M} \colon \partial M \to M$ be the inclusion map,
  \begin{equation}
    \omega \in \Omega^1 M,\ \omega(x, y) = -y\, d x + x\, d y.
  \end{equation}
  We will express $\tr_{\partial M} \omega = \iota_{\partial M}^* \omega \in \Omega^1(\partial M)$.
  Consider an open subset $U$ of $\partial M$ that does not contain the point $(1, 0)$.
  A local parametrisation of $U$ is then given by
  \begin{equation}
    f \colon I \to U,\ f(\varphi) = (\cos \varphi, \sin \varphi),
  \end{equation}
  where $I$ is a subinterval of $(0, 2 \pi)$.
  Then $\tr_U \omega$ is expressed over $I$ as
  \begin{equation}
    f^*(\tr_U \omega)
    = -\sin \varphi\, d(\cos \varphi) + \cos \varphi\, d(\sin \varphi)
    = \sin^2 \varphi\, d \varphi + \cos^2 \varphi\, d \varphi
    = d \varphi
    \in \Omega^1 I.
  \end{equation}
\end{example}
\begin{definition}
  Let $M$ be a connected smooth manifold of dimension $D$,
  $\omega, \eta \in \Omega^D M$ be non-vanishing forms (that is, for any $x \in M$, $0 \neq \restrict{\omega}{x} \in \Omega^D_x$), $f \in \mathcal{F} M$ be the unique function satisfying $\omega = f \eta$.
  Its existence and uniqueness follow from $\dim_{\mathcal{F} M}(\Omega^D M) = 1$.
  We say that $\omega$ and $\eta$ are in the same \newterm{orientation class} if $f > 0$, and are in the opposite orientation class if $f < 0$.
  There are no other possibilities since $f$ is continuous, and $\omega$ and $\eta$ are nowhere zero.
  Being in the same orientation class is an equivalence relation on the set of non-vanishing top-dimensional forms.
  There are two possibilities, depending on the topology of $M$:
  \begin{enumerate}
    \item
      there are no non-vanishing $D$-forms which makes $M$ \newterm{non-orientable};
    \item
      there are non-vanishing $D$-forms on $M$, which gives two orientation classes on $M$, and making $M$ \newterm{orientable}.
      A choice of such a class $\OR$ is called an orientation on $M$, and the pair $(M, \OR)$ is an \newterm{oriented manifold}.
  \end{enumerate}
If $M$ is disconnected, each of its connected components can be oriented separately.
\end{definition}
\begin{example}
  Let $D \in \N$.
  The flat manifold $\R^D$ has non-vanishing top-dimensional forms of the form
  \begin{equation}
    \omega = f\, d x^1 \wedge ... \wedge d x^n,
  \end{equation}
  for a smooth function $f \colon \R^D \to \R \setminus \{0\}$.
  The standard orientation of $\R^D$ is characterised by $f > 0$ (with canonical representative $\omega = d x^1 \wedge ... \wedge d x^n$, i.e., $f \equiv 1$), and the opposite orientation is characterised by $f < 0$.
\end{example}
\begin{remark}
  Orientation is the first additional structure imposed on a smooth manifold.
  Indeed, a smooth manifold is a topological space with a smooth structure, from which we can canonically construct vector fields, differential forms, the wedge product and the exterior derivative.
  Orientation, however, is a choice of a sign that is not made canonically.
  It will be required for the definitions of integration of forms and the Hodge star operator (the latter requiring a metric structure as well).

  With potential exceptions not considered in this article, invariant physical quantities should not depend on the choice of an orientation.
  For this reason we introduce the informal notion of \newterm{pseudo-forms}, that is differential forms that change sign when orientation changes.
  More generally, we can talk about \newterm{pseudo-objects}, that is objects that change sign when orientation changes.
  For instance, the amount form that we will define later is a pseudo-form -- integrating it leads to a positive amount, but since integration is also orientation-dependent, it must change sign when orientation is changed.
  Note that orientation-independent operators (like wedge product and exterior derivative) will not change the orientation type of an object, while orientation-dependent ones (like integration or Hodge star) will reverse it.

  We can do the treatment of pseudo-objects formally, but that would require the introduction of orientation-aware analogues of all the operators we use, which will unnecessarily complicate our formalism.
  Note that the distinction between objects and pseudo-objects is usually not done extensively in the mathematical literature. It is usually clear when orientation is used but a careful analysis of differentiating between objects and pseudo-objects is lacking.
  We, however, believe it is worth mentioning, for the following reasons:
  \begin{enumerate}
    \item
      the analogical terms ``pseudo-scalars'' and ``pseudo-vectors'', which may be considered special cases in our formalism,  are notably used by physicists to distinguish between orientation-dependent and orientation-independent quantities, the latter having more explicit physical meaning;
    \item
      just like the notion of \textit{physical dimension} gives a ``reality check'' for invariance under scalar multiplication,
      the distinction between forms and pseudo-forms checks invariance under orientation reversal. Note, that we will systematically analyse the physical dimensions of all objects and operators in this article.
      For instance, just like two non-zero objects of different dimensions cannot be equal, a non-zero form and a non-zero pseudo-form could not be equal as well.
  \end{enumerate}
\end{remark}
\begin{definition}
  Let $M$ be a smooth manifold with boundary,
  $a \in \partial M$,
  $v \in T_a M \setminus T_a({\partial M})$.
  We say that $v$ is \newterm{inward-pointing} if there exists $\delta > 0$ and a smooth curve $\gamma \colon [0, \delta) \to M$, such that $\gamma(0) = a$ and $\gamma'(0) = v$.
  The equality $\gamma'(0) = v$ is defined as follows: for any smooth function $f \colon M \to \R$ ($f \in \mathcal{F} M$),
  \begin{equation}
    v(f) = (f \circ \gamma)'(0),
  \end{equation}
  where the right hand side is the standard derivative of the real-valued function
  $f \circ \gamma \colon [0, \delta) \to \R$.
  We say that $v$ is \newterm{outward-pointing} if $-v$ is inward pointing.
\end{definition}
\begin{example}
  Let $M := \set{(x, y) \in \R^2}{x^2 + y^2 \leq 1}$ be the closed unit disk with boundary the unit circle $S := \partial M = \set{(x, y) \in \R^2}{x^2 + y^2 = 1}$.
  Consider a point $p_0 := (\cos \varphi_0, \sin \varphi_0) \in S$.
  Then the (Euler) vector field
  \begin{equation}
    v := \cos \varphi_0 \restrict{\frac{\partial}{\partial x}}{p_0} + \sin \varphi_0 \restrict{\frac{\partial}{\partial y}}{p_0}
  \end{equation}
  is outward-pointing.
  Indeed, the curve $\gamma(t) = (1 - t)(\cos \varphi_0, \sin \varphi_0)$ witnesses the fact that $-v$ is inward-pointing.
\end{example}
\begin{definition}
  Let $R$ be a commutative ring with unity,
  $D \in \N$,
  $V$ be an $R$-module of dimension $D$.
  The \newterm{interior product}
  \begin{equation}
    i \colon V \to \Hom(\Lambda^p V^*, \Lambda^{p - 1} V^*),\ p = 1, ..., D,
  \end{equation}
  is the unique bilinear map such that:
  \begin{enumerate}
    \item
      for any $v \in V$, $i_v$ satisfies the graded Leibniz rule, i.e.,
      for any $\omega \in \Lambda^p V^*,\ \eta \in \Lambda^\bullet V^*$,
      \begin{equation}
        i_v(\omega \wedge \eta) = (i_v \omega) \wedge \eta + (-1)^p \omega \wedge (i_v \eta);
      \end{equation}
    \item
      $i$ is pairing on covectors: for any $v \in V$ and $\omega \in \Lambda^1 V^* \simeq V^*$,
      \begin{equation}
        i_v \omega = \omega(v) \in R \simeq \Lambda^0 V.
      \end{equation}
  \end{enumerate}
\end{definition}
\begin{definition}
  \label{def:exterior_calculus/relative_orientations}
  Let $M$ be an oriented smooth manifold of dimension $D$ with boundary and orientation $\OR_M$, and $S$ be an oriented $(D - 1)$-dimensional submanifold of $\partial M$ with orientation $\OR_S$.
  Consider an outward-pointing vector field $X$ on $S$.
  Since $\dim_{\mathcal{F} M}(\Omega^{D - 1} S) = 1$, there exists unique smooth function $f \colon S \to \R \setminus \{0\}$ such that
  \begin{equation}
    i_X \OR_M = f \OR_S.
  \end{equation}
  Since $X$ may not be defined outside $S$, the above equation is understood locally at each point $x$ of $S$, i.e.,
  \begin{equation}
    i_{\restrict{X}{x}}(\restrict{\OR_M}{x}) = f(x) \restrict{\OR_S}{x} \in \Omega^{D - 1}_x.
  \end{equation}
  Define the \newterm{relative orientation} between $M$ and $S$ as
  \begin{equation}
    \varepsilon(M, S) := {\rm sign}(f) =
    \begin{cases}
      1, & f > 0 \\
      -1, & f < 0
    \end{cases}.
  \end{equation}
  If $\varepsilon(M, S) = 1$, we say that $S$ has the \newterm{induced (boundary) orientation}.
\end{definition}
\begin{example}
  \label{ex:exterior_calculus/half_space_boundary_orientation}
  Let $M = \R_{\geq 0} \times \R^{D - 1}$ be a half space whose boundary is
  $\partial M = \{0\} \times \R^{D - 1}$.
  Then an outward-pointing vector field is
  $X := - \frac{\partial}{\partial x_1}$.
  The standard orientation on $M$ is given by
  $\OR_M := d x^1 \wedge ... \wedge d x^D$.
  Then the induced orientation on $\partial M$ is given by
  \begin{equation}
    \OR_{\partial M}
    := i_{X} \OR_M
    = - dx^2 \wedge ... \wedge dx^D.
  \end{equation}
  Note that since all manifolds with boundary are represented locally by
  $\R_{\geq 0} \times \R^{D - 1}$,
  this example is the archetypal way for finding induced orientation.
\end{example}
\begin{discussion}
  \label{dsc:exterior_calculus/diamond_property}
  Consider an oriented manifold $M$ with boundary $N$, $D = \dim M$, and let $P$ be an oriented $(D - 2)$-submanifold of $N$.
  Locally $P$ divides $N$ into exactly two parts $N'$ and $N''$.
  (This is the \newterm{diamond property} -- there are exactly two ``cells'' between $M$ and $P$.
  A similar property holds for convex polytopes as discussed in \cite[Theorem 2.7 (iii)]{ziegler1995lectures}.)
  Moreover, if $N'$ and $N''$ have the same orientations as $N$ (the induced boundary orientation),
  then they will have opposite relative orientations with $P$, i.e.,
  $\varepsilon(N', P) = - \varepsilon(N'', P)$.
  To allow for possible non-induced orientations on $N'$ and $N''$, we can write the following equality \cite[Supplementary matrial, Proposition Appendix B.13]{berbatov2022diffusion} which is satisfied in general:
  \begin{equation}
    \label{eq:exterior_calculus/chain_complex_property}
    \varepsilon(M, N')\, \varepsilon(N', P) + \varepsilon(M, N'')\, \varepsilon(N'', P) = 0.
  \end{equation}
\end{discussion}
\begin{definition}
  Let $M$ be an oriented smooth manifold with boundary, $D = \dim M$,
  \begin{equation}
    \Omega^\bullet_c M := \{\omega \in \Omega^\bullet M \mid {\rm supp}(\omega)\ \textrm{is compact}\}
  \end{equation}
  be the space of differential forms with compact support.
  The \newterm{integration of forms on $M$}
  \begin{equation}
    \int_M \colon \Omega^\bullet_c M \to \R,
  \end{equation}
  is a linear map having the following properties:
  \begin{enumerate}
    \item
      additive property:
      if $M$ is disconnected and can be partitioned into $M = M_1 \cup M_2$, then for any $\omega \in \Omega^D_c M$,
      \begin{equation}
        \int_M \omega = \int_{M_1} \tr_{M_1} \omega + \int_{M_2} \tr_{M_2} \omega;
      \end{equation}
    \item
      change of variables formula: if $\varphi \colon N \to M$ is an orientation-preserving diffeomorphism (that is, $\varphi^*$ maps orientation forms to orientation forms), then for any $\omega \in \Omega^D_c M$,
      \begin{equation}
        \int_M \omega = \int_N \varphi^* \omega;
      \end{equation}
    \item
      Stokes-Cartan theorem: for any $\omega \in \Omega^{D - 1}_c M$,
      \begin{equation}
        \label{eq:exterior_calculus/stokes_cartan}
        \int_M d \omega = \int_{\partial M} \tr_{\partial M} \omega,
      \end{equation}
      where $\partial M$ is given the induced boundary orientation;
    \item
      (base case) integration is function evaluation for $0$-forms: for any singleton $0$-dimensional manifold $\{x\}$ with orientation $\varepsilon \in \{-1, 1\}$, and any $f \colon \{x\} \to \R$,
      \begin{equation}
        \int_{\{x\}} f = \varepsilon f(x).
      \end{equation}
  \end{enumerate}
  Integration is a dimensionless operation.
\end{definition}
\begin{corollary}
  Let $M$ be an oriented smooth manifold with boundary, $D = \dim M$.
  An important consequence of the graded Leibniz rule, \Cref{eq:exterior_calculus/leibniz}, and the Stokes-Cartan theorem, \Cref{eq:exterior_calculus/stokes_cartan}, is the \newterm{integration by parts formula}:
  for any $p \in \{0, ..., D - 1\},\ \omega \in \Omega^p_c M,\ \eta \in \Omega^{D - p - 1}_c M$,
  \begin{equation}
    \int_M (d \omega \wedge \eta) = \int_{\partial M} \tr_{\partial M}(\omega \wedge \eta) - (-1)^p \int_M (\omega \wedge d \eta).
  \end{equation}
\end{corollary}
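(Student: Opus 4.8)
The plan is to apply the graded Leibniz rule to the product $\omega \wedge \eta$, integrate over $M$, and invoke the Stokes-Cartan theorem. First I would note that since $\omega \in \Omega^p_c M$ and $\eta \in \Omega^{D - p - 1}_c M$, the wedge product $\omega \wedge \eta$ lies in $\Omega^{D-1}_c M$: the degrees add to $D - 1$, and the support is contained in $\mathrm{supp}(\omega) \cap \mathrm{supp}(\eta)$, hence is compact. Similarly $d\omega \wedge \eta$ and $\omega \wedge d\eta$ are top-degree forms with compact support (the exterior derivative does not enlarge supports), so all three integrals in the statement are well defined.

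Next I would apply the graded Leibniz rule, \Cref{eq:exterior_calculus/leibniz}, with $\omega$ of degree $p$ and $\eta$ of degree $q = D - p - 1$, obtaining
\[
  d(\omega \wedge \eta) = d\omega \wedge \eta + (-1)^p\, \omega \wedge d\eta .
\]
Integrating both sides over $M$ and using linearity of $\int_M$ gives
\[
  \int_M d(\omega \wedge \eta) = \int_M (d\omega \wedge \eta) + (-1)^p \int_M (\omega \wedge d\eta).
\]

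Finally I would apply the Stokes-Cartan theorem, \Cref{eq:exterior_calculus/stokes_cartan}, to the compactly supported $(D-1)$-form $\omega \wedge \eta$, which yields $\int_M d(\omega \wedge \eta) = \int_{\partial M} \tr_{\partial M}(\omega \wedge \eta)$ with $\partial M$ carrying the induced boundary orientation. Substituting this into the previous display and solving for $\int_M (d\omega \wedge \eta)$ produces exactly the claimed identity. There is no genuine obstacle here; the only points worth a moment's care are that compact support is preserved under $\wedge$ and $d$ (immediate from the support inclusion and the locality of $d$) and that the sign is $(-1)^p$, dictated by the degree of $\omega$ in the Leibniz rule rather than by the complementary degree $q$. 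A quick sanity check with $p = 0$, where $\omega = f$ is a function and the formula reads $\int_M df \wedge \eta = \int_{\partial M} \tr_{\partial M}(f \eta) - \int_M f\, d\eta$, confirms the placement of the sign.
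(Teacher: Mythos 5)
Your proposal is correct and follows exactly the route the paper indicates: apply the graded Leibniz rule (\cref{eq:exterior_calculus/leibniz}) to $\omega \wedge \eta$, integrate, and convert $\int_M d(\omega \wedge \eta)$ to a boundary term via the Stokes-Cartan theorem (\cref{eq:exterior_calculus/stokes_cartan}). The added remarks on preservation of compact support and the sign check at $p = 0$ are sound and only supplement what the paper leaves implicit.
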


\subsection{Metric operations}
\phantom{T}
\label{sec:exterior_calculus/metric}
\begin{definition}
  Let $M$ be a smooth manifold.
  A \newterm{metric tensor} on $M$ is a smooth map
  \begin{equation}
    g \colon \mathfrak{X} M \times \mathfrak{X} M \to \R
  \end{equation}
  that is $(\mathcal{F} M)$-linear, symmetric, and positive definite at each point of $M$.
  The resulting pair $(M, g)$ is called a \newterm{Riemannian manifold}.

  In our applications $M$ is a space manifold and so it is useful to assume that $g$ has physical dimensions $\length^2$.
  Indeed, let $x \in M$ and $v \in T_x M$ be a tangent vector, $\restrict{g}{x}$ be the fibre of $g$ at $x$.
  Then the length of $v$ is defined as $\sqrt{\restrict{g}{x}(v, v)}$, and it is natural for it to have physical dimension $[[v]] \cdot \length$.
  Hence, $g_x$, and $g$ as well, is of physical dimension $\length^2$.
\end{definition}
\begin{remark}
  The metric tensor is a symmetric tensor, and in a local coordinate system can be represented as an expression of the form
  \begin{equation}
    g = \sum_{p, q = 1}^D g_{p, q}\, d x^p \otimes d x^q,\ g_{p, q} = g_{q, p}.
  \end{equation}
  Here $g$ is represented as an element of $(\mathfrak{X} M)^* \otimes (\mathfrak{X} M)^*$.
  As with differential forms, we can take pullbacks with respect to smooth maps $f \colon N \to M$, for some smooth manifold $M$, leading to the \newterm{pullback metric} $f^* g$.
  Importantly, when $N$ is a submanifold of $M$ and $\iota_N \colon N \to M$ is the embedding map, the pullback metric $\tr_N g := \iota_N^* g$ is the \newterm{induced metric} on $N$.
\end{remark}
\begin{discussion}
  The metric tensor $g$ on a smooth manifold $M$ can be extended on $1$-forms as the dual of $g$ and, consequently, on $p$-forms for any $p \in \{0, ..., D\}$.
  The resulting map will be denoted by $g^*_p$,
  \begin{equation}
    g^*_p \colon \Omega^p M \times \Omega^p M \to \R,
  \end{equation}
  and it has a physical dimension $\length^{-2 p}$.
  Indeed, on $1$-forms it has dimension $\length^{-2}$ as it is the dual of $g$.
  On $p$-forms, the dimension when acting on $1$-forms, $\length^{-2}$, is raised to the $p^{\rm th}$ power, hence $\length^{-2 p}$.
  In local coordinates, the following expression holds for $g^*_1$:
  \begin{equation}
    g^*_1 = \sum_{p, q = 1}^D g^{p, q} \frac{\partial}{\partial x^p} \otimes \frac{\partial}{\partial x^p},
  \end{equation}
  where $\set{g^{p, q}}{p, q = 1, ..., D}$ is symmetric and the inverse matrix of $\set{g_{p, q}}{p, q = 1, ..., D}$.
\end{discussion}
\begin{definition}
  Let $(M, g)$ be an oriented Riemannian manifold of dimension $D$.
  The \newterm{volume (pseudo-)form}
  \begin{equation}
    \vol \in \Omega^D M
  \end{equation}
  is the unique form constructed as
  \begin{equation}
    \vol = e^1 \wedge ... \wedge e^D,
  \end{equation}
  where $(e^1, ..., e^D)$ is a positively oriented orthonormal basis of $\Omega^1 M$ with respect to $g^*_1$.
  It is not hard to prove that $\vol$ is independent of the choice of a basis; the proof is omitted here.
  Changing the orientation of $M$ switches the sign of $\vol$, i.e., $\vol$ is a pseudo-form.

  The volume form has a physical dimension $\length^D$, since the basis $1$-forms are of dimensions $\length$.
\end{definition}
\begin{definition}
  Let $(M, g)$ be an oriented compact Riemannian manifold of dimension $D$ with a volume form $\vol$, $0 \leq p \leq D$.
  The \newterm{Hodge star operator on $p$-forms}
  \begin{equation}
    \star_p \colon \Omega^p M \to \Omega^{D - p} M
  \end{equation}
  is the unique operator satisfying the following condition: for any
  $\omega \in \Omega^p M$,
  $\eta \in \Omega^{D - p} M$,
  \begin{equation}
    \label{eq:exterior_calculus/hodge_star}
    g^*_{D - p}(\star_p \omega, \eta)\, \vol = \omega \wedge \eta.
  \end{equation}

  The physical dimension of $\star_p$ is $\length^{D - 2 p}$.
  Indeed, the dimensions of $\omega$ and $\eta$ cancel each other, and so
  \begin{equation}
    [[g^*_{D - p}]]\, [[\star_p]]\, [[\vol]] = 1
    \Rightarrow \length^{-2 (D - p)}\, [[\star_p]]\, \length^D = 1
    \Rightarrow [[\star_p]] = \length^{D - 2 p}.
  \end{equation}
\end{definition}
\begin{proposition}
  Let $(M, g)$ be an oriented Riemannian manifold of dimension $D$ with a volume form $\vol$, $0 \leq p \leq D$.
  Then
  \begin{equation}
    \star_{D - p} \circ \star_p = (-1)^{p (D - p)} {\rm id}_{\Omega^p M}.
  \end{equation}
\end{proposition}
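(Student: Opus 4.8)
The plan is to verify the identity pointwise on an orthonormal basis. Since the defining relation \cref{eq:exterior_calculus/hodge_star} is $(\mathcal{F}M)$-linear in $\eta$ and $g^*_q$ is evaluated fibrewise, the operator $\star_p$ acts fibrewise as well; hence it suffices to fix a point $x \in M$ and prove $\star_{D - p}(\star_p \omega) = (-1)^{p(D - p)} \omega$ for all $\omega \in \Omega^p_x$, and by linearity only on a basis. So I would fix a positively oriented $g^*_1$-orthonormal basis $(e^1, \dots, e^D)$ of $\Omega^1_x$, so that $\restrict{\vol}{x} = e^1 \wedge \dots \wedge e^D$ and the induced basis $\{ e^I := e^{i_1} \wedge \dots \wedge e^{i_p} : I = (i_1 < \dots < i_p) \}$ of $\Omega^p_x$ is $g^*_p$-orthonormal. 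For an increasing multi-index $I$ of length $p$, write $I^c$ for the complementary increasing multi-index of length $D - p$, and let ${\rm sgn}(I, I^c)$ be the sign of the permutation carrying $(i_1, \dots, i_p, (I^c)_1, \dots, (I^c)_{D - p})$ to $(1, \dots, D)$, so that $e^I \wedge e^{I^c} = {\rm sgn}(I, I^c)\, \vol$ and $e^I \wedge e^J = 0$ for any other increasing $J$ of length $D - p$.

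Next I would compute $\star_p e^I$ by testing \cref{eq:exterior_calculus/hodge_star} against the basis forms $e^J$ with $\abs{J} = D - p$: the right-hand side $e^I \wedge e^J$ vanishes unless $J = I^c$, while the left-hand side equals $g^*_{D - p}(\star_p e^I, e^J)\, \vol$, and $g^*_{D - p}(\star_p e^I, e^J)$ is exactly the $e^J$-coefficient of $\star_p e^I$ by orthonormality. Comparing coefficients forces $\star_p e^I = {\rm sgn}(I, I^c)\, e^{I^c}$.

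Applying $\star_{D - p}$ a second time and using $(I^c)^c = I$ gives $\star_{D - p}(\star_p e^I) = {\rm sgn}(I, I^c)\, {\rm sgn}(I^c, I)\, e^I$. It remains to evaluate ${\rm sgn}(I, I^c)\, {\rm sgn}(I^c, I)$: the permutation underlying ${\rm sgn}(I^c, I)$ is the one underlying ${\rm sgn}(I, I^c)$ precomposed with the block transposition moving the length-$(D - p)$ block past the length-$p$ block, which is a product of $p(D - p)$ transpositions, so ${\rm sgn}(I^c, I) = (-1)^{p(D - p)} {\rm sgn}(I, I^c)$. Hence ${\rm sgn}(I, I^c)\, {\rm sgn}(I^c, I) = (-1)^{p(D - p)} {\rm sgn}(I, I^c)^2 = (-1)^{p(D - p)}$, so $\star_{D - p}(\star_p e^I) = (-1)^{p(D - p)} e^I$ for every $I$, and by linearity (and fibrewise-ness) this is the asserted identity. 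The only genuinely delicate point is this sign bookkeeping — correctly identifying the block transposition that relates ${\rm sgn}(I, I^c)$ and ${\rm sgn}(I^c, I)$ — while the reduction to a pointwise orthonormal-basis argument is routine once one observes that $\star$ and $g^*$ are fibrewise.
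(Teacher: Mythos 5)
Your proof is correct and follows essentially the same route as the paper's: reduce to a positively oriented orthonormal (co)frame, compute $\star_p$ on the basis $p$-forms $e^I$, and apply $\star_{D-p}$ to the result. The only cosmetic difference is in the sign bookkeeping you flag as delicate: the paper never computes ${\rm sgn}(I, I^c)$ and ${\rm sgn}(I^c, I)$ individually, but writes $\star_p e^I = s\, e^J$ and $\star_{D-p} e^J = t\, e^I$ with unspecified signs and extracts the product $s t = (-1)^{p(D-p)}$ directly from the graded commutativity $e^I \wedge e^J = (-1)^{p(D-p)} e^J \wedge e^I$ together with the defining relation of $\star$ applied twice, which sidesteps the explicit block-transposition argument.
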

\begin{proof}
  Consider an oriented orthonormal basis $(e^1, ..., e^D)$ of $\Omega^1 M$, $\vol := e^1 \wedge ... \wedge e^D$ be the volume form.
  Let $I = (I_1, ..., I_p)$, $1 \leq I_1 < ... < I_p \leq D$ and $e_I := e^{I_1} \wedge ... \wedge e^{I_p}$.
  Let $J$ be the ordered complement of $I$ (with respect to $(1, ..., D)$),
  and $e^J := e^{J_1} \wedge ... \wedge e^{J_{D - p}}$.
  Then $\star_p e^I = s e^J$ for some $s \in \{-1, 1\}$ (since taking the wedge product gives $e^1 \wedge ... \wedge e^D = \vol$ after reordering, taken into account by $s$), and similarly $\star_{D - p} e^J = t e^I$ for some $t \in \{-1, 1\}$.
  Then
  \begin{equation}
    s\, \vol
    = g^*_{D - p}(\star_p e^I, e^J)\, \vol = e^I \wedge e^J
    = (-1)^{p (D - p)} e^J \wedge e^I
    = (-1)^{p (D - p)} g^*_p(\star_{D - p} e^J, e^I)\, \vol
    = (-1)^{p (D - p)} t\, \vol,
  \end{equation}
  from which it follows that $s = (-1)^{p (D - p)} t$ or $s t = (-1)^{p (D - p)}$.
  Hence,
  \begin{equation}
    \star_{D - p} \star_p e^I = s t e^I = (-1)^{p (D - p)} e^I.
  \end{equation}
  Since $\star$ is $(\mathcal{F} M)$-linear, it follows that for any $\omega \in \Omega^p M$,
  $\star_{D - p} \star_p \omega = (-1)^{p (D - p)} \omega$.
\end{proof}
\begin{proposition}
  Let $(M, g)$ be an oriented Riemannian manifold of dimension $D$,
  $0 \leq p \leq D$,
  $\omega, \eta \in \Omega^p M$.
  Then
  \begin{equation}
    g^*_{D - p}(\star_p \omega, \star_p \eta) = g^*_p(\omega, \eta).
  \end{equation}
\end{proposition}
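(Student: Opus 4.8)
The plan is to reduce everything to the defining identity of the Hodge star, \cref{eq:exterior_calculus/hodge_star}, together with the involution property $\star_{D-p}\circ\star_p = (-1)^{p(D-p)}\,\mathrm{id}_{\Omega^p M}$ from the preceding proposition, and then to cancel the volume form at the very end. The cancellation is legitimate because $\dim_{\mathcal{F}M}(\Omega^D M) = 1$ and $\vol$ is nowhere vanishing, so the map $f \mapsto f\,\vol$ from $\mathcal{F}M$ to $\Omega^D M$ is injective; all the identities below are read pointwise (fibrewise) if $M$ is not compact.

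First I would apply \cref{eq:exterior_calculus/hodge_star} to the $p$-form $\omega$ and the $(D-p)$-form $\star_p\eta$, obtaining
\begin{equation}
  g^*_{D-p}(\star_p \omega, \star_p \eta)\,\vol = \omega \wedge \star_p \eta .
\end{equation}
The remaining task is to rewrite the right-hand side as $g^*_p(\omega,\eta)\,\vol$. To do this I would use graded commutativity of the wedge product to move the $(D-p)$-form $\star_p\eta$ past the $p$-form $\omega$, picking up the sign $(-1)^{p(D-p)}$, and then apply \cref{eq:exterior_calculus/hodge_star} with the roles of $p$ and $D-p$ interchanged to the product $\star_p\eta \wedge \omega = g^*_p(\star_{D-p}\star_p\eta,\omega)\,\vol$. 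Invoking the involution property replaces $\star_{D-p}\star_p\eta$ by $(-1)^{p(D-p)}\eta$, so the two signs $(-1)^{p(D-p)}$ cancel and $\omega\wedge\star_p\eta = g^*_p(\eta,\omega)\,\vol$; by symmetry of $g^*_p$ this equals $g^*_p(\omega,\eta)\,\vol$.

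Combining the two displays gives $g^*_{D-p}(\star_p\omega,\star_p\eta)\,\vol = g^*_p(\omega,\eta)\,\vol$, and cancelling $\vol$ yields the claim. I do not expect a genuine obstacle here; the only point requiring care is the sign bookkeeping in the two applications of the Hodge-star identity and in the graded-commutativity step. An alternative and perhaps more transparent route would be to expand $\omega$ and $\eta$ in an oriented $g^*_1$-orthonormal coframe $(e^1,\dots,e^D)$, note that $\star_p$ sends each basis $p$-vector $e_I$ to $s_I e^J$ with $J$ the ordered complement of $I$ and $s_I \in \{-1,1\}$, and verify directly that $g^*_{D-p}(\star_p e_I,\star_p e_K) = \delta_{IK} = g^*_p(e_I,e_K)$, whence the result follows by $(\mathcal{F}M)$-bilinearity; but the coordinate-free argument above is shorter and reuses the two results already established.
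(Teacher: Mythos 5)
Your argument is correct and is essentially identical to the paper's proof: both start from the defining identity $g^*_{D-p}(\star_p\omega,\star_p\eta)\,\vol=\omega\wedge\star_p\eta$, use graded commutativity, apply the Hodge-star identity a second time with $p$ and $D-p$ interchanged, invoke the involution $\star_{D-p}\circ\star_p=(-1)^{p(D-p)}\,\mathrm{id}$ so the signs cancel, and finally cancel $\vol$ as an $(\mathcal{F}M)$-basis of $\Omega^D M$. No discrepancies to report.
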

\begin{proof}
  Let  $\vol$ be the volume form on $\omega$.
  Then
  \begin{equation}
    \begin{split}
    g^*_{D - p}(\star_p \omega, \star_p \eta) \wedge \vol
    & = \omega \wedge \star_p \eta \\
    & = (-1)^{p (D - p)} \star_p \eta \wedge \omega \\
    & = (-1)^{p (D - p)} g^*_p(\star_{D - p} \star_p \eta, \omega) \wedge \vol \\
    & = g^*_p(\eta, \omega) \wedge \vol \\
    & = g^*_p(\omega, \eta) \wedge \vol.
    \end{split}
  \end{equation}
  Since $\vol$ is an $(\mathcal{F} M)$-basis of $\Omega^D M$, we can cancel it, and get the required
  $g^*_{D - p}(\star_p \omega, \star_p \eta) = g^*_p(\omega, \eta)$.
\end{proof}
\begin{definition}
  Let $(M, g)$ be an oriented compact Riemannian manifold of dimension $D$ with a volume form $\vol$.
  The \newterm{inner product} of $p$-forms
  \begin{equation}
    \inner{\cdot}{\cdot}_p \colon \Omega^p M \times \Omega^p M \to \R
  \end{equation}
  is defined by
  \begin{equation}
    \label{eq:exterior_calculus/inner_product}
    \inner{\omega}{\eta}_p = \int_M (g^*_p(\omega, \eta) \wedge \vol).
  \end{equation}
  It is independent of the chosen orientation since switching orientations switches both the signs of integration and the volume form.
  $\inner{\cdot}{\cdot}_p$ has a physical dimension of $\length^{-2p}\, \length^D = \length^{D - 2 p}$.
  The inner product can be restricted to a submanifold $S$ of $M$.
  To avoid confusion, we will write $\inner{\omega}{\eta}_{S, p}$ in that case.
\end{definition}
\begin{remark}
  Let $(M, g)$ be an oriented compact Riemannian manifold of dimension $D$, $0 \leq p \leq D$,
  $\omega \in \Omega^p M$,
  $\eta \in \Omega^{D - p} M$.
  Using the expressions for Hodge star (\Cref{eq:exterior_calculus/hodge_star}) and inner product (\Cref{eq:exterior_calculus/inner_product}) leads to
  \begin{equation}
    \label{eq:exterior_calculus/inner_product_with_hodge_star}
    \inner{\star_p \omega}{\eta}_{D - p}
    = \int_M (g^*_p(\star_p \omega, \eta) \wedge \vol)
    =\int_M (\omega \wedge \eta).
  \end{equation}
  An analogue of this equation will be used as a definition of a discrete Hodge star (\Cref{eq:combinatorial/inner_product_with_hodge_star}) because we will introduce inner product directly without defining a metric tensor.
\end{remark}
\begin{definition}
  Let $M$ be a smooth manifold with boundary, $D = \dim M$, $\tr_{\partial M} \colon \Omega^\bullet M \to \Omega^\bullet(\partial M)$ be the trace operator, $p \in \{0, ..., D\}$.
  Define
  \begin{equation}
    \Omega^p_0 M := \Ker \tr_{\partial M, p} = \set{\omega \in \Omega^p M}{\tr_{\partial M} \omega = 0}.
  \end{equation}
  Since $\tr_{\partial M, D} = 0$, $\Omega^D_0 M = \Omega^D M$.
  Using
  \begin{equation}
    d_p \circ \tr_{\partial M, p}
    = d_p \circ \iota^*_p
    = \iota^*_{p + 1} \circ d_p
    = \tr_{\partial M, p + 1} \circ d_p,
  \end{equation}
  it follows that the image of $d_p$ on $\Omega^p_0 M$ is contained in $\Omega^{p + 1}_0 M$.
  Hence, we can define the operator
  \begin{equation}
    \tilde{d}_p \colon \Omega^p_0 M \to \Omega^{p + 1}_0 M
  \end{equation}
  being the restriction of $d_p$.
\end{definition}
\begin{definition}
  Let $(M, g)$ be an oriented compact Riemannian manifold of dimension $D$,
  $1 \leq p \leq D$.
  The \newterm{codifferential}
  \begin{equation}
    d^\star_p \colon \Omega^p_0 M \to \Omega^{p - 1}_0 M
  \end{equation}
  is the adjoint of $\tilde{d}_{p - 1}$ with respect to the inner products $\restrict{\inner{\cdot}{\cdot}_p}{\Omega^p_0 M}$ and $\restrict{\inner{\cdot}{\cdot}_{p - 1}}{\Omega^{p - 1}_0 M}$,
  that is, for any $\omega \in \Omega^p_0 M,\ \eta \in \Omega^{p - 1}_0 M$,
  \begin{equation}
    \inner{\omega}{\tilde{d}_{p - 1} \eta}_p = \inner{d^\star_p \omega}{\eta}_{p - 1}.
  \end{equation}
  The physical dimension of $d^\star_p$ is $\length^{-2}$.
  Indeed,
  \begin{equation}
    [[\inner{\cdot}{\cdot}_p]] = [[d^\star_p]] \, [[\inner{\cdot}{\cdot}_{p - 1}]]
    \Rightarrow \length^{D - 2 p} = [[d^\star_p]] \, \length^{D - 2 (p - 1)}
    \Rightarrow [[d^\star_p]] = \length^{-2}.
  \end{equation}
\end{definition}
\begin{proposition}
  Let $(M, g)$ be an oriented compact Riemannian manifold of dimension $D$, $p \in \{0, ..., D - 1\}$.
  Then
  \begin{equation}
    \label{eq:exterior_calculus/codifferential_formula}
    d^\star_{D - p} \circ \star_p
    = (-1)^{p + 1} \star_{p + 1} \circ\, \tilde{d}_p.
  \end{equation}
  In particular,
  \begin{equation}
    d^\star_D \circ \star_0 = - \star_1 \circ\, \tilde{d}_0.
  \end{equation}
\end{proposition}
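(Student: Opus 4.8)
The plan is to unwind the definition of the codifferential as an adjoint and then recognise the explicit Hodge-star formula by integrating by parts. So I would fix $\omega$ in the common domain of the two sides (in particular $\omega \in \Omega^p_0 M$, so that $\tilde d_p \omega$ is defined), take an arbitrary test form $\eta \in \Omega^{D - p - 1}_0 M$, and evaluate $\inner{d^\star_{D - p}(\star_p \omega)}{\eta}_{D - p - 1}$ in a way that exposes $\star_{p + 1} \tilde d_p \omega$. Using that $d^\star_{D - p}$ is by definition the adjoint of $\tilde d_{D - p - 1}$,
\begin{equation}
  \inner{d^\star_{D - p}(\star_p \omega)}{\eta}_{D - p - 1} = \inner{\star_p \omega}{\tilde d_{D - p - 1} \eta}_{D - p},
\end{equation}
and then \Cref{eq:exterior_calculus/inner_product_with_hodge_star}, applied to the $p$-form $\omega$ and the $(D - p)$-form $\tilde d_{D - p - 1} \eta = d\eta$, rewrites the right-hand side as $\int_M \omega \wedge d\eta$.

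Next I would integrate by parts. Applying the integration-by-parts corollary to the $p$-form $\omega$ and the $(D - p - 1)$-form $\eta$ gives
\begin{equation}
  \int_M \omega \wedge d \eta = (-1)^{p} \int_{\partial M} \tr_{\partial M}(\omega \wedge \eta) - (-1)^{p} \int_M (d_p \omega) \wedge \eta,
\end{equation}
and the boundary term vanishes because $\tr_{\partial M}(\omega \wedge \eta) = (\tr_{\partial M} \omega) \wedge (\tr_{\partial M} \eta)$ and $\tr_{\partial M} \eta = 0$ by the choice $\eta \in \Omega^{D - p - 1}_0 M$. This leaves $\int_M \omega \wedge d\eta = (-1)^{p + 1} \int_M (d_p \omega) \wedge \eta$, and a second use of \Cref{eq:exterior_calculus/inner_product_with_hodge_star}, now for the $(p + 1)$-form $d_p \omega = \tilde d_p \omega$ and the $(D - p - 1)$-form $\eta$, turns this into $(-1)^{p + 1} \inner{\star_{p + 1} \tilde d_p \omega}{\eta}_{D - p - 1}$.

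Chaining the equalities gives $\inner{d^\star_{D - p}(\star_p \omega) - (-1)^{p + 1} \star_{p + 1} \tilde d_p \omega}{\eta}_{D - p - 1} = 0$ for every $\eta \in \Omega^{D - p - 1}_0 M$. Since $\Omega^{D - p - 1}_0 M$ contains all $(D - p - 1)$-forms supported in the interior of $M$ and $g^*_{D - p - 1}$ is positive definite, the usual fundamental-lemma-of-the-calculus-of-variations argument then forces the two $(D - p - 1)$-forms to agree, which is the asserted identity; the displayed special case is the instance $p = 0$.

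I expect the computation itself to be routine; the points that need care are the sign $(-1)^{p}$ coming out of the integration-by-parts formula (an easy place to slip), and keeping the domains straight — the operator $d^\star_{D - p}$ is defined through its adjoint property while $(-1)^{p + 1} \star_{p + 1} \tilde d_p$ is an explicit formula, so the two must be compared on the subspace where both are meaningful, and it is precisely the restriction of the test forms $\eta$ to $\Omega^\bullet_0 M$ that kills the boundary term and makes the argument go through.
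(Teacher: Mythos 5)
Your proposal is correct and follows essentially the same route as the paper's proof: unwind the adjoint definition of $d^\star_{D-p}$, convert to $\int_M \omega \wedge d\eta$ via \cref{eq:exterior_calculus/inner_product_with_hodge_star}, integrate by parts (the boundary term dying because the forms lie in $\Omega^\bullet_0 M$), and reapply the same identity to recognise $(-1)^{p+1}\inner{\star_{p+1}\tilde d_p\omega}{\eta}_{D-p-1}$. The only difference is that you spell out the final nondegeneracy step and the sign bookkeeping more explicitly than the paper, which simply concludes ``since $\omega$ and $\eta$ are arbitrary.''
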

\begin{proof}
  Let $\omega \in \Omega^p_0 M,\ \eta \in \Omega^{D - p - 1}_0 M$.
  Then
  \begin{equation}
    \label{eq:exterior_calculus/codifferential_formula_proof}
    \begin{split}
      \inner{d^\star_{D - p} \star_p \omega}{\eta}_{D - p - 1}
      & = \inner{\star_p \omega}{\tilde{d}_{D - p - 1} \eta}_{D - p} \\
      & = \int_M (\omega \wedge d_{D - p - 1} \eta) \\
      & = \int_{\partial M} (\tr_{\partial M, p}(\omega) \wedge \tr_{\partial M, D - p - 1}(\eta)) + (-1)^{p + 1} \int_{M} (d_p \omega \wedge \eta) \\
      & = \inner{(-1)^{p + 1} \star_{p + 1} \tilde{d}_p \omega}{\eta}_{D - p - 1}.
    \end{split}
  \end{equation}
  The proposition follows since $\omega$ and $\eta$ are arbitrary.
\end{proof}

\section{Exterior calculus variational formulations}
\label{sec:continuum}

\begin{notation}
  Throughout this section $D$ will denote a positive integer, specifying the space dimension.
  For real applications $D$ is expected to be $3$ but in simpler models $D = 1$ or $D = 2$ may suffice.
  Since we are not going to make any special assumptions about the ambient space, we will work with arbitrary $D$.
  Moreover, many of the operators in this article have signs dependent on the space dimension, and so with general $D$ we could capture these signs in a unified way.

  Our model will be built on a bounded open body $M$, with fixed orientation, and embedded in Euclidean space $\R^D$.
  We assume that we have a scalar extensive quantity $A$, representing the \newterm{amount} of conserved quantity on $M$, with a given physical dimension $\amount$; examples include mass, energy, charge, volume.

  We will also fix an initial time $t_0 [\time]$ and the time interval $I = [t_0, \infty)$.

  Time-dependent quantities will be mostly represented by objects whose domain is
  $C^\infty(I, S_1) \times ... \times C^\infty(I, S_n)$,
  where $S_1, ..., S_n$ are vector spaces representing the spatial part.
  Previously introduced operators (wedge product, exterior derivative, pullback/trace, integration, metric tensor, inner product, Hodge star, codifferential) will be applied pointwise to the spatial parts $S_1, ..., S_n$.
  In other words, if $S$ is some other vector space, and we have an operator
  $H \colon S_1 \times ... \times S_n \to S$
  (e.g., exterior derivative), we overload $H$ to a map
  \begin{equation}
    H \colon \mathcal{C}^\infty(I, S_1) \times ... \times \mathcal{C}^\infty(I, S_n) \to \mathcal{C}^\infty(I, S),\
    H(\omega_1, ..., \omega_n)(t) := H(\omega_1(t), ..., \omega_n(t)) \in S.
  \end{equation}
  On the other hand, partial derivative with respect to time is overloaded to an operator on temporal objects, i.e.,
  \begin{equation}
    \frac{\partial}{\partial t} \colon \mathcal{C}^\infty(I, S) \to \mathcal{C}^\infty(I, S),\
    \frac{\partial \omega}{\partial t}(t)  := \lim_{\tau \to 0} \frac{\omega(t + \tau) - \omega(t)}{\tau} \in S.
  \end{equation}
\end{notation}
\begin{definition}
  The \newterm{amount pseudo-form}
  \begin{equation}
    \label{eq:continuum/amount_definition}
    Q [\amount] \in \mathcal{C}^\infty(I, \Omega^D M)
  \end{equation}
  is characterised as follows: for any $D$-dimensional subregion $V$ of $M$, and any time interval $[t_1, t_2] \subset I$,
  \begin{equation}
    \label{eq:continuum/amount_difference}
    \textrm{``amount difference of $A$ inside $V$ in $[t_1, t_2]$''} = \int_V Q(t_2) - \int_V Q(t_1).
  \end{equation}
\end{definition}
\begin{definition}
  \label{def:continuum/flow_rate}
  The \newterm{flow rate pseudo-form}
  \begin{equation}
    \label{eq:continuum/flow_rate_definition}
    q [\amount \time^{-1}] \in \mathcal{C}^\infty(I, \Omega^{D - 1} M)
  \end{equation}
  is characterised as follows.
  Let $[t_1, t_2] \subset I$ be a time interval, $S$ be a $(D - 1)$-dimensional hypersurface on $M$.
  Consider a subregion $V$ of $M$ partitioned into two parts $V_{+}$ and $V_{-}$ sharing $S$ as a common boundary,
  with relative orientations
  $\varepsilon(V_{+}, S) = 1$ and $\varepsilon(V_{-}, S) = -1$.
  Define the \newterm{net flow} of the amount $A$ through $S$ in $[t_1, t_2]$ to be be the difference between the flow from $V_{+}$ to $V_{-}$ in $[t_1, t_2]$ and the flow from $V_{-}$ to $V_{+}$ in $[t_1, t_2]$.
  Then
  \begin{equation}
    \label{eq:continuum/net_flow_through_surface}
    \textrm{``net flow of $A$ through $S$ in $[t_1, t_2]$''} = \int_{t_1}^{t_2} \int_S q\, d t.
  \end{equation}
\end{definition}
\begin{remark}
  In \Cref{eq:continuum/net_flow_through_surface}
  the perspective is from the point of view of the hypersurface $S$.
  However, if we view it through the eyes of $V_{+}$, the aforementioned term is the \newterm{net outflow} of $A$ from $V_{+}$ to $V_{-}$ through $S$ in $[t_1, t_2]$,
  while from the perspective of $V_{-}$ it is the \newterm{net inflow} of $A$ in $V_{-}$ from $V_{+}$ through $S$ in $[t_1, t_2]$.
\end{remark}
\begin{proposition}
  Let $V$ be a subregion of $M$, $[t_1, t_2] \subset I$.
  Then
  \begin{equation}
    \label{eq:continuum/net_boundary_outflow}
    \textrm{``net outflow of $A$ from $V$ through its boundary in $[t_1, t_2]$''}
    = \int_{t_1}^{t_2} \int_{\partial V} q\, d t.
  \end{equation}
\end{proposition}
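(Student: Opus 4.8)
The plan is to reduce the statement about the net outflow through $\partial V$ to the already-established characterization of $q$ on a single hypersurface, \Cref{eq:continuum/net_flow_through_surface}, by decomposing $\partial V$ into hypersurface pieces. The key conceptual point is that $\partial V$ is itself a $(D-1)$-dimensional hypersurface sitting inside $M$ (away from $\partial M$, or we restrict attention to the interior portion), and from the point of view of $V$ the relative orientation is $\varepsilon(V, \partial V) = 1$ when $\partial V$ carries the induced boundary orientation. So morally, $\partial V$ plays the role of ``$S$'' and $V$ plays the role of ``$V_+$'' in \Cref{def:continuum/flow_rate}, while the region on the other side of $\partial V$ (the complement $M \setminus V$, or the ambient space) plays the role of ``$V_-$''.

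First I would make precise that the net outflow of $A$ from $V$ through $\partial V$ is by definition the sum over an (arbitrarily fine) partition of $\partial V$ into small hypersurface patches $S_k$ of the net outflow through each $S_k$; this is just additivity of ``flow'' over disjoint pieces of the boundary, which is part of what it means to speak of flow at all. Then for each patch $S_k$, I apply \Cref{def:continuum/flow_rate} with $V_+$ a small collar region on the $V$-side of $S_k$ and $V_-$ a small collar on the outside, so that $\varepsilon(V_+, S_k) = 1 = \varepsilon(V, \partial V)\big|_{S_k}$; this gives that the net outflow through $S_k$ in $[t_1,t_2]$ equals $\int_{t_1}^{t_2}\int_{S_k} q \, d t$, with $S_k$ given the orientation induced from $V$ (equivalently, the boundary orientation of $\partial V$ restricted to $S_k$). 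Summing over $k$ and using additivity of integration (the first property in the definition of integration of forms), $\sum_k \int_{S_k} \tr_{S_k} q = \int_{\partial V} \tr_{\partial V} q$, where $\partial V$ carries the induced boundary orientation. Passing the temporal integral through the finite sum and taking the limit over finer partitions yields \Cref{eq:continuum/net_boundary_outflow}.

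The main obstacle is not really computational but definitional: one must check that the orientation bookkeeping is globally consistent, i.e., that the locally-defined collar regions $V_\pm$ can be chosen so that $\varepsilon(V_+, S_k)=1$ uniformly as $S_k$ ranges over the patches, and that the induced orientations on the $S_k$ glue to the induced boundary orientation on all of $\partial V$. This is exactly the content of \Cref{dsc:exterior_calculus/diamond_property} and \Cref{def:exterior_calculus/relative_orientations}: along the ``seams'' where two patches meet, the relative orientations with respect to the shared $(D-2)$-face cancel (see \cref{eq:exterior_calculus/chain_complex_property}), so there is no inconsistency, and the whole of $\partial V$ can be coherently oriented as the induced boundary of $V$. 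Once this is granted, the proof is a short assembly; I would present it as: (i) additivity of net flow over a boundary partition; (ii) apply \Cref{eq:continuum/net_flow_through_surface} patchwise with $V$ as the positive side; (iii) sum and use additivity of $\int$; (iv) conclude with the induced boundary orientation on $\partial V$.
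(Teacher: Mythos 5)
Your proposal is correct and rests on the same key observation as the paper's proof: that the induced boundary orientation gives $\varepsilon(V, \partial V) = 1$ and $\varepsilon(M \setminus \overline{V}, \partial V) = -1$, so that $V$ plays the role of $V_+$, its complement the role of $V_-$, and $\partial V$ the role of $S$ in \Cref{def:continuum/flow_rate}. The paper simply applies that definition globally in a single step, so your decomposition into patches $S_k$, collar regions, and the subsequent orientation-gluing argument is unnecessary scaffolding -- harmless, but it makes a one-line proof considerably longer.
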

\begin{proof}
  By definition, the boundary $\partial V$ has the induced boundary orientation, i.e.,
  $\varepsilon(V, \partial V) = 1,\ \varepsilon(M \setminus \overline{V}, \partial V) = -1$.
  The net outflow of $A$ from $V$ through its boundary in $[t_1, t_2]$, hence, equals to the net flow of $A$ through $\partial V$ in $[t_1, t_2]$,
  which by \Cref{def:continuum/flow_rate} is
  $\int_{t_1}^{t_2} \int_{\partial V} q\, d t$.
\end{proof}
\begin{definition}
  The \newterm{internal production rate pseudo-form}
  \begin{equation}
    \label{eq:continuum/internal_production_rate_definition}
    f [\amount \time^{-1}] \in \mathcal{C}^\infty(I, \Omega^D M)
  \end{equation}
  is characterised as follows: for any $D$-dimensional subregion $V$ of $M$, and any time interval $[t_1, t_2] \subset I$,
  \begin{equation}
    \label{eq:continuum/internal_production}
    \textrm{``internal production of $A$ inside $V$ in $[t_1, t_2]$''} = \int_{t_1}^{t_2} \int_V f\, d t.
  \end{equation}
\end{definition}
\begin{discussion}[Derivation of conservation law for transport phenomena]
  The conservation law for an extensive quantity $A$ (e.g., energy) has the following intuitive formulation:
  for any $D$-dimensional subregion $V$ of $M$, and any time interval $[t_1, t_2] \subset I$,
  \begin{equation}
    \begin{split}
      & \text{``amount difference of $A$ inside $V$ between $t_2$ and $t_1$''} \\
      & \quad = \text{``net production of $A$ inside $V$ in $[t_1, t_2]$''} \\
      & \quad \quad - \text{``net outflow of $A$ through $\partial V$ in $[t_1, t_2]$''}.
    \end{split}
  \end{equation}
  Having defined variables for denoting these quantities, we arrive at the equation
  \begin{equation}
    \label{eq:continuum/conservation_law_integral_form}
    \int_V Q(t_2) - \int_V Q(t_1) = \int_{t_1}^{t_2} \int_V f \, dt - \int_{t_1}^{t_2} \int_{\partial V} q\, d t.
  \end{equation}
  Using the Newton-Leibniz theorem for the first integral and the Stokes-Cartan theorem
  (\Cref{eq:exterior_calculus/stokes_cartan})
  for the third integral, we get:
  \begin{equation}
    \int_{t_1}^{t_2} \left(\int_V \frac{\partial Q}{\partial t}\right)\, d t = \int_{t_1}^{t_2} \left(\int_V f\right)\, d t - \int_{t_1}^{t_2} \left(\int_V\, d_{D - 1} q\right)\, d t.
  \end{equation}
  Since $V$ and $[t_1, t_2]$ are arbitrary, we can remove the integrals over them.
  Thus, we arrive at the differential formulation of the conservation law:
  \begin{equation}
    \label{eq:continuum/conservation_law_differential_form}
    \frac{\partial Q}{\partial t} = f - d_{D - 1} q.
  \end{equation}
\end{discussion}
\begin{discussion}[Neumann boundary conditions]
  In formulations of initial-boundary value problems for transport phenomena, one type of boundary condition is the imposition of flow rate on part of the boundary, i.e., the prescription of flow rate between subregions touching the boundary and the external world.
  To formulate such constraint, let $\Gamma_{\Neumann} \subseteq M$ be the part of the boundary where the flow rate is imposed,
  \begin{equation}
    \label{eq:continuum/prescribed_neumann_condition_definition}
    g_{\Neumann} \colon I \to \Omega^{D - 1}_{\Neumann}
  \end{equation}
  be the \newterm{prescribed flow rate pseudo-form}.
  Then the Neumann boundary condition reads as
  \begin{equation}
    \label{eq:continuum/neumann_boundary_condition}
    \tr_{\Gamma_{\Neumann}, D - 1}(q) = g_{\Neumann}.
  \end{equation}
\end{discussion}
\begin{discussion}[Dirichlet boundary conditions and potential]
  The other common boundary condition, Dirichlet boundary condition, is related to the prescription of an intensive counterpart
  \begin{equation}
    \label{eq:continuum/potential_definition}
    u [\potential] \in \mathcal{C}^\infty(I, \Omega^0 M)
  \end{equation}
  of the extensive quantity $Q$, called \newterm{potential}.
  Precisely, if $\Gamma_{\Dirichlet}$ is the other part of the boundary of $M$, the \newterm{Dirichlet boundary}, so that
  \begin{equation}
    \label{eq:continuum/boundary_decomposition}
    \partial M = \Gamma_{\Dirichlet} \cup \Gamma_{\Neumann},
  \end{equation}
  and
  \begin{equation}
    \label{eq:continuum/prescribed_dirichlet_condition_definition}
    g_{\Dirichlet} \in \mathcal{C}^\infty(I, \Omega^0 \Gamma_{\Dirichlet}),
  \end{equation}
  then the \newterm{Dirichlet boundary condition} is described by the equation
  \begin{equation}
    \label{eq:continuum/dirichlet_boundary_condition}
    \tr_{\Gamma_{\Dirichlet}, 0}(u) = g_{\Dirichlet}.
  \end{equation}
\end{discussion}
\begin{remark}
  Potentials for mass (mass transfer), energy (heat transfer), charge (charge transport), volume (fluid flow through porous media) are concentration, temperature, electric potential, and pressure respectively.
\end{remark}
\begin{discussion}[Relation between potential and amount]
  Let
  \begin{equation}
    \label{eq:continuum/capacity_definition}
    \pi [\amount \potential^{-1} \length^{-D}] \colon \mathcal{C}^\infty(I, \Omega^D M) \to \mathcal{C}^\infty(I, \Omega^D M)
  \end{equation}
  be the \newterm{volumetric capacity} of the material.
  Define the \newterm{dual volumetric capacity}
  \begin{equation}
    \label{eq:continuum/dual_capacity_definition}
    \widetilde{\pi} [\amount \potential^{-1} \length^{-D}] \colon \mathcal{C}^\infty(I, \Omega^0 M) \to \mathcal{C}^\infty(I, \Omega^0 M)
  \end{equation}
  by
  \begin{equation}
    \label{eq:continuum/dual_capacity_formula}
    \widetilde{\pi}
    = \star_0^{-1} \circ \pi \circ \star_0
    = \star_D \circ \pi \circ \star_0.
  \end{equation}
  The potential $u$ and the amount $Q$ are related by the equation
  \begin{equation}
    \label{eq:continuum/potential_to_amount}
    \frac{\partial Q}{\partial t}
    = \pi \star_0 \frac{\partial u}{\partial t}
    = \star_0 \widetilde{\pi} \frac{\partial u}{\partial t}.
  \end{equation}
  For simplicity, assume that the volumetric capacity is time-independent.
  Then,
  \begin{equation}
    Q(t)
    = Q(t_0) + \int_{t_0}^t \frac{\partial Q}{\partial \tau}\, d \tau
    = Q(t_0) + \int_{t_0}^t \star_0
      \widetilde{\pi}\left(\frac{\partial u}{\partial \tau}\, d \tau\right)
    = Q(t_0) - \star_0 (\widetilde{\pi} u)(t_0) + \star_0 (\widetilde{\pi} u)(t)
    = \star_0 (\widetilde{\pi} u)(t) + S_0,
  \end{equation}
  where we have denoted
  \begin{equation}
    \label{eq:continuum/amount_potential_constant}
    S_0 := Q(t_0) - \star_0 (\widetilde{\pi} u)(t_0) \in \Omega^D M.
  \end{equation}
  For future use we also define
  \begin{equation}
    \label{eq:continuum/dual_amount_potential_constant}
    \widetilde{S_0} := \star_D S_0 \in \Omega^0 M
    = \star_D Q(t_0) - (\widetilde{\pi} u)(t_0) .
  \end{equation}
\end{discussion}
\begin{remark}
  In mass transfer the potential is the mass density, so no volumetric capacity is present (it is given by the number $1$).
  In heat transfer, charge transport and fluid flow through porous media the volumetric capacities are volumetric heat capacity, volumetric capacitance, and compressibility respectively.
\end{remark}
\begin{discussion}[Drivers for flow]
  The most common transport mechanisms of physical properties are diffusion and advection, which we will assume to be present in our model.
  \begin{enumerate}
    \item
      Diffusion is the spontaneous transport of $A$ from regions with higher potential to neighbouring regions with lower potential. The driver for diffusive flow is the potential difference.
      The \newterm{diffusive flow rate} is expressed by the pseudo-form
      \begin{equation}
        \label{eq:continuum/diffusive_flow_rate_definition}
        q_{\diffusive} [\amount \time^{-1}] \in \mathcal{C}^\infty(I, \Omega^{D - 1} M).
      \end{equation}
    \item
      Advection is the bulk transport of $A$ caused by the some prescribed volumetric flow rate, which is the driver for advective flow.
      The \newterm{advective flow rate} is expressed by the pseudo-form
      \begin{equation}
        \label{eq:continuum/advective_flow_rate_definition}
        q_{\advective} [\amount \time^{-1}] \in \mathcal{C}^\infty(I, \Omega^{D - 1} M).
      \end{equation}
  \end{enumerate}
  The total flow rate is the sum of diffusive and advective flow rates, i.e.,
  \begin{equation}
    \label{eq:continuum/flow_rate_decomposition}
    q = q_{\diffusive} + q_{\advective}.
  \end{equation}
  In the following paragraphs we will relate $q_{\diffusive}$ and $q_{\advective}$ to the other model's variables.
\end{discussion}
\begin{discussion}[Constitutive law for diffusive flow rate]
  Let
  \begin{equation}
    \label{eq:continuum/conductivity_definition}
    \kappa [\amount \potential^{-1} \length^{2 - D} \time^{-1}] \colon \mathcal{C}^\infty(I, \Omega^{D - 1} M) \to \mathcal{C}^\infty(I, \Omega^{D - 1} M)
  \end{equation}
  be the \newterm{conductivity} of the material, and
  \begin{equation}
    \label{eq:continuum/dual_potential_definition}
    \tilde{u} [\potential \length^{D}] \in \mathcal{C}^\infty(I, \Omega^D M), \quad
    \tilde{u} = \star_0 u
  \end{equation}
  be the \newterm{dual potential}.
  Then $q_{\diffusive}$ is calculated via the \newterm{constitutive law}
  \begin{equation}
    \label{eq:continuum/constitutive_law}
    q_{\diffusive}
    = \kappa d^\star_D \tilde{u}
    = \kappa d^\star_D \star_0 u.
  \end{equation}
\end{discussion}
\begin{discussion}
  Define the \newterm{dual conductivity}
  \begin{equation}
    \label{eq:continuum/dual_conductivity_definition}
   \widetilde{\kappa} [\amount \potential^{-1} \length^{2 - D} \time^{-1}] \colon \mathcal{C}^\infty(I, \Omega^1 M) \to \mathcal{C}^\infty(I, \Omega^1 M)
  \end{equation}
  by
  \begin{equation}
    \label{eq:continuum/dual_conductivity_formula}
   \widetilde{\kappa}
    = \star_1^{-1} \circ \kappa \circ \star_1
    = (-1)^{D - 1} \star_{D - 1} \circ \kappa \circ \star_1.
  \end{equation}
  This means that
  \begin{equation}
    q_\diffusive
    = \kappa\, d^\star_D \star_0 u
    = - \kappa \star_1 d_0 u
    = - \star_1\widetilde{\kappa}\, d_0 u.
  \end{equation}
\end{discussion}
\begin{remark}
  The conductivities for mass transfer, heat transfer, charge transport, and fluid flow through porous media are mass diffusivity, thermal conductivity, electrical conductivity, and hydraulic conductivity respectively.
  The respective constitutive laws are Fick's law of diffusion, Fourier's law, Ohm's law, and Darcy's law.

  Discussed transport phenomena, their respective material properties and constitutive laws, and physical dimensions are summarised in \Cref{tab:transport_phenomena}.
\end{remark}
\begin{table}[!ht]
  \caption{Examples of transport phenomena}
  \label{tab:transport_phenomena}
  \centering
  \begin{tabular}{|c|c|c|c|c|c|}
    \hline
    {\bf Phenomenon}
    & \makecell{{\bf Amount} \\ $[\amount]$}
    & \makecell{{\bf Potential} \\ $[\potential]$}
    & \makecell{{\bf Volumetric capacity} \\ $[\amount \potential^{-1} \length^{-D}]$}
    & \makecell{{\bf Conductivity} \\ $[\amount \potential^{-1} \length^{2 - D} \time^{-1}]$}
    & \makecell{{\bf Constitutive} \\ {\bf law}} \topStrut \\[2pt]
    \hline
    \hline
    Mass transfer
    & \makecell{Mass \\ $[\mass]$}
    & \makecell{Mass density \\ $[\mass \length^{-D}]$}
    & \makecell{1 \\ $[1]$}
    & \makecell{Mass diffusivity \\ $[\length^2 \time^{-1}]$}
    & Fick's law \topStrut \\[2pt]
    \hline
    Heat transfer
    & \makecell{Heat energy \\ $[\mass \length^2 \time^{-2}]$}
    & \makecell{Temperature \\ $[\temperature]$}
    & \makecell{Volumetric heat capacity \\ $[\mass \length^{2 - D} \time^{-2}\temperature^{-1}]$}
    & \makecell{Thermal conductivity \\ $[\mass \length^{4 - D} \time^{-3} \temperature^{-1}]$}
    & Fourier's law \topStrut \\[2pt]
    \hline
    Charge transport
    & \makecell{Charge \\ $[\charge]$}
    & \makecell{Electric potential \\ $[\mass \length^2 \time^{-2} \charge^{-1}]$}
    & \makecell{Volumetric capacitance \\$[\mass^{-1} \length^{-2 - D} \time^2 \charge^2]$}
    & \makecell{Electrical conductivity \\ $[\mass^{-1} \length^{-D} \time \charge^2]$}
    & Ohm's law \topStrut \\[2pt]
    \hline
    \makecell{Fluid flow through \\ porous media}
    & \makecell{Volume \\ $[\length^D]$}
    & \makecell{Pressure \\ $[\mass \length^{2 - D} \time^{-2}]$}
    & \makecell{Compressibility \\ $[\mass^{-1} \length^{D - 2} \time^2]$}
    & \makecell{Hydraulic conductivity \\$[\mass^{-1} \length^D \time]$}
    & Darcy's law \topStrut \\[2pt]
    \hline
  \end{tabular}
\end{table}
\begin{discussion}[Description of advective flow rate]
  Let
  \begin{equation}
    \label{eq:continuum/volumetric_flow_rate_definition}
    v [\length^D \time^{-1}] \in \mathcal{C}^\infty(I, \Omega^{D - 1} M)
  \end{equation}
  be the \newterm{volumetric flow rate pseudo-form}, which is a prescribed quantity, e.g., calculated from a volume transport problem.
  It is characterised by the following property: for any two subregions $V_{+}$ and $V_{-}$ with common boundary $S$ (and respective positive and negative relative orientations with $S$), and any time interval $[t_1, t_2] \subset I$,
  \begin{equation}
    \label{eq:continuum/volumetric_flow_rate_formula}
    \textrm{``net volume outflow from $V_{+}$ to $V_{-}$ through $S$ in $[t_1, t_2]$''} = \int_{t_1}^{t_2} \int_S v\, d t.
  \end{equation}
  The advective flow rate is, roughly speaking, the amount density multiplied by the volumetric flow rate.
  In exterior calculus terms,
  \begin{equation}
    \label{eq:continuum/advective_flow_rate_formula}
    q_{\advective} = (\star_D Q) \wedge v.
  \end{equation}
\end{discussion}
\begin{discussion}[Initial conditions]
  Since we are describing a transient phenomenon, we need initial conditions.
  We assume that the initial potential is prescribed to some initial value
  \begin{equation}
    \label{eq:continuum/initial_potential_definition}
    u_0 [\potential] \in \Omega^0 M,
  \end{equation}
  i.e.,
  \begin{equation}
    \label{eq:continuum/initial_condition_potential}
    u(t_0) = u_0.
  \end{equation}
  Similarly, if advection is present, we will need to know the initial amount
  \begin{equation}
    \label{eq:continuum/initial_amount_definition}
    Q_0 [\potential] \in \Omega^D M,
  \end{equation}
  with the initial condition
  \begin{equation}
    \label{eq:continuum/initial_condition_amount}
    Q_0 = Q(t_0).
  \end{equation}
\end{discussion}
\begin{notation}[Parameters participating in the exterior calculus model for transport phenomena]
  \label{notation:continuum/parameters}
  Before imposing the (strong) model, let us summarise all the parameters that will be used in all subsequent reformulations.
  Until the end of this section let:
  \begin{itemize}
    \item
      $D$ be a positive integer (space dimension);
    \item
      $M$ be a $D$-dimensional compact oriented smooth manifold with boundary;
    \item
      $t_0 [\time] \in \R$ be the initial time;
    \item
      $I := [t_0, \infty)$;
    \item
      $\Gamma_{\Dirichlet}, \Gamma_{\Neumann}$ form a partition of $\partial M$ into Dirichlet and Neumann boundary
      (\Cref{eq:continuum/boundary_decomposition}).
  \end{itemize}
  Input parameters, initial and boundary conditions are given in \Cref{tab:continuum/parameters}.
  The unknowns are given in \Cref{tab:continuum/unknowns}.
\end{notation}
\begin{table}[!ht]
  \caption{Transport phenomena parameters in the exterior calculus formulation}
  \label{tab:continuum/parameters}
  \centering
  \begin{tabular}{|l|l|l|l|l|l|}
    \hline
    Quantity
    & Symbol
    & Domain
    & Dimension
    & Pseudo-object?
    & Reference \topStrut \\[2pt]
    \hline
    \hline
    Initial potential
    & $u_0$
    & $\Omega^0 M$
    & $\potential$
    & No
    & \cref{eq:continuum/initial_potential_definition} \topStrut \\[2pt]
    \hline
    Initial amount
    & $Q_0$
    & $\Omega^D M$
    & $\amount$
    & Yes
    & \cref{eq:continuum/initial_amount_definition} \topStrut \\[2pt]
    \hline
    Internal production rate
    & $f$
    & $\mathcal{C}^\infty(I, \Omega^D M)$
    & $\amount \time^{-1}$
    & Yes
    & \cref{eq:continuum/internal_production_rate_definition} \topStrut \\[2pt]
    \hline
    Prescribed flow rate
    & $g_\Neumann$
    & $\mathcal{C}^\infty(I, \Omega^{D - 1} \Gamma_\Neumann)$
    & $\amount \time^{-1}$
    & Yes
    & \cref{eq:continuum/prescribed_neumann_condition_definition} \topStrut \\[2pt]
    \hline
    Prescribed potential
    & $g_\Dirichlet$
    & $\mathcal{C}^\infty(I, \Omega^0 \Gamma_\Dirichlet)$
    & $\potential$
    & No
    & \cref{eq:continuum/prescribed_dirichlet_condition_definition} \topStrut \\[2pt]
    \hline
    Volumetric flow rate
    & $v$
    & $\mathcal{C}^\infty(I, \Omega^{D - 1} M)$
    & $\length^D \time^{-1}$
    & Yes
    & \cref{eq:continuum/volumetric_flow_rate_definition} \topStrut \\[2pt]
    \hline
    Capacity
    & $\pi$
    & $\mathcal{C}^\infty(I, \Omega^D M) \to \mathcal{C}^\infty(I, \Omega^D M)$
    & $\amount \potential^{-1} \length^{-D}$
    & No
    & \cref{eq:continuum/dual_capacity_definition} \topStrut \\[2pt]
    \hline
    Dual capacity
    & $\widetilde{\pi}$
    & $\mathcal{C}^\infty(I, \Omega^0 M) \to \mathcal{C}^\infty(I, \Omega^0 M)$
    & $\amount \potential^{-1} \length^{-D}$
    & No
    & \cref{eq:continuum/capacity_definition} \topStrut \\[2pt]
    \hline
    Conductivity
    & $\kappa$
    & $\mathcal{C}^\infty(I, \Omega^{D - 1} M) \to \mathcal{C}^\infty(I, \Omega^{D - 1} M)$
    & $\amount \potential^{-1} \length^{2 - D} \time^{-1}$
    & No
    & \cref{eq:continuum/conductivity_definition} \topStrut \\[2pt]
    \hline
    Dual conductivity
    & $\widetilde{\kappa}$
    & $\mathcal{C}^\infty(I, \Omega^1 M) \to \mathcal{C}^\infty(I, \Omega^1 M)$
    & $\amount \potential^{-1} \length^{2 - D} \time^{-1}$
    & No
    & \cref{eq:continuum/dual_conductivity_definition} \topStrut \\[2pt]
    \hline
  \end{tabular}
\end{table}
\begin{table}[!ht]
  \caption{Transport phenomena unknowns in the exterior calculus formulation}
  \label{tab:continuum/unknowns}
  \centering
  \begin{tabular}{|l|l|l|l|l|l|}
    \hline
    Quantity
    & Symbol
    & Domain
    & Dimension
    & Pseudo-object?
    & Reference \topStrut \\[2pt]
    \hline
    \hline
    Amount
    & $Q$
    & $\mathcal{C}^\infty(I, \Omega^D M)$
    & $\amount$
    & Yes
    & \Cref{eq:continuum/amount_definition} \topStrut \\[2pt]
    \hline
    Flow rate
    & $q$
    & $\mathcal{C}^\infty(I, \Omega^{D - 1} M)$
    & $\amount \time^{-1}$
    & Yes
    & \Cref{eq:continuum/flow_rate_definition} \topStrut \\[2pt]
    \hline
    Diffusive flow rate
    & $q_\diffusive$
    & $\mathcal{C}^\infty(I, \Omega^{D - 1} M)$
    & $\amount \time^{-1}$
    & Yes
    & \Cref{eq:continuum/diffusive_flow_rate_definition} \topStrut \\[2pt]
    \hline
    Advective flow rate
    & $q_\advective$
    & $\mathcal{C}^\infty(I, \Omega^{D - 1} M)$
    & $\amount \time^{-1}$
    & Yes
    & \Cref{eq:continuum/advective_flow_rate_definition} \topStrut \\[2pt]
    \hline
    Potential
    & $u$
    & $\mathcal{C}^\infty(I, \Omega^0 M)$
    & $\potential$
    & No
    & \Cref{eq:continuum/potential_definition} \topStrut \\[2pt]
    \hline
  \end{tabular}
\end{table}
\begin{formulation}[Exterior calculus transient model for transport phenomena]
  Combining everything described in this section, under the assumptions of \Cref{notation:continuum/parameters}, we arrive at the model given in \Cref{tab:continuum/equations}.
\end{formulation}
\begin{table}[!ht]
  \caption{Governing equations for transport phenomena in the exterior calculus formulation}
  \label{tab:continuum/equations}
  \centering
  \begin{tabular}{|ll|l|l|l|}
    \hline
    Equation
    &
    & Dimension
    & Law
    & Reference \topStrut \\[2pt]
    \hline
    \hline
    $\frac{\partial Q}{\partial t}$
    & $= f - d q$
    & $\amount \time^{-1}$
    & Conservation law
    & \Cref{eq:continuum/conservation_law_differential_form} \topStrut \\[2pt]
    \hline
    $\frac{\partial Q}{\partial t}$
    & $= \star_0 \widetilde{\pi} \frac{\partial u}{\partial t} = \pi \star_0 \frac{\partial u}{\partial t}$
    & $\amount \time^{-1}$
    & Relation between potential and amount
    & \Cref{eq:continuum/potential_to_amount} \topStrut \\[2pt]
    \hline
    $q$
    & $= q_\diffusive + q_\advective$
    & $\amount \time^{-1}$
    & Flow rate decomposition
    & \Cref{eq:continuum/flow_rate_decomposition} \topStrut \\[2pt]
    \hline
    $q_\diffusive$
    & $= \kappa d^\star_D \star_0 u$
    & $\amount \time^{-1}$
    & Constitutive law
    & \Cref{eq:continuum/constitutive_law} \topStrut \\[2pt]
    \hline
    $q_\advective$
    & $= (\star_D Q) \wedge v$
    & $\amount \time^{-1}$
    & Advective flow rate formula
    & \Cref{eq:continuum/advective_flow_rate_formula} \topStrut \\[2pt]
    \hline
    $\tr_{\Gamma_{\Dirichlet, 0}}(u)$
    & $= g_\Dirichlet$
    & $\potential$
    & Dirichlet boundary condition
    & \Cref{eq:continuum/dirichlet_boundary_condition} \topStrut \\[2pt]
    \hline
    $\tr_{\Gamma_{\Neumann, D - 1}}(q)$
    & $= g_\Neumann$
    & $\amount \time^{-1}$
    & Neumann boundary condition
    & \Cref{eq:continuum/neumann_boundary_condition} \topStrut \\[2pt]
    \hline
    $u(t_0)$
    & $= u_0$
    & $\potential$
    & Initial condition for potential
    & \Cref{eq:continuum/initial_condition_potential} \topStrut \\[2pt]
    \hline
    $Q(t_0)$
    & $= Q_0$
    & $\amount$
    & Initial condition for amount
    & \Cref{eq:continuum/initial_condition_amount} \topStrut \\[2pt]
    \hline
  \end{tabular}
\end{table}

\subsection{Primal weak formulations of transport phenomena}
\label{sec:continuum/primal}

\begin{discussion}
  Consider a \newterm{test function} (in the jargon of finite elements)
  \begin{equation}
    w \in \Ker \tr_{\Gamma_{\Dirichlet}, 0}
    = \set{\omega \in \Omega^0 M}{\tr_{\Gamma_{\Dirichlet}, 0} \omega = 0}
    = \set{\omega \in \mathcal{F} M}{\restrict{\omega}{\Gamma_\Dirichlet} = 0}.
  \end{equation}
  On the one hand, multiplying $w$ with the conservation law (\Cref{eq:continuum/conservation_law_differential_form}) and integrating over $M$ gives:
  \begin{equation}
    \label{eq:continuum/primal_weak/conservation_law}
    \begin{split}
      \int_M w \wedge \frac{\partial Q}{\partial t}
      & = - \int_M (w \wedge d_{D - 1} q) + \int_M (w \wedge f) \\
      & = - \int_{\partial M} \tr_{\partial M, D - 1} (w \wedge q)
        + \int_M (d_0 w \wedge q)
        + \int_M (w \wedge f) \\
      & =
        - \int_{\Gamma_{\Neumann}}
          (\tr_{\Gamma_{\Neumann}, 0} w \wedge \tr_{\Gamma_{\Neumann}, D - 1} q)
        + \int_M (d_0 w \wedge (- \star_1 \widetilde{\kappa} d_0 u + \star_D Q \wedge v))
        + \int_M (w \wedge f) \\
      & = - \int_{\Gamma_{\Neumann}} (\tr_{\Gamma_{\Neumann}, 0} w \wedge g_{\Neumann})
        - \inner{d_0 w}{\widetilde{\kappa} d_0 u}_{M, 1}
        + \int_M (d_0 w \wedge ((\widetilde{S_0} + \widetilde{\pi} u) \wedge v))
        + \int_M (w \wedge f) \\
      & = - \int_{\Gamma_{\Neumann}} (\tr_{\Gamma_{\Neumann}, 0} w \wedge g_{\Neumann})
        - \inner{d_0 w}{\widetilde{\kappa} d_0 u}_{M, 1}
        + \int_M (d_0 w \wedge (\widetilde{S_0} \wedge v))
        + \int_M (d_0 w \wedge (\widetilde{\pi} u \wedge v))
        + \int_M (w \wedge f).
    \end{split}
  \end{equation}
  On the other hand, multiplying $w$
  with the relation of amount and potential (\Cref{eq:continuum/potential_to_amount}) and integrating over $M$ gives:
  \begin{equation}
    \label{eq:continuum/primal_weak/potential_to_amount}
    \int_M w \wedge \frac{\partial Q}{\partial t}
    = \int_M w \wedge
      \left(\star_0 \widetilde{\pi} \frac{\partial u}{\partial t}\right)
    =  \inner{w}{\widetilde{\pi} \frac{\partial u}{\partial t}}_{M, 0}.
  \end{equation}
  Equating the right-hand sides of
  \Cref{eq:continuum/primal_weak/conservation_law}
  and \Cref{eq:continuum/primal_weak/potential_to_amount},
  imposing the Dirichlet boundary condition
  (\Cref{eq:continuum/dirichlet_boundary_condition})
  and the initial conditions (\Cref{eq:continuum/initial_condition_potential}
  and \Cref{eq:continuum/initial_condition_amount}),
  leads to the primal weak formulation presented below.
\end{discussion}
\begin{formulation}[Exterior calculus transient primal weak model for transport phenomena]
  \label{formulation:continuum/primal_weak/transient}
  Under the assumptions of \Cref{notation:continuum/parameters}
  define the following operators:
  \begin{subequations}
    \label{eq:continuum/primal_weak/operators}
    \begin{alignat}{3}
      & A_{\diffusive} \colon \Omega^0 M \times \mathcal{C}^\infty(I, \Omega^0 M) \to \R, \quad
      && A_{\diffusive}(w, u) := \inner{d_0 w}{\widetilde{\kappa} d_0 u}_{M, 1} \qquad
      && [\amount \time^{-1} \potential^{-1}], \\
      & A_{\advective} \colon \Omega^0 M \times \mathcal{C}^\infty(I, \Omega^0 M) \to \R, \quad
      && A_{\advective}(w, u) := \int_M (d_0 w \wedge (\widetilde{\pi} u \wedge v)) \qquad
      && [\amount \time^{-1} \potential^{-1}], \\
      & B \colon \Omega^0 M \times \mathcal{C}^\infty(I, \Omega^0 M) \to \R, \quad
      && B(w, u) := \inner{w}{\widetilde{\pi} u}_{M, 0} \qquad
      && [\amount \potential^{-1}], \\
      & G \colon \Omega^0 M \to \R, \quad
      && G(w)
        := \int_{\Gamma_{\Neumann}} (\tr_{\Gamma_{\Neumann}} w \wedge g_{\Neumann}) \qquad
      && [\amount \time^{-1}], \\
      & F_1 \colon \Omega^0 M \to \R, \quad
      && F_1(w) := \int_M (w \wedge f) \qquad
      && [\amount \time^{-1}], \\
      & F_2 \colon \Omega^0 M \to \R, \quad
      && F_2(w) := \int_M (d_0 w \wedge (\widetilde{S_0} \wedge v)) \qquad
      && [\amount \time^{-1}].
    \end{alignat}
  \end{subequations}
  The unknown variable is the potential
  $u [\potential] \in \mathcal{C}^\infty(I, \Omega^0 M)$.
  We are solving the following system for $u$:
  \begin{subequations}
    \label{eq:continuum/primal_weak/transient_formulation}
    \begin{alignat}{4}
      & \forall w [\potential] \in \Ker \tr_{\Gamma_{\Dirichlet}, 0}, \quad
      && B(w, \frac{\partial u} {\partial t}) + A_{\diffusive}(w, u) - A_{\advective}(w, u)
      && = F_1(w) + F_2(w) - G(w) \qquad
      && [\amount \time^{-1} \potential], \\
      &
      && \tr_{\Gamma_{\Dirichlet}, 0}(u)
      && = g_{\Dirichlet} \qquad
      && [\potential], \\
      &
      && u(t_0)
      && = u_0 \qquad
      && [\potential].
    \end{alignat}
  \end{subequations}
  The flow rate $q [\amount \time^{-1}] \in \mathcal{C}^\infty(I, \Omega^{D - 1} M)$
  is calculated in the post-processing phase by the formula
  \begin{equation}
    \label{eq:continuum/primal_weak/transient_post_processing}
    q(t, x) =
    \begin{cases}
      (-\star_1\widetilde{\kappa} d_0 u + (\widetilde{S_0} + \widetilde{\pi} u) \wedge v)(t, x),
        & x \notin \Gamma_\Neumann \\
      g_{\Neumann}(t, x), & x \in \Gamma_{\Neumann}
    \end{cases},\ t \in I.
  \end{equation}
\end{formulation}
\begin{formulation}[Exterior calculus steady-state primal weak model for transport phenomena]
  \label{formulation:continuum/primal_weak/steady_state}
  In order to get the steady-state version ($\frac{\partial u}{\partial t} = 0)$ of \Cref{eq:continuum/primal_weak/transient_formulation}, we just drop the bilinear form $B$ and assume that all the quantities are time-independent.
  Hence, with time-independent assumptions (\Cref{notation:continuum/parameters}) and operators (\Cref{eq:continuum/primal_weak/operators}), we get the following problem for the unknown potential $u [\potential] \in \Omega^0 M$:
  \begin{subequations}
    \label{eq:continuum/primal_weak/steady_state_formulation}
    \begin{alignat}{4}
      & \forall w [\potential] \in \Ker \tr_{\Gamma_{\Dirichlet}, 0}, \quad
      && A_{\diffusive}(w, u) - A_{\advective}(w, u)
      && = F_1(w) + F_2(w) - G(w) \qquad
      && [\amount \time^{-1} \potential], \\
      &
      && \tr_{\Gamma_{\Dirichlet}, 0}(u)
      && = g_{\Dirichlet} \qquad
      && [\potential], \\
      &
      && u(t_0)
      && = u_0 \qquad
      && [\potential].
    \end{alignat}
  \end{subequations}
  The flow rate $q [\amount \time^{-1}] \in \Omega^{D - 1} M$
  is calculated in the post-processing phase by the formula
  \begin{equation}
    \label{eq:continuum/primal_weak/steady_state_post_processing}
    q(x) =
    \begin{cases}
      (-\star_1\widetilde{\kappa} d_0 u + \widetilde{Q_0} \wedge v)(x),
        & x \notin \Gamma_\Neumann \\
      g_{\Neumann}(x), & x \in \Gamma_{\Neumann}
    \end{cases}.
  \end{equation}
\end{formulation}
\begin{remark}
  If only diffusive flow is present (i.e., $v = 0$), the previous formulations (transient, \Cref{eq:continuum/primal_weak/transient_formulation}, and steady-state, \Cref{eq:continuum/primal_weak/steady_state_formulation}) can be further simplified by dropping the term $A_{\advective}(w, u)$.
  (In fact, this is the case in our implementation and numerical examples.)
  This leads to a symmetric positive-definite system in the steady-state case, and to a system of positive-definite systems for the time integration scheme (trapezoidal, or Crank-Nicolson, method) in the transient case.
\end{remark}

\subsection{Mixed weak formulations of transport phenomena}
\label{sec:continuum/mixed}

\begin{discussion}
  Choose a test function
  \begin{equation}
    r \in \Ker(\tr_{\Gamma_{\Neumann, D - 1}}) = \set{s \in \Omega^{D - 1} \Gamma_{\Neumann}}{\tr_{\Gamma_{\Neumann, D - 1}} s = 0}.
  \end{equation}
  Rewrite the Constitutive law (\Cref{eq:continuum/constitutive_law}) as
  \begin{equation}
    \kappa^{-1} q_{\diffusive}
    = d^\star_D \star_0 u
    = -\star_1 d_0 u.
  \end{equation}
  Then,
  \begin{equation}
    \label{eq:continuum/mixed_weak/constitutive_law_calculations}
    \begin{split}
      \inner{r}{\kappa^{-1} q_{\diffusive}}_{M, D - 1}
      & = \inner{r}{-\star_1 d_0 u}_{M, D - 1} \\
      & = - \int_M (d_0 u \wedge r) \\
      & = - \int_{\partial M}
        (\tr_{\partial M, 0} u \wedge \tr_{\partial M, D - 1} r )
        + \int_M (u \wedge d_{D - 1} r) \\
      & = - \int_{\Gamma_{\Dirichlet}} (g_{\Dirichlet} \wedge \tr_{\Gamma_{\Dirichlet}, D - 1} r )
        + \inner{\star_0 u}{d_{D - 1} r} \\
      & = - \int_{\Gamma_{\Dirichlet}} (g_{\Dirichlet} \wedge \tr_{\Gamma_{\Dirichlet}, D - 1} r)
        + \inner{\widetilde{u}}{d_{D - 1} r},
    \end{split}
  \end{equation}
  and therefore,
  \begin{equation}
    \label{eq:continuum/mixed_weak/constitutive_law}
    \begin{split}
      \inner{r}{\kappa^{-1} q}_{M, D - 1}
      & = \inner{r}{\kappa^{-1} q_{\diffusive}}_{M, D - 1}
        + \inner{r}{\kappa^{-1} q_{\advective}}_{M, D - 1} \\
      & = -\int_{\Gamma_{\Dirichlet}} (\tr_{\Gamma_{\Dirichlet}, D - 1} r \wedge g_{\Dirichlet})
        + \inner{d_{D - 1} r}{\tilde{u}}
        + \inner{r}{\kappa^{-1} (\star_D (S_0 + \pi \widetilde{u}) \wedge v)}_{M, D - 1} \\
      & = -\int_{\Gamma_{\Dirichlet}} (\tr_{\Gamma_{\Dirichlet}, D - 1} r \wedge g_{\Dirichlet})
        + \inner{d_{D - 1} r}{\tilde{u}}
        + \inner{r}{\kappa^{-1} (\widetilde{S_0} \wedge v)}_{M, D - 1}
        + \inner{r}{\kappa^{-1} (\star_D \pi \widetilde{u} \wedge v)}_{M, D - 1}.
    \end{split}
  \end{equation}
  Let $\widetilde{w} [\potential \length^D] \in \Omega^D X$ be a test function.
  Taking the inner product of the conservation law (\Cref{eq:continuum/conservation_law_differential_form}) with $\widetilde{w}$ gives
  \begin{equation}
    \label{eq:continuum/mixed_weak/conservation_law}
    \inner{\pi \frac{\partial \widetilde{u}}{\partial t}}{\widetilde{w}}_{M, D}
    = \inner{f}{\widetilde{w}}_{M, D} - \inner{d_{D - 1} q}{\widetilde{w}}_{M, D}.
  \end{equation}
  Combining the weak version of the constitutive law (\Cref{eq:continuum/mixed_weak/constitutive_law}),
  the weak version of the conservation law (\Cref{eq:continuum/mixed_weak/conservation_law}),
  the Neumann boundary condition (\Cref{eq:continuum/neumann_boundary_condition}),
  and the initial conditions (\Cref{eq:continuum/initial_condition_potential}, \Cref{eq:continuum/initial_condition_amount}),
  leads to the formulation presented below.
\end{discussion}
\begin{formulation}[Exterior calculus transient mixed weak model for transport phenomena]
  \label{formulation:continuum/mixed_weak/transient}
  Under the assumptions of \Cref{notation:continuum/parameters}
  define the following operators
  \begin{subequations}
    \label{eq:continuum/mixed_weak/operators}
    \begin{alignat}{3}
      & A \colon \Omega^{D - 1} M \times \mathcal{C}^\infty(I, \Omega^{D - 1} M) \to \R, \,
      && A(r, s)
        := \inner{r}{\kappa^{-1} s}_{M, D - 1} \,
      && [\amount^{-1} \time \potential], \\
      & B_{\diffusive} \colon \Omega^D M \times \mathcal{C}^\infty(I, \Omega^{D - 1} M) \to \R, \,
      && B_{\diffusive}(\widetilde{w}, r)
        := \inner{d_{D - 1} r}{\widetilde{w}}_{M, D} \,
      && [\length^{-D}], \\
      & B_{\advective} \colon \Omega^D M \times \mathcal{C}^\infty(I, \Omega^{D - 1} M) \to \R, \,
      && B_{\advective}(\widetilde{w}, r)
        := \inner{r}{\kappa^{-1} (\star_D \pi \widetilde{w} \wedge v)}_{M, D - 1} \,
      && [\length^{-D}], \\
      & C \colon \Omega^D M \times \mathcal{C}^\infty(I, \Omega^D M) \to \R, \,
      && C(\widetilde{w}, \widetilde{u}) := \inner{\pi \widetilde{u}}{\widetilde{w}}_{M, D} \,
      && [\amount \length^{-2 D} \potential^{-1}], \\
      & G_1 \colon \Omega^{D - 1} M \to \R, \,
      && G_1(r)
        := \int_{\Gamma_{\Dirichlet}} (\tr_{\Gamma_{\Dirichlet}, D - 1} r \wedge g_{\Dirichlet})
        \,
      && [\potential], \\
      & G_2 \colon \Omega^{D - 1} M \to \R, \,
      && G_2(r) := \inner{r}{\kappa^{-1} (\widetilde{S_0} \wedge v)}_{M, D - 1} \,
      && [\potential], \\
      & F \colon \Omega^D M \to \R, \,
      && F(\widetilde{w}) := \inner{f}{\widetilde{w}}_{M, D} \,
      && [\amount \time^{-1} \length^{-D}], \\
      & \mathtt{flow\_rate} \colon \Omega^0 M \to \Omega^{D - 2} M, \,
      && \mathtt{flow\_rate}(w) := \kappa\, d^\star_D \star_0 w + (\widetilde{S_0} + \tilde{\pi} w) \wedge v \,
      && [\amount \time^{-1} \potential^{-1}].
    \end{alignat}
  \end{subequations}
  The unknown variables are:
  \begin{itemize}
    \item
      $q [\amount \time^{-1}] \in \mathcal{C}^\infty(I, \Omega^{D - 1} M)$ (flow rate);
    \item
      $\widetilde{u} [\potential \length^D] \in \mathcal{C}^\infty(I, \Omega^D M)$ (dual potential).
  \end{itemize}
  We are solving the following problem for $q$ and $\widetilde{u}$:
  \begin{subequations}
    \label{eq:continuum/mixed_weak/transient_formulation}
    \begin{alignat}{4}
      & \forall r [\amount \time^{-1}] \in \Ker \tr_{\Gamma_{\Neumann}, D - 1}, \;
      && A(r, q) - B_{\diffusive}^T(r, \widetilde{u}) - B_{\advective}^T(r, \widetilde{u})
      && = G_2(r) - G_1(r) \;
      && [\amount \time^{-1} \potential], \\
      & \forall \widetilde{w} [\potential \length^D] \in \Omega^D M, \;
      && - B_{\diffusive}(\widetilde{w}, q) - C(\widetilde{w}, \frac{\partial \widetilde{u}}{\partial t})
      && = - F(\widetilde{w}) \;
      && [\amount \time^{-1} \potential], \\
      &
      && \tr_{\Gamma_{\Neumann}, D - 1} q
      && = g_{\Neumann} \;
      && [\amount \time^{-1}], \\
      &
      && \widetilde{u}(t_0)
      && = \star_0 u_0 \;
      && [\potential \length^D], \\
      &
      && q(t_0)
      && = \mathtt{flow\_rate}(u_0) \;
      && [\amount \time^{-1}].
    \end{alignat}
  \end{subequations}
  The potential $u [\potential] \in \mathcal{C}^\infty(I, \Omega^0 M)$ is calculated in the post-processing phase by the formula
  \begin{equation}
    \label{eq:continuum/mixed_weak/transient_post_processing}
    u(t, x) :=
    \begin{cases}
      u_0(x), & t = t_0 \\
      (\star_D \widetilde{u})(t, x), & t > t_0\ \text{and}\ x \notin \Gamma_{\Dirichlet} \\
      g_{\Dirichlet}(t, x), & t_0 > 0\ \text{and}\ x \in \Gamma_{\Dirichlet}
    \end{cases}.
  \end{equation}
\end{formulation}
\begin{formulation}[Exterior calculus steady-state mixed weak model for transport phenomena]
  \label{formulation:continuum/mixed_weak/steady_state}
  The steady-state version of \Cref{eq:continuum/mixed_weak/transient_formulation} is
  \begin{subequations}
    \label{eq:continuum/mixed_weak/steady_state_formulation}
    \begin{alignat}{4}
      & \forall r [\amount \time^{-1}] \in \Ker \tr_{\Gamma_{\Neumann}, D - 1}, \;
      && A(r, q) - B_{\diffusive}^T(r, \widetilde{u}) - B_{\advective}^T(r, \widetilde{u})
      && = G_2(r) - G_1(r) \;
      && [\amount \time^{-1} \potential], \\
      & \forall \widetilde{w} [\potential \length^D] \in \Omega^D M, \;
      && - B_{\diffusive}(\widetilde{w}, q)
      && = - F(\widetilde{w}) \;
      && [\amount \time^{-1} \potential], \\
      &
      && \tr_{\Gamma_{\Neumann}, D - 1} q
      && = g_{\Neumann} \;
      && [\amount \time^{-1}].
    \end{alignat}
  \end{subequations}
  The potential $u [\potential] \in \Omega^0 M$ is calculated in the post-processing phase by the formula
  \begin{equation}
    \label{eq:continuum/mixed_weak/steady_state_post_processing}
    u(x) :=
    \begin{cases}
      (\star_D \widetilde{u})(x), & x \notin \Gamma_{\Dirichlet} \\
      g_{\Dirichlet}(x), & x \in \Gamma_{\Dirichlet}
    \end{cases}.
  \end{equation}
\end{formulation}

\subsection{Domains for physical quantities}
\phantom{T}
\begin{discussion}
  We end this section with a short commentary on the domains used for the physical quantities appearing in the exterior calculus models.
  To simplify discussion we always assumed that all variables are infinitely smooth.
  This assumption is too strong condition and has to be relaxed in order to guarantee the existence of solutions.
  For weak formulations the appropriate domains for spaces of $p$-forms are the Sobolev spaces of weakly differentiable $p$-forms \cite{arnold2018finite}
  \begin{equation}
    H \Lambda^p M := \set{\omega \in L^2 \Lambda^p M}{d \omega \in L^2 \Lambda^{p + 1} M },
  \end{equation}
  where $L^2 \Lambda^q M$ denotes the space of $q$-forms whose components (in any coordinate system) are square-integrable.
  Domains for boundary conditions also need to be adjusted, and we refer to the cited book for details.

  Also, the requirement for space domain to be a smooth manifold boundary can be weakened so that the boundary is Lipschitz continuous, thus including general polytopes as valid domains.
\end{discussion}

\section{Combinatorial mesh calculus (CMC)}
\label{sec:combinatorial}
\begin{discussion}
  The smooth exterior calculus introduced in \Cref{sec:exterior_calculus} generalises the vector calculus and can replace it in modelling continuous media as presented in \Cref{sec:continuum}.
  As discussed in \Cref{sec:introduction/motivation}, media with internal structures are appropriately described by algebraic topological structures called cell complexes.
  The basics of these structures are given in \Cref{sec:combinatorial/meshes} and a discrete analogue of the smooth exterior calculus is presented in the rest of this section.
\end{discussion}

\subsection{Cell complexes (meshes)}
\label{sec:combinatorial/meshes}
\begin{definition}
  A \newterm{combinatorial cell complex} (or \newterm{combinatorial mesh}, or simply \newterm{mesh}) $\mathcal{M}$ is a collection of elements, called \newterm{cells}, that is a realisation of a subdivision of a topological space.
  More formally, it is a partially ordered set, with partial order $\preceq$, such that there exist a topological space $M$ and a function
  $\varphi \colon \mathcal{M} \to \mathcal{P}(M)$
  ($\mathcal{P}(M)$ is the power set of $M$),
  called \newterm{embedding} of $\mathcal{M}$ into $M$, satisfying the following conditions:
  \begin{enumerate}
    \item
      for any $a \in \mathcal{M}$, $\varphi(a) \subseteq M$ is homeomorphic to an open ball;
    \item
      for any $a, b \in \mathcal{M}$, if $a \neq b$, then $\varphi(a) \cap \varphi(b) = \emptyset$;
    \item
      for any $a \in \mathcal{M}$,
      \begin{equation}
        \overline{\varphi(a)} = \bigcup_{b \preceq a} \varphi(b),
      \end{equation}
      where $\overline{X}$ denotes the closure of $X$.
      Equivalently, if $\partial X = \overline{X} \setminus {\rm int(X)}$ is the topological boundary of $X$, then
      \begin{equation}
        \partial(\varphi(a)) = \bigcup_{b \prec a} \varphi(b),
      \end{equation}
  \end{enumerate}
  We say that $(M, \varphi)$ is a \newterm{realisation} of $\mathcal{M}$ if
  \begin{equation}
    M = \bigcup_{a \in \mathcal{M}} \varphi(a).
  \end{equation}
\end{definition}
\begin{remark}
  Let $(\mathcal{M}, \preceq)$ be a mesh.
  It has been constructed by the requirement of embedding but intrinsically it is a partial order, which we call the \newterm{topology of $\mathcal{M}$}.
  (A similar definition for CW complexes, requiring only the partial order, is given in \cite{bjorner1984posets}.)
  Henceforth, we say that a construction based on the partial order and the relative orientations defined shortly is a \newterm{topological concept on a mesh}.
  Such concepts will form the backbone of the discrete theory.
  Considering an intrinsic metric, such as positive numbers associated with the cells of the mesh, will introduce metric concepts, but the theory will still be intrinsic, e.g., independent of any embedding of the mesh in a manifold.
\end{remark}
\begin{notation}
  Let $(\mathcal{M}, \preceq)$ be a mesh.
  The elements of $\mathcal{M}$ are called \newterm{cells}.
  If $a, b \in \mathcal{M}$ and $a \preceq b$, we say that $a$ is a \newterm{subface} of $b$.
  The relations $\prec$, $\succeq$ and $\succ$ are respectively the strict, reverse, and strict reverse versions of $\preceq$.
  If $a \prec b$ and there does not exist $c \in \mathcal{M}$ such that $a \prec c \prec b$, we say that $a$ is a \newterm{hyperface} of $b$ and write it as $a \precdot b$ or $b \succdot a$.
  (The order theory terminology is ``$b$ covers a''.)
  We say that $a$ is a \newterm{node} if it is a minimal element, i.e., there does not exist $c \in \mathcal{M}$ such that $c \prec a$.
  For $a \in \mathcal{M}$ the \newterm{face lattice} of $a$ is the set of subfaces of $a$.
\end{notation}
\begin{proposition}
  Let $\mathcal{M}$ be a mesh, $M$ be a manifold, $\varphi \colon \mathcal{M} \to \mathcal{P}(M)$ be an embedding,
  $a \in \mathcal{M}$, and consider a maximal chain for $a$, i.e., a sequence $a_0 \precdot a_1 \precdot ... \precdot a_p = a$, where $a_0$ is a node.
  Then $\dim \varphi(a) = p$.
\end{proposition}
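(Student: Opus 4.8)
The plan is to induct on the chain length $p$, the engine being a single structural fact: passing to a hyperface drops the topological dimension of a cell by exactly one. First I would record the harmless reduction that if $a_0 \precdot a_1 \precdot \cdots \precdot a_p = a$ is a maximal chain for $a$, then every truncation $a_0 \precdot \cdots \precdot a_i$ is itself a maximal chain for $a_i$ (it starts at a node $a_0$ and each step is a covering relation). Hence it suffices to establish the single-step claim ``$b \precdot a \implies \dim \varphi(a) = \dim \varphi(b) + 1$'' together with the base case ``$\dim \varphi(a_0) = 0$ for a node $a_0$''; the proposition then follows at once since it forces $\dim \varphi(a_i) = i$ for all $i$, in particular $\dim \varphi(a) = p$.

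For the single-step claim I would set $d := \dim \varphi(a)$ and use the standing restriction to cells with the connectivity of simple polytopes: it makes $\overline{\varphi(a)} = \bigcup_{c \preceq a} \varphi(c)$ homeomorphic to a closed $d$-ball, with $\varphi(a)$ corresponding to its open interior and the topological boundary $\partial \varphi(a) = \bigcup_{c \prec a} \varphi(c)$ corresponding to the boundary $(d-1)$-sphere. Under this identification the hyperfaces $b \precdot a$ are exactly the facets (maximal cells) of the cell complex realising $S^{d-1}$; since the boundary complex of a polytope is pure of dimension $d-1$, each such facet is $(d-1)$-dimensional, so $\dim \varphi(b) = d-1$, which is the claim. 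The base case drops out of the same picture: for a node $a_0$ there are no proper subfaces, so condition~(3) of the embedding gives $\partial \varphi(a_0) = \emptyset$, i.e.\ the boundary sphere of $\overline{\varphi(a_0)}$ is empty, which is possible only when $d - 1 < 0$; thus $\dim \varphi(a_0) = 0$ and $\overline{\varphi(a_0)} = \varphi(a_0)$ is a single point.

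The main obstacle is precisely the topological input invoked above: that $\overline{\varphi(a)}$ is a closed $d$-ball whose boundary is a pure $(d-1)$-dimensional subcomplex tiled by the hyperfaces of $a$. I would prove this by combining the diamond/polytope-boundary structure (as in \cref{dsc:exterior_calculus/diamond_property} and \cite{ziegler1995lectures}) with invariance of domain; alternatively, a more local route suffices for the step lemma: pick $x \in \varphi(b)$, note that $\varphi(a)$ is open in $\overline{\varphi(a)}$ and homeomorphic to $\R^d$ so $\overline{\varphi(a)}$ is $d$-dimensional, and then show that near $x$ the space $\overline{\varphi(a)}$ is a manifold with boundary whose boundary part has dimension $d-1$, whence $\dim \varphi(b) = d-1$. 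Everything else is routine finite-poset bookkeeping, so I expect the write-up to be short once the ball/sphere regularity of $\overline{\varphi(a)}$ is in hand.
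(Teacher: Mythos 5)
The paper states this proposition without proof, so there is no in-paper argument to compare against; what follows is an assessment of your proposal on its own terms. Your reduction is the natural one and the bookkeeping is fine: truncations of a maximal chain are maximal chains, so everything rests on the base case ($\dim \varphi(a_0) = 0$ for a node) and the one-step lemma ($b \precdot a \Rightarrow \dim \varphi(b) = \dim \varphi(a) - 1$). The part of your step argument that genuinely works is the combination of (i) the observation that $\varphi(b)$ is \emph{open} in $\partial(\varphi(a))$ when $b \precdot a$ --- its complement $\bigcup_{c \prec a,\, c \neq b} \varphi(c)$ coincides with the finite union of closed sets $\bigcup_{c \prec a,\, c \neq b} \overline{\varphi(c)}$, because the covering relation prevents $b$ from lying below any such $c$ --- with (ii) invariance of domain. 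That mechanism is preferable to quoting purity of polytope boundary complexes, which presupposes polytopality of the face lattice rather than deriving anything from the paper's definitions.

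The genuine gap is exactly the item you flag as the ``main obstacle'': the claim that $\overline{\varphi(a)}$ is a closed $d$-ball whose frontier is a $(d-1)$-sphere (or even just a $(d-1)$-manifold near points of $\varphi(b)$). This is not provable from the paper's definition of an embedding, which only requires the \emph{open} cells to be open balls plus the frontier condition, and the statement is in fact false at that level of generality. Take $M = S^2$ and $\mathcal{M} = \{b, a\}$ with $b \prec a$, $\varphi(b)$ a single point $x$, and $\varphi(a) = S^2 \setminus \{x\}$: all three conditions of the definition hold, $b \precdot a$ is a maximal chain with $p = 1$, yet $\dim \varphi(a) = 2$. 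A one-cell variant ($M = \R^d$, $\varphi(a_0) = \R^d$, which is closed with empty frontier but has dimension $d$) likewise defeats your base-case argument. So the ball/sphere regularity cannot be ``proved by combining the diamond property with invariance of domain''; it is an independent regularity hypothesis (that the complex is a regular CW complex with polytopal --- here simple-polytopal --- closed cells), which the paper assumes only informally. Once that hypothesis is made explicit, your induction closes; without it, the proposition itself needs strengthened hypotheses rather than a cleverer proof.
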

\begin{remark}
  The previous proposition has a direct consequence that any maximal chain for a cell $a$ will have the same number of elements,
  equal to the dimension of an embedding of $a$ plus $1$.
  This leads to the following definition of \newterm{dimension a cell}, $\dim a = p$: $p$ can defined extrinsically as the dimension of its embedding, or intrinsically as $p = l - 1$, where $l$ is the number of elements a maximal chain for $a$.

  Note that the order theory term ``chain'' will no longer be used and from now on ``chain'' will refer to an element of a chain complex.
\end{remark}
\begin{definition}
  Let $\mathcal{M}$ be a mesh.
  For any $p \in \N$, we define the set of \newterm{$p$-cells}
  \begin{equation}
    \mathcal{M}_p := \set{a \in \mathcal{M}}{\dim a = p}.
  \end{equation}
  The \newterm{dimension of $\mathcal{M}$} is defined by
  \begin{equation}
    \dim \mathcal{M} = \max(\set{p \in \N}{\mathcal{M}_p \neq \emptyset}).
  \end{equation}
  The dimension of $\mathcal{M}$ also equals the maximal dimension of the cells in an embedding.
  The following decomposition holds:
  \begin{equation}
    \mathcal{M} = \bigcup_{p = 0}^{\dim \mathcal{M}} \mathcal{M}_p.
  \end{equation}
\end{definition}
\begin{proposition}
  Let $\mathcal{M}$ be a mesh of dimension $D$.
  Then there exists a collection of functions
  \begin{equation}
    \set{\varepsilon_p \colon \mathcal{M}_p \times \mathcal{M}_{p - 1} \to \{-1, 0, 1\}}{p = 1, ..., D},
  \end{equation}
  called \newterm{relative orientations}, satisfying the following conditions:
  \begin{enumerate}
    \item
      nonzero values correspond to cell-hyperface pairs:
      for any $p \in \{1, ..., D\}$,
      $a \in \mathcal{M}_p$,
      $b \in \mathcal{M}_{p - 1}$,
      \begin{equation}
        \varepsilon_p(a, b) \neq 0 \Leftrightarrow b \precdot a;
      \end{equation}
    \item
      chain complex property:
      for any $p \in \{2, ..., D\}$,
      $a \in \mathcal{M}_p$,
      $c \in \mathcal{M}_{p - 2}$ with $c \prec a$,
      if $b, b' \in \mathcal{M}_{p - 1}$ are the only two cells between $c$ and $a$,
      \begin{equation}
        \varepsilon_p(a, b)\, \varepsilon_{p - 1}(b, c) + \varepsilon_p(a, b')\, \varepsilon_{p - 1}(b', c) = 0.
      \end{equation}
    \item
      (optional) all nodes are positively oriented:
      for any edge $a \in \mathcal{M}_1$ with nodes (hyperfaces) $b, c \precdot a$,
      \begin{equation}
        \varepsilon_1(a, b) = - \varepsilon_1(a, c);
      \end{equation}
  \end{enumerate}
\end{proposition}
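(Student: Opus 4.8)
The plan is to reduce the existence of the $\varepsilon_p$ to the relative orientation of smooth oriented manifolds from \Cref{def:exterior_calculus/relative_orientations}, applied to the closures of the cells in a realisation of $\mathcal{M}$. Fix a realisation $(M, \varphi)$. The standing assumption that the cells have the connectivity of simple polytopes ensures that for every $a \in \mathcal{M}$ the closure $\overline{\varphi(a)}$ is homeomorphic to a closed $(\dim a)$-ball (equivalently, to a convex polytope whose corners may be smoothed); I would therefore fix a smooth structure on it and, since a ball is orientable, a choice of orientation $\OR_a$. For a $0$-cell (node) I choose the $+1$ point-orientation --- this is the choice that will produce the optional condition~(iii). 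Since the relative-orientation sign is a homeomorphism invariant, the particular smoothing is immaterial.

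With these choices, for $p \in \{1, \dots, D\}$, $a \in \mathcal{M}_p$ and $b \in \mathcal{M}_{p-1}$ I would set
\[
  \varepsilon_p(a, b) :=
  \begin{cases}
    \varepsilon\bigl(\overline{\varphi(a)},\, \varphi(b)\bigr), & b \precdot a, \\
    0, & \text{otherwise},
  \end{cases}
\]
where $\varepsilon(\overline{\varphi(a)}, \varphi(b))$ is computed as in \Cref{def:exterior_calculus/relative_orientations} from $\OR_a$, the orientation $\OR_b$ on the boundary face $\varphi(b) \subseteq \partial\overline{\varphi(a)}$, and any outward-pointing vector field along $\varphi(b)$. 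Condition~(iii) of the definition of a mesh gives $\partial(\varphi(a)) = \bigcup_{b \prec a} \varphi(b)$, so the hyperfaces $b \precdot a$ are exactly the open $(p-1)$-cells contained in the topological $(p-1)$-sphere $\partial\overline{\varphi(a)}$, each a smooth $(p-1)$-submanifold of it; condition~(i) of the proposition is then immediate.

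For condition~(ii) I would fix $p \in \{2, \dots, D\}$, $a \in \mathcal{M}_p$, and $c \in \mathcal{M}_{p-2}$ with $c \prec a$. The interval $\{d \in \mathcal{M} \mid c \preceq d \preceq a\}$ has length two in the face lattice of $a$; by the diamond property of polytope face lattices (\cite[Theorem~2.7~(iii)]{ziegler1995lectures}, recalled in \Cref{dsc:exterior_calculus/diamond_property}) it contains exactly two cells $b, b' \in \mathcal{M}_{p-1}$ strictly between $c$ and $a$, and inside $\partial\overline{\varphi(a)}$ the $(p-2)$-cell $\varphi(c)$ is locally flanked by precisely $\varphi(b)$ and $\varphi(b')$. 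Applying \cref{eq:exterior_calculus/chain_complex_property} with $M := \overline{\varphi(a)}$, $N := \partial\overline{\varphi(a)}$, $P := \overline{\varphi(c)}$, $N' := \overline{\varphi(b)}$, $N'' := \overline{\varphi(b')}$ --- the faces $N', N''$ carrying the (not necessarily induced) orientations $\OR_b, \OR_{b'}$, which is the generality asserted there --- and unwinding $\varepsilon(M, N') = \varepsilon_p(a, b)$ and $\varepsilon(N', P) = \varepsilon_{p-1}(b, c)$ (noting $c \precdot b$, since $\dim b - \dim c = 1$), and likewise for $b'$, turns \cref{eq:exterior_calculus/chain_complex_property} into exactly $\varepsilon_p(a,b)\,\varepsilon_{p-1}(b,c) + \varepsilon_p(a,b')\,\varepsilon_{p-1}(b',c) = 0$. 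For condition~(iii), an edge $a \in \mathcal{M}_1$ has $\overline{\varphi(a)}$ a closed interval with endpoints $\varphi(b), \varphi(c)$ both oriented $+1$; the $D = 1$ instance of \Cref{ex:exterior_calculus/half_space_boundary_orientation} shows that an outward-pointing vector gives $+1$ at one endpoint and $-1$ at the other, so $\varepsilon_1(a, b) = -\varepsilon_1(a, c)$.

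The main obstacle will be the two topological inputs underpinning the reduction: first, that each $\overline{\varphi(a)}$ genuinely is homeomorphic to a smooth manifold with boundary and that the relative-orientation sign is independent of the chosen smoothing, so that the smooth constructions of \Cref{def:exterior_calculus/relative_orientations} and \Cref{dsc:exterior_calculus/diamond_property} apply to cells given only combinatorially (as simple-polytope face lattices); and second, the diamond property for the mesh --- that precisely two $(p-1)$-cells lie between a $(p-2)$-cell and a $p$-cell, with the matching local picture inside the boundary sphere. Both follow from the simple-polytope connectivity hypothesis; the remainder --- tracking orientations through \cref{eq:exterior_calculus/chain_complex_property} and identifying $\varepsilon(\overline{\varphi(a)}, \varphi(b))$ with $\varepsilon_p(a,b)$ --- is routine bookkeeping.
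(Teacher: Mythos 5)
Your proposal is correct and follows essentially the same route as the paper's proof: fix an embedding, orient each embedded cell (nodes positively), define $\varepsilon_p(a,b)$ as the smooth relative orientation $\varepsilon(\varphi(a),\varphi(b))$ of \Cref{def:exterior_calculus/relative_orientations} when $b \precdot a$ and $0$ otherwise, and deduce the chain complex property from \cref{eq:exterior_calculus/chain_complex_property}. You simply spell out in more detail the topological inputs (smoothability of cell closures and the diamond property) that the paper leaves implicit.
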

\begin{proof}
  Take an embedding $\varphi \colon \mathcal{M} \to \mathcal{P}(M)$ into some manifold $M$.
  For any $a \in \mathcal{M}$ choose an orientation $\OR_{\varphi(a)}$ of $\varphi(a)$.
  For any two cells $a, b \in \mathcal{M}$ define
  \begin{equation}
    \varepsilon(a, b) :=
    \begin{cases}
      \varepsilon^M(\varphi(a), \varphi(b)), & b \precdot a \\
      0, & \text{else}
    \end{cases},
  \end{equation}
  where $\varepsilon^M$ are the relative orientations on $M$, see \Cref{def:exterior_calculus/relative_orientations}.
  The chain complex property follows from \Cref{eq:exterior_calculus/chain_complex_property}.
  The last, optional, requirement is satisfied if all embedded nodes, i.e., $\varphi(a)$ for $a \in \mathcal{M}_0$, have positive orientations.
\end{proof}
\begin{definition}
  Let $\mathcal{M}$ be a mesh of dimension $D$, $p \in \{0, ..., D\}$.
  A \newterm{$p$-chain} (with real coefficients) is a formal linear combination of $p$-cells in $\mathcal{M}$.
  The vector space of $p$-chains on $\mathcal{M}$, $C_p \mathcal{M}$ is the free real vector space (i.e., the space of formal linear combinations with real coefficients) of the set $\mathcal{M}_p$.
  The set of all chains is denoted by $C_\bullet \mathcal{M}$.
  The following decomposition holds:
  \begin{equation}
    C_\bullet \mathcal{M} := \bigoplus_{p = 0}^D C_p \mathcal{M}.
  \end{equation}
  For any cell $a \in \mathcal{M}$ by $a_\bullet \in C_\bullet \mathcal{M}$ we will denote the corresponding basis chain.
\end{definition}
\begin{definition}
  Let $\mathcal{M}$ be a mesh.
  The linear-algebraic dual of the chain space
  \begin{equation}
    C^\bullet \mathcal{M} := (C_\bullet \mathcal{M})^*
  \end{equation}
  is called the \newterm{space of cochains} and its elements are called \newterm{cochains}.
  The following decomposition holds
  \begin{equation}
    C^\bullet \mathcal{M} = \bigoplus_{p = 0}^D C^p \mathcal{M},\ C^p \mathcal{M} = (C_p \mathcal{M})^*\ (p = 0, ..., D).
  \end{equation}
  For any cell $a \in \mathcal{M}$ by $a^\bullet \in C^\bullet \mathcal{M}$ we will denote the corresponding basis cochain.
\end{definition}

\subsection{Topological operations}
\label{sec:combinatorial/topological}
\begin{definition}
  Let $\mathcal{M}$ be a mesh with relative orientations $\varepsilon = \{\varepsilon_p\}_{p = 0}^{D - 1}$.
  The \newterm{boundary operator} on $C_\bullet \mathcal{M}$ is the linear operator
  \begin{equation}
    \partial \colon C_\bullet \mathcal{M} \to C_\bullet \mathcal{M},
  \end{equation}
  defined for basis $p$-chains ($1 \leq p \leq D$) as follows: for any $a \in \mathcal{M}_p$
  \begin{equation}
    \partial a_\bullet
    := \partial_p a_\bullet
    = \sum_{b \in \mathcal{M}_{p - 1}} \varepsilon(a, b)\, b_\bullet
    = \sum_{b \precdot\, a} \varepsilon(a, b)\, b_\bullet
    \in C_{p - 1} \mathcal{M},
  \end{equation}
  where $\partial$ is decomposed into operators
  \begin{equation}
    \partial_p \colon C_p \mathcal{M} \to C_{p - 1} \mathcal{M},\ p = 1, ..., D.
  \end{equation}
\end{definition}
\begin{proposition}[Restatement of \Cref{eq:exterior_calculus/chain_complex_property}]
  Let $\mathcal{M}$ be a mesh with relative orientations
  $\varepsilon = \{\varepsilon_p\}_{p = 0}^{D - 1}$,
  and the corresponding boundary operator
  $\partial \colon C_\bullet \mathcal{M} \to C_\bullet \mathcal{M}$.
  Then
  \begin{equation}
    \partial \circ \partial = 0,
  \end{equation}
  or specifically,
  \begin{equation}
    \partial_p \circ \partial_{p + 1} = 0,\ p = 1, ..., D - 1,
  \end{equation}
  making $(\mathcal{M}, \partial)$ a chain complex.
\end{proposition}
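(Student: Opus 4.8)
The plan is to prove the identity on basis chains and extend by linearity. Fix $p \in \{1, \dots, D - 1\}$ and a cell $a \in \mathcal{M}_{p+1}$, and expand the boundary operator twice:
\begin{equation}
  \partial_p \partial_{p+1} a_\bullet
  = \sum_{b \,\precdot\, a} \varepsilon(a, b) \sum_{c \,\precdot\, b} \varepsilon(b, c)\, c_\bullet
  = \sum_{c \in \mathcal{M}_{p - 1}} \Bigl( \sum_{b \,:\, c \,\precdot\, b \,\precdot\, a} \varepsilon(a, b)\, \varepsilon(b, c) \Bigr) c_\bullet,
\end{equation}
where the last equality merely interchanges the two finite sums and collects terms according to the $(p-1)$-cell $c$. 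It then remains to show that every inner coefficient vanishes.

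To do this I would fix a $(p-1)$-cell $c$ and split into two cases. If $c \not\prec a$, then there is no cell $b$ with $c \precdot b \precdot a$ (such a $b$ would force $c \prec a$), so the inner sum is empty and hence zero. If $c \prec a$, then $c$ has codimension two in $a$, and I would invoke the diamond property to conclude that there are exactly two $p$-cells $b, b'$ with $c \precdot b \precdot a$ and $c \precdot b' \precdot a$; the inner sum then collapses to $\varepsilon(a, b)\, \varepsilon(b, c) + \varepsilon(a, b')\, \varepsilon(b', c)$, which is precisely the left-hand side of the chain complex property of the relative orientations and therefore equals $0$. Combining the two cases, all coefficients in the displayed expansion vanish, so $\partial_p \partial_{p+1} a_\bullet = 0$ for every $a \in \mathcal{M}_{p+1}$; linearity over $C_{p+1}\mathcal{M}$ and over all admissible $p$ then gives $\partial \circ \partial = 0$, making $(\mathcal{M}, \partial)$ a chain complex.

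The one point that will need care is the diamond property, since the chain complex property of the relative orientations is stated conditionally on there being \emph{precisely two} intermediate cells between $c$ and $a$, so that count must be justified. This is where the structural hypothesis on the mesh enters: each cell has the connectivity of a simple polytope, for whose face lattice the diamond property is \cite[Theorem 2.7(iii)]{ziegler1995lectures}; alternatively, it can be read off from an embedding of $\mathcal{M}$ into a manifold by the diamond property recorded in \Cref{dsc:exterior_calculus/diamond_property}. Everything else is routine bookkeeping with finite sums, so I expect no further obstacle.
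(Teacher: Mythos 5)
Your proof is correct and follows exactly the route the paper intends: the paper offers no written proof, labelling the proposition a ``Restatement'' of \cref{eq:exterior_calculus/chain_complex_property}, and your double-sum expansion grouped by codimension-two faces, combined with the diamond property and the chain-complex property of the relative orientations, is precisely the argument being left implicit. Your closing caveat about justifying the ``exactly two intermediate cells'' count is well placed, and the simple-polytope hypothesis (or \Cref{dsc:exterior_calculus/diamond_property}) handles it as you say.
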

\begin{definition}
  Let $\mathcal{M}$ be a mesh with boundary operator $\partial$.
  The \newterm{coboundary operator} $\delta$ is the dual of $\partial$:
  \begin{equation}
    \delta = \partial^* \colon C^\bullet \mathcal{M} \to C^\bullet M,
  \end{equation}
  i.e., for any $\gamma \in C_\bullet \mathcal{M}$, $\rho \in C^\bullet \mathcal{M}$,
  \begin{equation}
    (\delta \rho) \gamma = \rho(\partial \gamma).
  \end{equation}
  The coboundary operator can be decomposed into operators
  \begin{equation}
    \delta_p \colon C^p \mathcal{M} \to C^{p + 1} \mathcal{M},\ p = 0, ..., D - 1,
  \end{equation}
  and then
  \begin{equation}
    \delta_p = (\partial_{p + 1})^*,\ p = 0, ..., D - 1.
  \end{equation}
\end{definition}
\begin{proposition}
  Let $\mathcal{M}$ be a mesh with boundary operator $\partial$ and coboundary operator $\delta$.
  Then $(C^\bullet M, \delta)$ is a \newterm{cochain complex}, i.e.,
  \begin{equation}
    \delta \circ \delta = 0,
  \end{equation}
  or on components:
  \begin{equation}
    \delta_{p + 1} \circ \delta_p = 0,\ p = 0, ..., D - 2.
  \end{equation}
\end{proposition}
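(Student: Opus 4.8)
\section*{Proof proposal}

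The plan is to reduce the claim to the previously established identity $\partial \circ \partial = 0$ by exploiting that $\delta$ is by definition the linear-algebraic dual of $\partial$. The cleanest route is to verify the componentwise identity $\delta_{p+1} \circ \delta_p = 0$ directly on pairings with chains, which avoids any bookkeeping about how dualisation reverses composition order.

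First I would fix $p \in \{0, \dots, D - 2\}$, take an arbitrary cochain $\rho \in C^p \mathcal{M}$ and an arbitrary chain $\gamma \in C_{p+2} \mathcal{M}$, and compute
\begin{equation}
  \bigl((\delta_{p+1} \circ \delta_p) \rho\bigr)(\gamma)
  = (\delta_p \rho)(\partial_{p+2} \gamma)
  = \rho\bigl(\partial_{p+1} (\partial_{p+2} \gamma)\bigr)
  = \rho\bigl((\partial_{p+1} \circ \partial_{p+2}) \gamma\bigr),
\end{equation}
where the first two equalities are just the defining relation $(\delta \rho)(\sigma) = \rho(\partial \sigma)$ applied at degrees $p+1$ and $p$. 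By the previous proposition (the chain complex property, itself a restatement of \cref{eq:exterior_calculus/chain_complex_property}) we have $\partial_{p+1} \circ \partial_{p+2} = 0$, so the right-hand side is $\rho(0) = 0$. Since $\gamma \in C_{p+2} \mathcal{M}$ was arbitrary and $C_{p+2}\mathcal{M}$ is spanned by basis chains $a_\bullet$, this shows $(\delta_{p+1} \circ \delta_p) \rho = 0$ as an element of $C^{p+2}\mathcal{M}$; since $\rho$ was arbitrary, $\delta_{p+1} \circ \delta_p = 0$. Assembling over all $p$ and recalling the decomposition $C^\bullet \mathcal{M} = \bigoplus_{p=0}^D C^p \mathcal{M}$ gives $\delta \circ \delta = 0$, i.e.\ $(C^\bullet \mathcal{M}, \delta)$ is a cochain complex.

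There is no real obstacle here: the statement is the standard fact that the dual of a chain complex is a cochain complex, and everything follows mechanically from $\partial \circ \partial = 0$ together with the adjunction defining $\delta$. The only point requiring a moment's care is the index tracking — making sure that $\delta_p = (\partial_{p+1})^*$ and $\delta_{p+1} = (\partial_{p+2})^*$ are paired with the right boundary maps so that the vanishing composition that appears is exactly $\partial_{p+1} \circ \partial_{p+2}$ and not a degree-mismatched expression. One could alternatively phrase the whole argument in one line as $\delta \circ \delta = \partial^* \circ \partial^* = (\partial \circ \partial)^* = 0^* = 0$, using contravariance of $(-)^*$, but the pairing computation above is self-contained and makes the index alignment transparent.
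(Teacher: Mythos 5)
Your proposal is correct and takes essentially the same route as the paper: the paper's proof is exactly your one-line version, citing $(f \circ g)^* = g^* \circ f^*$ together with $\partial \circ \partial = 0$, and your pairing computation is just that contravariance argument written out explicitly with the indices tracked.
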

\begin{proof}
  Follows directly from one of the basic properties of dual maps:
  $(f \circ g)^* = g^* \circ f^*$,
  and from $\partial \circ \partial = 0$.
\end{proof}
\begin{definition}
  Let $\mathcal{M}$ be a mesh,
  $M$ be a smooth oriented manifold with boundary,
  $\varphi \colon \mathcal{M} \to \mathcal{P} M$ be a realisation of $\mathcal{M}$ onto $M$.
  Assume that orientations are given to the images of cells of $\mathcal{M}$, and hence relative orientations are induced on $\mathcal{M}$.
  Define the \newterm{de Rham map} $R$ as the unique bilinear map
  \begin{equation}
    R \colon \Omega^\bullet M \to C^\bullet \mathcal{M},\
    R_p \colon \Omega^p M \to C^p \mathcal{M}\ (p = 0, ..., D),
  \end{equation}
  such that for any $p \in \{0, ..., D\}$, $\omega \in \Omega^p M$, $c \in \mathcal{M}_p$,
  \begin{equation}
    (R_p \omega)(c_\bullet) := \int_{\varphi(c)} \tr_{\varphi(c)}(\omega).
  \end{equation}
\end{definition}
\begin{remark}
  Using Stokes-Cartan theorem it is not hard to see that for $p \in \{0, ..., D - 1\}$,
  \begin{equation}
    R_{p + 1} \circ d_p = \delta_p \circ R_p.
  \end{equation}
  In the language of cochain complexes, $R$ is a \newterm{cochain map}.
  We can interpret this result as follows: mesh cochains are a discrete analogue of differential forms, while the coboundary operator is a discrete version of the exterior derivative.
  This interpretation, together with other discrete operators we will introduce, will form the basis of the translation of exterior calculus formulations into CMC ones.
\end{remark}
\begin{definition}
  Let $\mathcal{M}$ be a mesh,
  $\mathcal{S}$ be a sub-mesh of $\mathcal{M}$, i.e., a subset of $\mathcal{M}$ that is also a mesh.
  Define the \newterm{discrete trace}
  \begin{equation}
    \tr_{\mathcal{S}} \colon C^\bullet \mathcal{M} \to C^\bullet \mathcal{S},\ \tr_{\mathcal{S}, p} \colon C^p \mathcal{M} \to C^p \mathcal{S}\ (p = 0, ..., \dim \mathcal{S})
  \end{equation}
  as follows: for any $\sigma \in C^\bullet \mathcal{M}$, $c \in \mathcal{S}$,
  \begin{equation}
    \tr_{\mathcal{S}}(\sigma)(c_\bullet) = \sigma(c_\bullet),
  \end{equation}
  where $c_\bullet$ on the left is a basis chain on $C_\bullet \mathcal{S}$, while $c_\bullet$ on the right is a basis chain on $C_\bullet \mathcal{M}$.
\end{definition}
\begin{remark}
  Consider a realisation $\varphi$ of $\mathcal{M}$ onto $M$, which induces a realisation $\restrict{\varphi}{\mathcal{S}}$ of $\mathcal{S}$ onto a submanifold $S$ of $M$.
  It is not hard to see that the de Rham maps commute with traces (smooth and discrete) as well, i.e., for any $p \in \{0, ..., D\}$,
  \begin{equation}
    R_{S, p} \circ \tr_{S, p} = \tr_{\mathcal{S}, p} \circ R_{M, p}.
  \end{equation}
\end{remark}
\begin{definition}
  Let $\mathcal{M}$ be a manifold-like mesh, i.e., a tessellation of a manifold, $\partial \mathcal{M}$ be its boundary.
  Denote
  \begin{equation}
    C^p_0 \mathcal{M} := \Ker \tr_{\partial \mathcal{M}, p} = \set{\pi \in C^p M}{\tr_{\partial \mathcal{M}, p}(\pi) = 0}.
  \end{equation}
  Like in the continuum, $\delta_p$ restricts to a well-defined operator
  \begin{equation}
    \tilde{\delta}_p \colon C^p_0 \mathcal{M} \to C^{p + 1}_0 \mathcal{M}.
  \end{equation}
\end{definition}
\begin{discussion}
  Until now we have not related meshes to manifolds.
  And indeed, when embedded, meshes can take shapes that do not resemble manifolds.
  We will require manifold-likeness in order to define a global or compatible orientation.
  The relative orientations work locally on cells and their neighbours, but they are not checked for global consistency, which we will do below.
\end{discussion}
\begin{definition}
  Let $\mathcal{M}$ be a mesh of dimension $D$
  that has the following property: any $(D - 1)$-cell is a hyperface of at most two $D$-cells. The $(D - 1)$-cells with $2$ adjacent $D$-cells are called \newterm{interior} cells, while those with $1$ adjacent $D$-cell are called \newterm{boundary} cells.)
  A \newterm{compatible orientation} on $\mathcal{M}$ is a relative orientation $\varepsilon_D$ between $D$-cells and $(D - 1)$-cells such that for any interior $c \in C^{D - 1} \mathcal{M}$ with adjacent $a, b \in C^D \mathcal{M}$,
  \begin{equation}
    \varepsilon_D(a, c) = - \varepsilon_D(b, c).
  \end{equation}
  If $\varepsilon$ is a compatible orientation on a finite mesh $\mathcal{M}$,
  we define the \newterm{fundamental class} of $\mathcal{M}$ by
  \begin{equation}
    [\mathcal{M}] := \sum_{c \in \mathcal{M}_D} c_\bullet.
  \end{equation}
\end{definition}
\begin{remark}
  The notion of a compatible orientation of a mesh is linked to the notion of an orientation of a mesh.
  If $M$ is a manifold and $\mathcal{M}$ is a decomposition of $M$, then an orientation on $M$ gives rise to a compatible orientation on $\mathcal{M}$ and vice-versa.
  In fact, just like integration is defined on an oriented manifold, a discrete version of integration can be defined on a compatibly oriented mesh.
\end{remark}
\begin{definition}
  Let $\mathcal{M}$ be a compatibly oriented finite mesh of dimension $D$ with fundamental class $[\mathcal{M}]$,
  $\sigma \in C^D M$.
  Denote
  \begin{equation}
    \text{``\newterm{discrete integral} of $\sigma$ over $\mathcal{M}$''} := \sigma [\mathcal{M}].
  \end{equation}
\end{definition}
\begin{remark}
  If the oriented manifold $M$ realises the oriented mesh $\mathcal{M}$ of dimension $D$,
  $R$ is the de Rham map, $\omega \in \Omega^D M$, then
  \begin{equation}
    \int_M \omega = R_D(\omega)[\mathcal{M}].
  \end{equation}
\end{remark}
\begin{discussion}
  Until now we have not introduced a discrete analogue to the wedge product, called a \newterm{cup product}, denoted by $\smile$.
  A cup product is a family of bilinear maps $\smile_{p, q} \colon C^p \mathcal{M} \times C^q \mathcal{M} \to C^{p + q} \mathcal{M}$, that satisfies the graded Leibniz rule with respect to the coboundary operator, and is local in the sense that for $\pi \in C^p \mathcal{M}, \rho \in C^q \mathcal{M},\ a \in \mathcal{M}_{p + q}$,
  \begin{equation}
    (\pi \smile \rho)(a_\bullet) = \sum_{b \in \mathcal{M}_p, b \preceq a}\, \sum_{c \in \mathcal{M}_q, c \preceq a} \lambda_{a, b, c}\, \pi(b_\bullet)\, \rho(c_\bullet),
  \end{equation}
  for some coefficients $\lambda_{a, b, c}$, depending only on the topology and the relative orientations of $\mathcal{M}$.
  In other words, when the cup product is applied to a cell, only the values of the cochains of the cell boundary are taken into account.

  Cup products exist for simplicial \cite{wilson2007cochain}, cubical \cite{arnold2012discrete}, and polygonal \cite{ptackova2017discrete} meshes.
  In all cases they are graded-commutative and posses the graded Leibniz rule, but are not associative (even in dimension $1$!).
  However, for simplices the cup product converges to the smooth wedge product \cite[Theorem 5.4]{wilson2007cochain} and cochains coming from de Rham maps are almost associative \cite[Theorem 5.9]{wilson2007cochain}.
  Cup product can also be defined easily for products of meshes using the tensor product of non-associative graded-commutative differential graded algebras, thus allowing to define it over prisms as well.
  However, we are unaware of a cup product for a generic polytopal mesh.
  Moreover, we want to use a notion of topological orthogonality which works well for cubical meshes.
  For this reason we will use a construction of a \newterm{quasi-cubical mesh} $\mathcal{K}$ (i.e., all its cells have face lattices isomorphic to those of cubes) from a generic mesh $\mathcal{M}$ of simple polytopes, called the \newterm{Forman subdivision}, outlined in detail in \cite[Section 2]{berbatov2022diffusion}.
  Another reason to use the Forman subdivision is to allow for more complex material properties (conductivities), as discussed in \cite[Section 4]{berbatov2022diffusion}.
  In this article's numerical examples we will use only constant conductivities but the intention is to use the discrete formulations (\Cref{sec:continuum/primal,sec:continuum/mixed}) with different conductivities of cells of different dimensions as dictated by practical applications.
\end{discussion}
\begin{discussion}
  The topology of the Forman subdivision $\mathcal{K}$ of a mesh $\mathcal{M}$ is defined as follows: the $p$-cells of $\mathcal{K}$ are the intervals
  $[a, b] := \set{c \in \mathcal{M}}{a \preceq c \preceq b}$,
  where $a \in \mathcal{M}_q,\ b \in \mathcal{M}_{q - p},\ q \geq p$.
  The partial order on $\mathcal{K}$ is the sub-interval relation on $\mathcal{M}$, i.e.,
  \begin{equation}
    [a, b] \preceq_{\mathcal{K}} [c, d] \Leftrightarrow c \preceq_{\mathcal{M}} a \preceq_{\mathcal{M}} b \preceq_{\mathcal{M}} d.
  \end{equation}
  For the embedding of $\mathcal{K}$ it was proposed in \cite{berbatov2022diffusion} to use barycentric coordinates for nodes and flat $p$-cells, if possible for $p \geq 1$.
  However, in this article we allow generic embeddings and so the natural embedding of $\mathcal{K}$ will be defined in case by case way.
  For instance, in curved meshes, as those used in \Cref{sec:simulations/disk,sec:simulations/hemisphere}, it will be more natural to use flat cells in the respective curvilinear parametrisations.
\end{discussion}
\begin{definition}
  Let $\mathcal{K}$ be a quasi-cubical mesh,
  $p, q \in \N,\ p + q \leq \dim \mathcal{K}$,
  $b \in \mathcal{K}_p,\ c \in \mathcal{K}_q,\ a \in \mathcal{K}_{p + q}$.
  We say that $b$ and $c$ are
  \newterm{topologically orthogonal} with respect to $a$, and write it as
  \begin{equation}
    b \perp_a c,
  \end{equation}
  if $b$ and $c$ share a single common $0$-cell, $b \preceq a$, and $c \preceq a$.
  Also, for a $(p + q)$-cell $a$, denote by $\perp_{p, q} a$ the set of all pairs of perpendicular $p$- and $q$-subfaces of $a$.
\end{definition}
\noindent
\begin{minipage}{0.7\textwidth}
\begin{remark}
  In a non-convex mesh (i.e., not all cells are convex polytopes) it is possible for two intersecting $1$-cells to be included in two $2$-cells.
  For instance, consider a mesh with nodes $N_0 = (0, 0),\ N_1 = (1, -1),\ N_2 = (2, 0),\ N_3 = (1, 1),\ N_4 = (1, 2)$, edges $E_0 = (N_0, N_1),\ E_1 = (N_1, N_2),\ E_2 = (N_2, N_3),\ E_3 = (N_3, N_0),\ E_4 = (N_2, N_4),\ E_5 = (N_4, N_0)$, and two quadrilaterals: a convex quadrilateral $F_0 = (E_0, E_1, E_2, E_3)$, and a concave quadrilateral $F_1 = (E_3, E_2, E_4, E_5)$.
  Then $F_0$ and $F_1$ share two common edges $E_2$ and $E_3$.
  Hence, $E_2 \perp_{F_0} E_3$, and $E_2 \perp_{F_1} E_3$.
\end{remark}
\end{minipage}
\begin{minipage}{0.3\textwidth}
  \vspace{0pt}
  \centering
  \includegraphics[width = .5\linewidth, keepaspectratio]{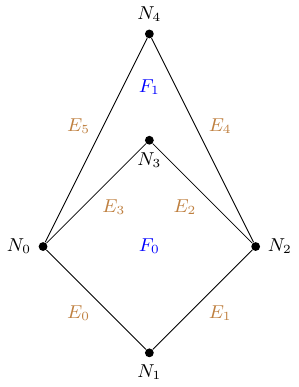}
\end{minipage}
\begin{discussion}
  Let $\mathcal{K}$ be a quasi-cubical mesh of dimension $D$ with relative orientations $\varepsilon$,
  $0 \leq p, q$ with $p + q \leq D$,
  $a \in \mathcal{K}_{p + q}$, $(b, c) \in \perp_{p, q} a$.
  For the definition of cup product we will need a way to relate the orientation of $a$ with the orientations of $b$ and $c$, which we will denote by
  \begin{equation}
    \rel(a, b, c) \in \{-1, 1\}.
  \end{equation}
  In \cite{berbatov2022diffusion} we worked on a convex mesh $\mathcal{K}$ and relative orientations were defined by vector space orientations of the embedding of $\mathcal{K}$ in $\R^D$.
  More precisely, the expression there was equivalent to
  \begin{equation}
    \label{eq:discrete/exterior_calculus/perpendicular_orientations}
    \OR(a) = \rel(a, b, c) \OR(b) \wedge \OR(c).
  \end{equation}
  However, this expression does not work for non-convex shapes, and requires embedding.
  For this reason we will derive an intrinsic one, using only the relative orientations.
  We will first consider the case of a cube with given embedding orientations, and isolate an embedding-independent expression that will be used as the definition for abstract quasi-cubical meshes.

  Let $\mathcal{N}$ be the common node of the cubes $a$, $b$ and $c$ so that
  $a$ be a cube spanned by basis vectors $e_1, ..., e_{p + q}$
  with orientation
  $s_a e_1 \wedge ... \wedge e_{p + q}$,
  $b$ be spanned by $e_1, ..., e_p$ with orientation
  $s_b e_1 \wedge ... \wedge e_p$,
  $c$ be spanned by $e_{p + 1}, ..., e_{p + q}$ with orientation
  $s_c e_{p + 1} \wedge ... \wedge e_{p + q}$,
  where $s_a, s_b, s_c \in \{-1, 1\}$.
  Then, substituting in \Cref{eq:discrete/exterior_calculus/perpendicular_orientations} leads to $\rel(a, b, c) = s_a s_b s_c$.
  If $b$ is the $0$-cell $\mathcal{N}$ ($s_b = 1$), then $a = c$ and $s_A = s_c$, and hence $\rel(a, b, c) = 1$.
  Analogously, if $c$ is the $0$-cell $\mathcal{N}$, then $\rel(a, b, c) = 1$.

  Now consider the case when $b$ is a $1$-cell spanned by $e_1$.
  From the calculation of a relative orientation in
  \Cref{ex:exterior_calculus/half_space_boundary_orientation}
  it follows that
  $\varepsilon(a, c) = - s_a s_c$.
  On the other hand, $\varepsilon(b, \mathcal{N}) = - s_b$.
  Hence we get the following expression, depending only on the relative orientations:
  \begin{equation}
    \rel(a, b, c) = \varepsilon(a, c)\, \varepsilon(b, \mathcal{N}).
  \end{equation}
  Analogously, if $c$ is a $1$-cell, using the graded commutativity of the wedge product, it follows that
  \begin{equation}
    \rel(a, b, c)
    = (-1)^{p} \rel(a, c, b)
    = (-1)^{p} \varepsilon(a, b)\, \varepsilon(c, \mathcal{N}).
  \end{equation}
  This equation is useful when $p \geq 2$ (since we already considered the cases $p = 0$ and $p = 1$).
  If $D \leq 3$ which, combined with $q = 1$ and $p + q \leq D$, leads to $p = 2$ and therefore
  \begin{equation}
    \rel(a, b, c) = \varepsilon(a, b)\, \varepsilon(c, \mathcal{N}).
  \end{equation}
  The exact form of $\rel$ when $D \leq 3$ is summarised in \Cref{def:discrete/exterior_calculus/perpendicular_orientations_up_to_3d},
  which is the definition we use in our applications where we always have $D \leq 3$.

  For completeness we will derive an intrinsic (recursive) expression for $\rel$ in the general case.
  Decompose the cell $b$ into orthogonal components $\mathcal{E}$ and $b'$
  (intersecting at $\mathcal{N}$) so that
  $\mathcal{E}$ is spanned by $e_1$ with orientation $s_{\mathcal{E}} e_1$, and
  $b'$ is spanned by $e_2, ..., e_p$ with orientation $s_{b'} e_2 \wedge ... \wedge e_p$,
  where $s_{\mathcal{E}}, s_{b'} \in \{-1, 1\}$.
  Let $a'$ be spanned by $e_2, ..., e_{p + q}$ with orientation $s_{a'} e_2 \wedge ... \wedge e_{p + q}$, where $s_{a'} \in \{-1, 1\}$.
  Then the equality
  \begin{equation}
    s_a s_b s_c = (s_a s_{\mathcal{E}} s_{a'})\, (s_{a'} s_{b'} s_c)\, (s_b s_{\mathcal{E}} s_{b'})
  \end{equation}
  can be restated as
  \begin{equation}
    \label{eq:discrete/exterior_calculus/perpendicular_orientations_recursive}
    \begin{split}
      \rel(a, b, c)
      & = \rel(a, \mathcal{E}, a')\, \rel(a', b', c)\, \rel(b, \mathcal{E}, b') \\
      & = (\varepsilon(a, a')\, \varepsilon(\mathcal{E}, \mathcal{N}))\, \rel(a', b', c)\, (\varepsilon(b, b')\, \varepsilon(\mathcal{E}, \mathcal{N})) \\
      & = \varepsilon(a, a')\, \varepsilon(b, b')\, \rel(a', b', c).
    \end{split}
  \end{equation}
  This gives the required recursive definition since $\dim b' = p - 1 < p = \dim b$.

  At a first glance it is not obvious that the recursive definition is independent of the decomposition of $b$.
  However, all the expressions are calculated over the subfaces of a single quasi-cubical cell $a$, which we can choose to embed as a cube, and use the vector space orientations as a justification for the correctness of the definition.
\end{discussion}
\begin{remark}
  Note that from the graded commutativity of the wedge product and \Cref{eq:discrete/exterior_calculus/perpendicular_orientations} it follows that
  \begin{equation}
    \label{eq:discrete/exterior_calculus/perpendicular_orientations_graded_commutativity}
    \rel(a, c, b) = (-1)^{p q} \rel(a, b, c).
  \end{equation}
\end{remark}
\begin{definition}
  \label{def:discrete/exterior_calculus/perpendicular_orientations_up_to_3d}
  Let $\mathcal{K}$ be a quasi-cubical mesh of dimension $D \leq 3$ with relative orientations $\varepsilon$,
  $0 \leq p, q$ with $p + q \leq D$,
  $a \in \mathcal{K}_{p + q}$, $(b, c) \in \perp_{p, q} a$,
  $\mathcal{N}$ be the node where $b$ and $c$ intersect.
  Then the \newterm{relative orthogonal orientation} is defined by
  \begin{equation}
    \rel(a, b, c) =
    \begin{cases}
      1, & p = 0\ \text{or}\ q = 0 \\
      \varepsilon(a, c)\, \varepsilon(b, \mathcal{N}), & p = 1\ \text{and}\ q \geq 1 \\
      \varepsilon(a, b)\, \varepsilon(c, \mathcal{N}), & p = 2\ \text{and}\ q = 1
    \end{cases}.
  \end{equation}
\end{definition}
\begin{definition}
  Let $\mathcal{K}$ be a quasi-cubical mesh of dimension $D$ with relative orientations $\varepsilon$.
  The \newterm{quasi-cubical cup product} is a bilinear map
  \begin{equation}
    \smile \colon C^\bullet \mathcal{K} \times C^\bullet \mathcal{K} \to C^\bullet \mathcal{K}, \
    \smile_{p, q} \colon C^p \mathcal{K} \times C^q \mathcal{K} \to C^{p + q} \mathcal{K},\ 0 \leq p, q \leq D,\ p + q \leq D
  \end{equation}
  defined as follows: for any $\sigma \in C^p \mathcal{K}, \tau \in C^q \mathcal{K}$, $a \in \mathcal{K}_{p + q}$,
  \begin{equation}
    (\sigma \smile \tau)(a_\bullet) := \frac{1}{2^{p + q}} \sum_{(b, c) \in \perp_{p, q} a} \rel(a, b, c)\,  \sigma(b_\bullet)\, \tau(c_\bullet).
  \end{equation}
\end{definition}
\begin{proposition}
  Let $\mathcal{K}$ be a quasi-cubical mesh.
  The cup product has the following properties:
  \begin{enumerate}
    \item
      graded commutativity: for any $\sigma \in C^p \mathcal{K}$ and $\tau \in C^q \mathcal{K}$,
      \begin{equation}
        \sigma \smile \tau = (-1)^{p q} \tau \smile \sigma.
      \end{equation}
    \item
      graded Leibniz rule: for any $\sigma \in C^p \mathcal{K}$ and $\tau \in C^\bullet \mathcal{K}$,
      \begin{equation}
        \delta(\sigma \smile \tau) = \delta \sigma \smile \tau + (-1)^p \sigma \smile \delta \tau.
      \end{equation}
  \end{enumerate}
\end{proposition}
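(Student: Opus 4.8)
The plan is to verify both identities pointwise, on basis chains $a_\bullet$: graded commutativity by a direct reindexing of the defining sum, and the graded Leibniz rule by localising to a single cell and reducing it to a sign identity among relative orientations. For graded commutativity fix $a \in \mathcal{K}_{p + q}$. The relation $\perp_a$ is symmetric in its two arguments, so $(b, c) \mapsto (c, b)$ is a bijection $\perp_{p, q} a \to \perp_{q, p} a$. Substituting it into the definition of $(\tau \smile \sigma)(a_\bullet)$, using that the real scalars $\sigma(b_\bullet)$ and $\tau(c_\bullet)$ commute, and rewriting $\rel(a, c, b) = (-1)^{p q}\,\rel(a, b, c)$ by \cref{eq:discrete/exterior_calculus/perpendicular_orientations_graded_commutativity}, gives $(\tau \smile \sigma)(a_\bullet) = (-1)^{p q}(\sigma \smile \tau)(a_\bullet)$; since $a$ is arbitrary and $(-1)^{p q}$ is an involution, this is the first claim.

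For the graded Leibniz rule it suffices, by bilinearity, to take $\tau \in C^q \mathcal{K}$ and to evaluate both sides on $a_\bullet$ with $a \in \mathcal{K}_{p + q + 1}$. Expanding $\delta$ via $(\delta \rho)(\gamma) = \rho(\partial \gamma)$ and the cup products via their definitions, every term on either side is a nonzero scalar multiple of some $\sigma(b_\bullet)\,\tau(c_\bullet)$ with $b \in \mathcal{K}_p$, $c \in \mathcal{K}_q$, $b, c \preceq a$. Since $\mathcal{K}$ is quasi-cubical, the face lattice of $a$ is that of a cube, and the bookkeeping inside it is: in a cubical model of $a$, the pairs $(b, c)$ that actually occur are those spanning disjoint sets of coordinate directions; such a pair determines a unique $(p + 1)$-cell $b'$ with $b \precdot b' \preceq a$ and $(b', c) \in \perp_{p + 1, q} a$, a unique $(q + 1)$-cell $c'$ with $c \precdot c' \preceq a$ and $(b, c') \in \perp_{p, q + 1} a$, and -- when $b$ and $c$ share a (necessarily unique) node -- a unique hyperface $e \precdot a$ containing both, while if $b \cap c = \emptyset$ there is no such $e$. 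Collecting the coefficient of each $\sigma(b_\bullet)\,\tau(c_\bullet)$, and accounting for the factors $2^{-(p + q)}$ on the left and $2^{-(p + q + 1)}$ on the right, reduces the identity to the following statements about subfaces of $a$: when $b$ and $c$ meet in a node,
\begin{equation}
  \varepsilon(a, e)\,\rel(e, b, c) = \rel(a, b', c)\,\varepsilon(b', b) = (-1)^p\,\rel(a, b, c')\,\varepsilon(c', c),
\end{equation}
and when $b \cap c = \emptyset$, the vanishing $\rel(a, b', c)\,\varepsilon(b', b) + (-1)^p\,\rel(a, b, c')\,\varepsilon(c', c) = 0$ -- the latter terms having no counterpart on the left-hand side and therefore having to cancel one another.

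Establishing these sign identities is the main obstacle. The cleanest route, following the justification already used for \cref{eq:discrete/exterior_calculus/perpendicular_orientations_recursive}, is to embed $a$ as the standard unit cube in $\R^{p + q + 1}$, fix arbitrary orientations for all of its subfaces, and evaluate both sides from \cref{eq:discrete/exterior_calculus/perpendicular_orientations} and the induced-orientation computation of \Cref{ex:exterior_calculus/half_space_boundary_orientation}; since $\varepsilon$, $\rel$, and the cells $e, b', c'$ depend only on the face lattice of $a$ and its relative orientations, this embedded verification is valid for any quasi-cubical $a$. Alternatively, one may expand $\rel$ by the recursion \cref{eq:discrete/exterior_calculus/perpendicular_orientations_recursive} and eliminate the resulting products of relative orientations with the chain complex property. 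Finally, bilinearity of $\smile$ and linearity of $\delta$ and $\partial$ promote the pointwise statements to the asserted operator identities.
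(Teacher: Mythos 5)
Your argument is correct in outline, but it takes a genuinely different route from the paper: the paper does not expand the sums at all — it observes that both identities only need to be checked on a single quasi-cubical cell (mapped to a cube), defers the cubical case to Arnold's Theorem~3.2.3, and suggests as a cleaner alternative realising the cubical cochain algebra as a tensor product of graded commutative differential graded algebras, so that only the $1$D segment needs to be checked by brute force. Your direct verification is essentially a reconstruction of Arnold's ``lengthy argument''. The graded-commutativity half is complete as written, since it needs only the symmetry of $\perp_a$ and \cref{eq:discrete/exterior_calculus/perpendicular_orientations_graded_commutativity}. For the Leibniz rule your bookkeeping is right, and you correctly spot the one subtle point: on the right-hand side there are pairs $(b, c)$ with $b \cap c = \emptyset$ (the positions in the leftover direction disagree) that have no counterpart on the left and must cancel between the two terms, while for node-sharing pairs the two right-hand contributions must each equal the single left-hand one, which absorbs the extra factor of $\tfrac{1}{2}$. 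What your proposal buys is self-containedness and an explicit identification of the sign identities that make the rule work; what it costs is that those identities are still only asserted, to be discharged by embedding $a$ as a standard cube — a legitimate finite check, and consistent with how the paper itself justifies the recursion \cref{eq:discrete/exterior_calculus/perpendicular_orientations_recursive}, but it is precisely the part the tensor-product argument avoids entirely. If you want a short complete proof, the paper's suggested route (tensor product of the segment algebras, base case by inspection) is the more economical one.
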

\begin{remark}
  The above proposition was proved for cubes in \cite[Theorem 3.2.3]{arnold2012discrete} using a lengthy argument.
  The same proof can be used there since both identities have to be checked on single quasi-cubical cells (we first map cells of $\mathcal{K}$ to cubes).
  However, a straightforward proof can be done using the fact that cubical meshes with cup products can be defined as tensor products of graded-commutative differential graded algebras (the $1$D cubes (segments) together with their cup products).
  The base case (a segment) is trivially checked by brute force.
\end{remark}

\subsection{Metric operations}

\begin{remark}
  In this work we assume that topologically orthogonal $p$-cells of the mesh are also geometrically orthogonal which results in a diagonal inner product matrix. It will be discussed after one of the demonstration examples (\Cref{sec:simulations/neper}) that this choice may not work universally and amendments will be suggested.
\end{remark}
\begin{definition}
  Let $\mathcal{K}$ be a mesh of dimension $D$, representing a spatial body.
  A \newterm{metric on $\mathcal{K}$} is a function
  $\mu \colon \mathcal{K} \to \R^+$ assigning positive measures to cells in $\mathcal{K}$, and is of physical dimension $\length^p$ when applied at $p$-cells.
  We call the pair $(\mathcal{K}, \mu)$ a \newterm{a quasi-cubical Riemannian mesh}.
\end{definition}
\begin{remark}
  Let $\mathcal{K}$ be a mesh embedded in a Riemannian manifold $(K, g)$ via a map $\varphi \colon \mathcal{K} \to \mathcal{P} K$, $a \in \mathcal{K}$.
  We will usually choose $\mu(a)$ to be the measure of $\varphi(a)$, i.e., if $\vol_{\varphi(a)}$ is the volume form on $a$ corresponding to the pullback metric $\tr_{\varphi(a)} g$ on $\varphi(a)$, then
  \begin{equation}
    \mu(a) := \int_{\varphi(a)} \vol_{\varphi(a)}.
  \end{equation}
  $\mu(a)$ is $1$ for $a \in \mathcal{K}_0$, the length of $\varphi(a)$ for $a \in \mathcal{K}_1$, the area of $\varphi(a)$ for $a \in \mathcal{K}_2$, and the volume of $\varphi(a)$ for $a \in \mathcal{K}_3$.
\end{remark}
\begin{definition}
  Let $(\mathcal{K}, \mu)$ be a quasi-cubical Riemannian mesh.
  The \newterm{discrete inner product} on $\mathcal{K}$ is a family of bilinear forms on $C^p\mathcal{K}$ (for $p = 0, ..., D$) such that basis cochains form an orthogonal basis,
  and for any $c \in \mathcal{K}_p$,
  \begin{equation}
    \inner{c^\bullet}{c^\bullet}_p :=
    \frac{1}{2^D \mu(c)} \sum_{(a, b) \in \mathcal{K}_D \times \mathcal{K}_{D - p},\ b \perp_a c} \mu(b).
  \end{equation}
  The physical dimension of $\inner{\cdot}{\cdot}_p$ is $\length^{D - 2 p}$.
\end{definition}
\begin{remark}
  Note that in convex meshes any orthogonal cells $b$ and $c$ have a unique common $D$-cell $a$.
  Hence, the above formula becomes \cite[Equation 2.90]{berbatov2023discrete}:
  \begin{equation}
    \inner{c^\bullet}{c^\bullet}_p :=
    \frac{1}{2^D \mu(c)} \sum_{b \perp c} \mu(b).
  \end{equation}
\end{remark}
\begin{definition}
  Let $\mathcal{K}$ be a quasi-cubical oriented Riemannian mesh of dimension $D$,
  $0 \leq p \leq D$.
  The \newterm{discrete Hodge star operator}
  \begin{equation}
    \star_p \colon C^p \mathcal{K} \to C^{D - p} \mathcal{K}
  \end{equation}
  is the unique operator satisfying the following equation:
  for arbitrary cochains $\pi \in C^p \mathcal{K}$ and $\rho \in C^{D - p} \mathcal{K}$,
  \begin{equation}
    \label{eq:combinatorial/inner_product_with_hodge_star}
     \inner{\star_p \pi}{\rho}_{D - p}
     = (\pi \smile \rho)[\mathcal{K}].
  \end{equation}
  Note, that the right hand side expresses a discrete form of integration, and, hence, has the same form as \Cref{eq:exterior_calculus/inner_product_with_hodge_star}.
  In the standard bases of $C^p \mathcal{K}$ and $C^{D - p} \mathcal{K}$
  the discrete Hodge star $\star_p$ is represented as a sparse matrix with dimension $N_{D - p} \times N_p$,
  where the nonzero entries are those corresponding to pairs of topologically orthogonal cells.

  The physical dimension of $\star_p$ is $\length^{D - 2 p}$, just like in the continuum case.
\end{definition}
\begin{proposition}
  Let $\mathcal{K}$ be an oriented quasi-cubical Riemannian mesh of dimension $D$,
  $p \in \{0, ..., D\}$,
  $\sigma \in C^p \mathcal{K}$,
  $c \in \mathcal{K}_{D - p}$.
  Then
  \begin{equation}
    \begin{split}
      (\star_p \sigma)(c_\bullet)
      & = \frac{1}{\inner{c^\bullet}{c^\bullet}_{D - p}} \sum_{(a, b) \in \mathcal{K}_D \times \mathcal{K}_{D - p},\ b \perp_a c} (b^\bullet \smile c^\bullet)[\mathcal{K}]\, \sigma(b_\bullet) \\
      & = \frac{1}{2^D \inner{c^\bullet}{c^\bullet}_{D - p}} \sum_{(a, b) \in \mathcal{K}_D \times \mathcal{K}_{D - p},\ b \perp_a c} \rel(a, b, c)\, \sigma(b_\bullet).
    \end{split}
  \end{equation}
\end{proposition}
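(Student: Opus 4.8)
The plan is to read the matrix of $\star_p$ directly off its defining relation \Cref{eq:combinatorial/inner_product_with_hodge_star} by testing against basis cochains. Fix $c \in \mathcal{K}_{D-p}$ and put $\rho = c^\bullet$ in \Cref{eq:combinatorial/inner_product_with_hodge_star}. Since $\{(c')^\bullet : c' \in \mathcal{K}_{D-p}\}$ is an orthogonal basis for $\inner{\cdot}{\cdot}_{D-p}$, writing $\star_p \sigma = \sum_{c' \in \mathcal{K}_{D-p}} (\star_p\sigma)(c'_\bullet)\,(c')^\bullet$ and expanding the left-hand side collapses the sum, so that $\inner{\star_p\sigma}{c^\bullet}_{D-p} = (\star_p\sigma)(c_\bullet)\,\inner{c^\bullet}{c^\bullet}_{D-p}$. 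Dividing by the (positive) number $\inner{c^\bullet}{c^\bullet}_{D-p}$ — whose positivity, $\mu$ being positive-valued, is exactly what makes $\star_p$ well-defined — gives
\[
  (\star_p\sigma)(c_\bullet) = \frac{(\sigma \smile c^\bullet)[\mathcal{K}]}{\inner{c^\bullet}{c^\bullet}_{D-p}}.
\]

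Next I would expand $\sigma$ in the basis of $C^p\mathcal{K}$ as $\sigma = \sum_{b \in \mathcal{K}_p} \sigma(b_\bullet)\,b^\bullet$ and use bilinearity of the cup product together with linearity of the discrete integral $\rho \mapsto \rho[\mathcal{K}]$ to obtain $(\sigma \smile c^\bullet)[\mathcal{K}] = \sum_{b \in \mathcal{K}_p} \sigma(b_\bullet)\,(b^\bullet \smile c^\bullet)[\mathcal{K}]$; this yields the first displayed identity once one notes that $(b^\bullet \smile c^\bullet)[\mathcal{K}]$ vanishes for every $b$ that is not topologically orthogonal to $c$ relative to some $D$-cell. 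To evaluate $(b^\bullet \smile c^\bullet)[\mathcal{K}] = \sum_{a \in \mathcal{K}_D}(b^\bullet \smile c^\bullet)(a_\bullet)$, substitute $b^\bullet, c^\bullet$ into the definition of $\smile$: the pairings $b^\bullet(b'_\bullet) = \delta_{b,b'}$ and $c^\bullet(c'_\bullet) = \delta_{c,c'}$ kill every term of $\sum_{(b',c') \in \perp_{p,D-p}a}$ except $(b',c') = (b,c)$, which occurs exactly when $b \perp_a c$; hence $(b^\bullet \smile c^\bullet)(a_\bullet) = 2^{-D}\rel(a,b,c)$ when $b \perp_a c$ and $0$ otherwise, so that $(b^\bullet \smile c^\bullet)[\mathcal{K}] = 2^{-D}\sum_{a \in \mathcal{K}_D,\ b \perp_a c}\rel(a,b,c)$.

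Substituting this into the formula for $(\star_p\sigma)(c_\bullet)$ turns the double summation over $b \in \mathcal{K}_p$ and over $a \in \mathcal{K}_D$ with $b \perp_a c$ into a single sum over the pairs $(a,b)$ with $b \perp_a c$, which is the second displayed identity. The argument is pure definition-chasing, so there is no genuine obstacle; the only point that demands care is the non-convex situation, where a fixed pair $b,c$ can be topologically orthogonal with respect to two distinct $D$-cells $a$ (as in the concave-quadrilateral example). Because of this the summations must stay indexed by pairs $(a,b)$ rather than by $b$ alone, and one should keep in mind that the relevant grading is $b \in \mathcal{K}_p$, so that $\sigma(b_\bullet)$ is defined; in a convex mesh each topologically orthogonal pair has a unique common $D$-cell and both displays reduce to single sums over $b$, consistent with the simplified inner-product formula recorded earlier.
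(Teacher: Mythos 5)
Your argument is correct, and it is the natural one: test the defining relation \cref{eq:combinatorial/inner_product_with_hodge_star} against the basis cochain $c^\bullet$, use orthogonality of the basis to isolate $(\star_p\sigma)(c_\bullet)$, expand $\sigma$ in basis cochains, and evaluate $(b^\bullet \smile c^\bullet)(a_\bullet)$ from the definition of the cup product. Note that the paper does not actually supply a proof of this proposition; the accompanying remark only points to a similar derivation for convex meshes in the cited reference, so your write-up fills in detail the paper leaves implicit. Two observations on how your derivation relates to the statement as printed. First, you correctly flag that the index set $\mathcal{K}_D\times\mathcal{K}_{D-p}$ must be read as $\mathcal{K}_D\times\mathcal{K}_p$ for $\sigma(b_\bullet)$ and $b\perp_a c$ to make sense; this is evidently carried over from the inner-product definition where the roles of $p$ and $D-p$ are swapped. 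Second, your computation gives $(\star_p\sigma)(c_\bullet)=\inner{c^\bullet}{c^\bullet}_{D-p}^{-1}\sum_{b}(b^\bullet\smile c^\bullet)[\mathcal{K}]\,\sigma(b_\bullet)$ with a single sum over $b$, whereas the first displayed line sums the $a$-independent quantity $(b^\bullet\smile c^\bullet)[\mathcal{K}]\,\sigma(b_\bullet)$ over \emph{pairs} $(a,b)$; in a non-convex mesh where a pair $(b,c)$ shares two distinct $D$-cells these differ by a multiplicity, and it is your version (equivalently, the second displayed line, where $\rel(a,b,c)$ genuinely depends on $a$) that is the correct one. It would strengthen the proof to state this explicitly rather than only gesturing at the non-convex caveat, but the mathematics you present is sound and complete.
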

\begin{remark}
  A proof for a similar formulation on a convex mesh is derived at \cite[Equation 16]{berbatov2022diffusion}.
  Note that in \cite{berbatov2022diffusion,berbatov2023discrete} a different convention for Hodge star is used but this one coincides with the standard one, used in exterior algebras.
\end{remark}
\begin{definition}
  Let $\mathcal{K}$ be a compatibly oriented quasi-cubical Riemannian mesh.
  The \newterm{adjoint coboundary operator} $\delta^\star_p$ is the discrete analogue of the codifferential.
  It is defined as the adjoint of $\tilde{\delta}_{p - 1}$ with respect to the inner product (restricted to zero-trace cochains),
  that is, for any $\pi \in C^p_0 \mathcal{K}$ and $\rho \in C^{p - 1}_0 \mathcal{K}$,
  \begin{equation}
    \inner{\pi}{\tilde{\delta}_{p - 1} \rho}_p = \inner{\delta^\star_p \pi}{\rho}_{p - 1}.
  \end{equation}
  The adjoint coboundary operator $\delta^\star_p$ is represented as a sparse matrix with dimension $\abs{C^{p - 1}_0 \mathcal{K}} \times \abs{C^p_0 \mathcal{K}}$ with the same stencil as the boundary operator $\partial_p$ (when restricted to boundary non-boundary cells): only magnitudes of values differ from those of the boundary operator, while the signs are the same.

  The physical dimension of $\delta^\star_p$ is $\length^{-2}$, like in the continuum.
\end{definition}
\begin{proposition}
  Let $\mathcal{K}$ be a compatibly oriented quasi-cubical Riemannian mesh,
  $p \in \{1, ..., D\}$,
  $\sigma \in C^p_0 \mathcal{K}, b \in {\rm interior}(\mathcal{K}_{p - 1})$.
  Then \cite[Equation 14]{berbatov2022diffusion}
  \begin{equation}
    (\delta^\star_p \sigma)(b_\bullet) = \frac{1}{\inner{b^\bullet}{b^\bullet}_{p - 1}} \sum_{a \succdot b} \varepsilon(a, b)\, \inner{a^\bullet}{a^\bullet}_p\, \sigma(a_\bullet).
  \end{equation}
\end{proposition}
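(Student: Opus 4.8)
The plan is to read off the matrix entries of $\delta^\star_p$ in the orthogonal basis of basis cochains by testing the defining adjoint relation, and then conclude by linearity. Recall that $\delta^\star_p$ is characterised by $\inner{\pi}{\tilde{\delta}_{p-1}\rho}_p = \inner{\delta^\star_p\pi}{\rho}_{p-1}$ for all $\pi \in C^p_0 \mathcal{K}$ and $\rho \in C^{p-1}_0 \mathcal{K}$. Since the basis cochains $a^\bullet$ for interior $a \in \mathcal{K}_p$ form an orthogonal basis of $C^p_0 \mathcal{K}$, and likewise the $b^\bullet$ for interior $b \in \mathcal{K}_{p-1}$ form an orthogonal basis of $C^{p-1}_0 \mathcal{K}$, it suffices to evaluate $(\delta^\star_p a^\bullet)(b_\bullet)$ for such $a$ and $b$.

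First I would compute the left-hand side with $\pi = a^\bullet$ and $\rho = b^\bullet$. Since $\delta_{p-1} = (\partial_p)^*$, pairing against basis chains gives $\tilde{\delta}_{p-1} b^\bullet = \sum_{a' \succdot b} \varepsilon(a', b)\, (a')^\bullet$, where every $a'$ occurring is automatically interior because $\partial\mathcal{K}$ is a subcomplex (closed under passing to subfaces), so an interior $b$ cannot be a hyperface of a boundary $p$-cell; this also re-confirms that $\tilde{\delta}_{p-1} b^\bullet$ lies in $C^p_0 \mathcal{K}$. Orthogonality of the basis cochains then yields $\inner{a^\bullet}{\tilde{\delta}_{p-1} b^\bullet}_p = \varepsilon(a, b)\, \inner{a^\bullet}{a^\bullet}_p$ when $b \precdot a$, and $0$ otherwise. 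Next, for the right-hand side, write $\delta^\star_p a^\bullet = \sum_{b'} \lambda_{b'}\, (b')^\bullet$ in the orthogonal basis; then $\inner{\delta^\star_p a^\bullet}{b^\bullet}_{p-1} = \lambda_b\, \inner{b^\bullet}{b^\bullet}_{p-1}$. Equating the two sides and dividing by the positive number $\inner{b^\bullet}{b^\bullet}_{p-1}$ gives $(\delta^\star_p a^\bullet)(b_\bullet) = \lambda_b = \varepsilon(a, b)\, \inner{a^\bullet}{a^\bullet}_p / \inner{b^\bullet}{b^\bullet}_{p-1}$ when $b \precdot a$, and $0$ otherwise. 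Finally, expanding an arbitrary $\sigma \in C^p_0 \mathcal{K}$ as $\sigma = \sum_a \sigma(a_\bullet)\, a^\bullet$ over interior $p$-cells, using linearity of $\delta^\star_p$, and evaluating at $b_\bullet$, the surviving terms are exactly those with $a \succdot b$, which regroup into the claimed formula $(\delta^\star_p \sigma)(b_\bullet) = \frac{1}{\inner{b^\bullet}{b^\bullet}_{p - 1}} \sum_{a \succdot b} \varepsilon(a, b)\, \inner{a^\bullet}{a^\bullet}_p\, \sigma(a_\bullet)$.

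The argument is essentially bookkeeping, so I do not expect a serious obstacle; the points needing care are the identification of $\tilde{\delta}_{p-1} b^\bullet$ through the dual of the boundary operator, and keeping the restriction to the zero-trace subspaces consistent with the sums over covering cells. In particular one must observe that for an interior $(p-1)$-cell $b$ every $p$-cell $a$ with $a \succdot b$ is itself interior, so no boundary cell contributes and every coefficient $\sigma(a_\bullet)$ appearing is a genuine coordinate of $\sigma \in C^p_0 \mathcal{K}$.
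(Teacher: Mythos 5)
Your proof is correct: the paper itself gives no proof of this proposition (it simply cites Equation 14 of the earlier work), and your direct verification --- expanding $\tilde{\delta}_{p-1}b^\bullet = \sum_{a \succdot b}\varepsilon(a,b)\,a^\bullet$ via $\delta_{p-1} = (\partial_p)^*$, testing the adjoint relation against orthogonal basis cochains, and extending by linearity --- is exactly the computation the cited formula rests on. The point you flag about interior cells is handled properly: since $\partial\mathcal{K}$ is a sub-mesh and hence closed under subfaces, no boundary $p$-cell can cover an interior $(p-1)$-cell, so the sum over $a \succdot b$ stays inside $C^p_0\mathcal{K}$.
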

\begin{proposition}
  Let $\mathcal{K}$ be a compatibly oriented quasi-cubical Riemannian mesh of dimension $D$.
  Then
  \begin{equation}
    \delta^\star_{D - p} \circ \star_p = (-1)^{p + 1} \star_{p + 1} \circ \tilde{\delta}_p.
  \end{equation}
\end{proposition}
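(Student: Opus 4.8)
The plan is to transcribe the proof of the continuum identity \cref{eq:exterior_calculus/codifferential_formula} (carried out in \cref{eq:exterior_calculus/codifferential_formula_proof}) into the combinatorial setting, making the substitutions: integration $\int_M$ becomes evaluation on the fundamental class, $(\cdot)[\mathcal{K}]$; the wedge product $\wedge$ becomes the cup product $\smile$; the Stokes--Cartan/Green identity becomes the fact that $\delta = \partial^*$; and \cref{eq:exterior_calculus/inner_product_with_hodge_star} becomes its discrete counterpart \Cref{eq:combinatorial/inner_product_with_hodge_star}. Concretely, fix $\sigma \in C^p_0 \mathcal{K}$ and test against an arbitrary $\tau \in C^{D - p - 1}_0 \mathcal{K}$. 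Using first the defining adjointness of $\delta^\star_{D - p}$ with respect to $\tilde{\delta}_{D - p - 1}$, and then \Cref{eq:combinatorial/inner_product_with_hodge_star}, one obtains
\begin{equation*}
  \inner{\delta^\star_{D - p} \star_p \sigma}{\tau}_{D - p - 1}
  = \inner{\star_p \sigma}{\tilde{\delta}_{D - p - 1} \tau}_{D - p}
  = (\sigma \smile \delta_{D - p - 1} \tau)\,[\mathcal{K}].
\end{equation*}

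Next I would apply the graded Leibniz rule for the cup product, $\delta(\sigma \smile \tau) = \delta \sigma \smile \tau + (-1)^p \sigma \smile \delta \tau$, to rewrite $(\sigma \smile \delta \tau)[\mathcal{K}]$ as $(-1)^p (\delta(\sigma \smile \tau))[\mathcal{K}] - (-1)^p (\delta \sigma \smile \tau)[\mathcal{K}]$. The first summand is the discrete boundary term and vanishes: since $\delta = \partial^*$, it equals $(\sigma \smile \tau)(\partial[\mathcal{K}])$, and $\partial[\mathcal{K}]$ is carried by the boundary $(D - 1)$-cells (the interior $(D-1)$-cells occur twice with opposite signs, by the compatible-orientation condition); on such a boundary cell the locality of $\smile$ makes $\sigma \smile \tau$ a combination of products $\sigma(b_\bullet)\,\tau(c_\bullet)$ over its subfaces $b, c$, which all lie in $\partial \mathcal{K}$, where $\sigma$ and $\tau$ vanish by the zero-trace hypothesis. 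This is the exact combinatorial analogue of the vanishing of $\int_{\partial M}(\tr_{\partial M}\omega \wedge \tr_{\partial M}\eta)$ in the smooth argument.

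It then remains to identify $(\delta \sigma \smile \tau)[\mathcal{K}]$, via \Cref{eq:combinatorial/inner_product_with_hodge_star} applied with $\delta \sigma = \tilde{\delta}_p \sigma$, as $\inner{\star_{p + 1} \tilde{\delta}_p \sigma}{\tau}_{D - p - 1}$. Collecting signs yields $\inner{\delta^\star_{D - p} \star_p \sigma}{\tau}_{D - p - 1} = (-1)^{p + 1}\inner{\star_{p + 1} \tilde{\delta}_p \sigma}{\tau}_{D - p - 1}$; since $\tau$ was arbitrary in $C^{D - p - 1}_0 \mathcal{K}$ and the inner product is non-degenerate there, the operator identity $\delta^\star_{D - p} \circ \star_p = (-1)^{p + 1} \star_{p + 1} \circ \tilde{\delta}_p$ follows, matching the continuum sign in \cref{eq:exterior_calculus/codifferential_formula_proof}.

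I expect the main obstacle to be the same as in the smooth proof — the handling of the boundary contribution — together with one bookkeeping point that is invisible in the continuum: one must be slightly careful about which cochain subspaces the operators act on (for instance $\star_p$ need not preserve the zero-trace subspace, so $\delta^\star_{D - p}$ is here taken in the form given by its explicit stencil, for which the adjointness against zero-trace test cochains still holds), so that the pairing argument is legitimate. The remaining manipulations — the two invocations of \Cref{eq:combinatorial/inner_product_with_hodge_star}, the graded Leibniz rule, and the sign tally — are routine.
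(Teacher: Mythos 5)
Your proposal is correct and follows exactly the route the paper intends: its own proof is just the one-line remark that the argument is ``essentially the same'' as the continuum computation in \cref{eq:exterior_calculus/codifferential_formula_proof}, and you carry out precisely that translation (fundamental class for integration, cup for wedge, $\delta=\partial^*$ plus compatible orientation and locality of $\smile$ for the vanishing boundary term, and \cref{eq:combinatorial/inner_product_with_hodge_star} for the two Hodge-star steps). Your closing caveat about the zero-trace subspaces is a genuine subtlety that the paper glosses over in both the smooth and discrete settings, and is worth keeping.
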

\begin{proof}
  The proof is essentially the same as the proof of \Cref{eq:exterior_calculus/codifferential_formula}, given by \Cref{eq:exterior_calculus/codifferential_formula_proof}.
\end{proof}
\begin{remark}
  We are ready to reformulate the exterior calculus models in a discrete setting.
  Before doing that, in \Cref{tab:calculus_summary} we summarise both formalisms (continuous and discrete) and how they relate to one another.
  The central column, ``Correspondence'', describes the level of accuracy between a continuous and a discrete notion.
  More precisely, some relationships are exact by definition, namely those relating objects by discretising them.
  For relationships between operators, a commutative diagram approach is used -- an operator is exact if first discretising it, and then applying it on discrete objects gives the same result as first applying it at continuous objects and then discretising them.

  For instance, all topological operators except the wedge/cup products correspond exactly.
  The accuracy of the correspondence between the wedge product and the cup product on a simplicial mesh is discussed in detail in \cite{wilson2007cochain};
  one of the crucial differences is the non-associativity of the cup product.
  For metric operators, all relationships are approximate, which is inescapable.
  However, our definitions are based on the following principles:
  \begin{enumerate}
    \item
      the Hodge star and adjoint coboundary operators have analogues definitions to their continuous counterparts, so that they share some common algebraic expressions with them;
    \item
      the basis cochains are orthogonal with respect to the inner product so that discrete Hodge star and adjoint coboundary are local operators;
    \item
      the discrete counterparts of the volume form and the identity function are discrete Hodge-dual;
    \item
      for regular cubical grids of mesh size $h > 0$, the discrete operators have optimal accuracy ($O(h^2)$ in the interior) when viewed as approximations of their respective continuous counterparts.
  \end{enumerate}
  That being said, the discrete calculus presented here (CMC) is fully intrinsic and self-contained, i.e., it should be considered as independent of the continuous one. The smooth exterior calculus is primarily a source of inspiration on how to build the discrete one.
\end{remark}
\begin{table}[!ht]
  \caption{Summary of Exterior calculus and Combinatorial mesh calculus}
  \label{tab:calculus_summary}
  \centering
  \begin{tabular}{|l|l|l||l||l|l|l|}
    \hline
    \multicolumn{3}{|c||}{Exterior calculus}
    & Correspondence
    & \multicolumn{3}{|c|}{Combinatorial mesh calculus} \\
    \hline
    Quantity/Operator & Expression & Dimension
    &
    & Quantity/Operator & Expression & Dimension \\
    \hline
    Manifold & $M$ & $1$
    & Exact
    & Mesh & $\mathcal{M}$ & $1$ \\
    Submanifold & $V$ & $1$
    & Exact
    & Cell & $a$ & $1$ \\
    Vector field & $X$ & Generic
    & --
    & -- & -- & -- \\
    Differential form & $\omega$ & Generic
    & Exact
    & Cochain & $\sigma$ & Generic \\
    Boundary & $\partial$ & $1$
    & Exact
    & Boundary operator & $\partial$ & $1$ \\
    Exterior derivative & $d$ & $1$
    & Exact
    & Coboundary operator & $\delta$ & $1$ \\
    Integration & $\int_M \omega$ & $1$
    & Exact
    & Discrete integration & $\sigma [\mathcal{M}]$ & $1$ \\
    Trace & $\tr \omega$ & $1$
    & Exact
    & Discrete trace & $\tr \sigma$ & $1$ \\
    Wedge product & $\wedge$ & $1$
    & Approximate
    & Cup product & $\smile$ & $1$ \\
    Metric tensor & $g^*_p$ & $\length^{- 2 p}$
    & --
    & -- & -- & -- \\
    Inner product & $\inner{\cdot}{\cdot}_p$ & $\length^{D - 2 p}$
    & Approximate
    & Discrete inner product & $\inner{\cdot}{\cdot}_p$ & $\length^{D - 2 p}$ \\
    Hodge star & $\star_p$ & $\length^{D - 2 p}$
    & Approximate
    & Discrete Hodge star & $\star_p$ & $\length^{D - 2 p}$ \\
    Codifferential & $d^{\star}_p$ & $\length^{-2}$
    & Approximate
    & Adjoint coboundary operator & $\delta^{\star}_p$ & $\length^{-2}$ \\
    \hline
  \end{tabular}
\end{table}

\section{Combinatorial mesh calculus variational formulations}
\label{sec:discrete}

\begin{remark}
  The exterior calculus formulations in \Cref{sec:continuum} can be translated almost literally into CMC formulations.
  Because discrete Hodge star operators are generally not invertible, primal
  and mixed formulations are not equivalent at the discrete level.
  In the present work, each formulation is therefore treated as a distinct
  discrete model rather than as an algebraically equivalent representation of
  the same problem.
\end{remark}
\begin{notation}[Parameters participating in the CMC model for transport phenomena]
  \label{notation:discrete/parameters}
  Before imposing the (strong) model, let us summarise all the parameters that will be used in all subsequent reformulations.
  Until the end of this section let:
  \begin{itemize}
    \item
      $D$ be a positive integer (space dimension);
    \item
      $\mathcal{M}$ be a $D$-dimensional oriented Riemannian mesh;
    \item
      $\mathcal{K}$ be the Forman subdivision of $\mathcal{M}$;
    \item
      $\Gamma_{\Dirichlet}, \Gamma_{\Neumann}$ form a partition of $\partial \mathcal{K}$ into Dirichlet and Neumann boundary
      (discrete analogue of \Cref{eq:continuum/boundary_decomposition}).
    \item
      $t_0 [\time] \in \R$ be the initial time;
    \item
      $I := [t_0, \infty)$;
  \end{itemize}
  Input parameters, initial and boundary conditions are given in \Cref{tab:discrete/parameters}.
  The unknowns are given in \Cref{tab:discrete/unknowns}.
  $S_0$ and $\widetilde{S_0}$ are defined as in
  \Cref{eq:continuum/amount_potential_constant} and
  \Cref{eq:continuum/dual_amount_potential_constant} respectively.
\end{notation}
\begin{table}[!ht]
  \caption{Transport phenomena parameters in the CMC formulation}
  \label{tab:discrete/parameters}
  \centering
  \begin{tabular}{|l|l|l|l|l|l|}
    \hline
    Quantity
    & Symbol
    & Domain
    & Dimension
    & Pseudo-object?
    & Continuum analogue \topStrut \\[2pt]
    \hline
    \hline
    Initial potential
    & $u_0$
    & $C^0 \mathcal{K}$
    & $\potential$
    & No
    & \Cref{eq:continuum/initial_potential_definition} \topStrut \\[2pt]
    \hline
    Initial amount
    & $Q_0$
    & $C^D \mathcal{K}$
    & $\amount$
    & Yes
    & \Cref{eq:continuum/initial_amount_definition} \topStrut \\[2pt]
    \hline
    Internal production rate
    & $f$
    & $\mathcal{C}^\infty(I, C^D \mathcal{K})$
    & $\amount \time^{-1}$
    & Yes
    & \Cref{eq:continuum/internal_production_rate_definition} \topStrut \\[2pt]
    \hline
    Prescribed flow rate
    & $g_\Neumann$
    & $\mathcal{C}^\infty(I, C^{D - 1} \Gamma_\Neumann)$
    & $\amount \time^{-1}$
    & Yes
    & \Cref{eq:continuum/prescribed_neumann_condition_definition} \topStrut \\[2pt]
    \hline
    Prescribed potential
    & $g_\Dirichlet$
    & $\mathcal{C}^\infty(I, C^0 \Gamma_\Dirichlet)$
    & $\potential$
    & No
    & \Cref{eq:continuum/prescribed_dirichlet_condition_definition} \topStrut \\[2pt]
    \hline
    Volumetric flow rate
    & $v$
    & $\mathcal{C}^\infty(I, C^{D - 1} \mathcal{K})$
    & $\length^D \time^{-1}$
    & Yes
    & \Cref{eq:continuum/volumetric_flow_rate_definition} \topStrut \\[2pt]
    \hline
    Capacity
    & $\pi$
    & $\mathcal{C}^\infty(I, C^D \mathcal{K}) \to \mathcal{C}^\infty(I, C^D \mathcal{K})$
    & $\amount \potential^{-1} \length^{-D}$
    & No
    & \Cref{eq:continuum/dual_capacity_definition} \topStrut \\[2pt]
    \hline
    Dual capacity
    & $\widetilde{\pi}$
    & $\mathcal{C}^\infty(I, C^0 \mathcal{K}) \to \mathcal{C}^\infty(I, C^0 \mathcal{K})$
    & $\amount \potential^{-1} \length^{-D}$
    & No
    & \Cref{eq:continuum/capacity_definition} \topStrut \\[2pt]
    \hline
    Conductivity
    & $\kappa$
    & $\mathcal{C}^\infty(I, C^{D - 1} \mathcal{K}) \to \mathcal{C}^\infty(I, C^{D - 1} \mathcal{K})$
    & $\amount \potential^{-1} \length^{2 - D} \time^{-1}$
    & No
    & \Cref{eq:continuum/conductivity_definition} \topStrut \\[2pt]
    \hline
    Dual conductivity
    & $\widetilde{\kappa}$
    & $\mathcal{C}^\infty(I, C^1 \mathcal{K}) \to \mathcal{C}^\infty(I, C^1 \mathcal{K})$
    & $\amount \potential^{-1} \length^{2 - D} \time^{-1}$
    & No
    & \Cref{eq:continuum/dual_conductivity_definition} \topStrut \\[2pt]
    \hline
  \end{tabular}
\end{table}
\begin{table}[!ht]
  \caption{Transport phenomena unknowns in the CMC formulation}
  \label{tab:discrete/unknowns}
  \centering
  \begin{tabular}{|l|l|l|l|l|l|}
    \hline
    Quantity
    & Symbol
    & Domain
    & Dimension
    & Pseudo-object?
    & Continuum reference \topStrut \\[2pt]
    \hline
    \hline
    Amount
    & $Q$
    & $\mathcal{C}^\infty(I, C^D \mathcal{K})$
    & $\amount$
    & Yes
    & \Cref{eq:continuum/amount_definition} \topStrut \\[2pt]
    \hline
    Flow rate
    & $q$
    & $\mathcal{C}^\infty(I, C^{D - 1} \mathcal{K})$
    & $\amount \time^{-1}$
    & Yes
    & \Cref{eq:continuum/flow_rate_definition} \topStrut \\[2pt]
    \hline
    Diffusive flow rate
    & $q_\diffusive$
    & $\mathcal{C}^\infty(I, C^{D - 1} \mathcal{K})$
    & $\amount \time^{-1}$
    & Yes
    & \Cref{eq:continuum/diffusive_flow_rate_definition} \topStrut \\[2pt]
    \hline
    Advective flow rate
    & $q_\advective$
    & $\mathcal{C}^\infty(I, C^{D - 1} \mathcal{K})$
    & $\amount \time^{-1}$
    & Yes
    & \Cref{eq:continuum/advective_flow_rate_definition} \topStrut \\[2pt]
    \hline
    Potential
    & $u$
    & $\mathcal{C}^\infty(I, C^0 \mathcal{K})$
    & $\potential$
    & No
    & \Cref{eq:continuum/potential_definition} \topStrut \\[2pt]
    \hline
  \end{tabular}
\end{table}
\begin{remark}
  The input physical quantities ($u_0$, $f$, $g_\Neumann$, $g_\Dirichlet$, and $v$) can be provided either directly in discrete form or as discretised versions of their continuous counterparts via the de Rham map -- see \Cref{discussion:simulations/pre_processing} for further details. Material properties are assigned to the cells of the complex: capacity is associated with $3$-cells (and its dual with $0$-cells), while conductivity is associated with $(D-1)$-cells (and its dual with $1$-cells).   For example, for a $1$-cell $c$, we define $(\tilde{\kappa} q)(c) = \lambda_c\, q(c)$, where $\lambda_c > 0$ is a material-specific coefficient. Analogous relations hold for the operators $\pi$, $\kappa$, and $\tilde{\pi}$.

  Our approach is designed to predict or derive macroscopic (emergent) properties of the complex from given local, cell-wise properties. This enables the analysis of heterogeneous structures with arbitrary geometric and topological complexity. The inverse problem -- determining local properties that give rise to prescribed macroscopic behaviour -- is a compelling and important challenge, especially in the context of materials design with target functionalities. While addressing this inverse problem is beyond the scope of the present work, it could be pursued via various machine learning strategies.

  In this paper, we consider the simpler case of homogeneous and isotropic materials with prescribed macroscopic conductivity and capacity. In such cases, the inverse problem is trivial: all cells associated with conductivity or capacity are assigned the corresponding macroscopic values. This setup enables direct comparison with classical continuum models in our simulation examples.

  It is important to note that, since the discrete Hodge star operators are not invertible, material coefficients and their dual counterparts are not interchangeable. In the following two points, we explain how conductivities and capacities are defined in the discrete setting.
\end{remark}
\begin{discussion}[Specifying discrete conductivities]
  Following \cite{berbatov2022diffusion}, we define conductivity on $1$-cells -- this corresponds to the dual conductivity $\tilde{\kappa}$ in this work. The $1$-cells in $\mathcal{K}$ correspond to intervals $c$ in the physical complex $\mathcal{M}$, where $\dim b = \dim a + 1$ for $[a, b] = c$.

  Conductivities are specified as positive scalar values on non-nodal cells of $\mathcal{M}$. Each $1$-cell $[a, b]$ in $\mathcal{K}$ is assigned the conductivity value of the target cell $b$. The corresponding primal conductivity $\kappa$ is represented as a diagonal matrix, determined via a least-squares fit to satisfy either
  $\star_1 \, \tilde{\kappa} = \kappa \, \star_1$ or
  $\star_{D - 1} \, \kappa = \tilde{\kappa} \, \star_{D - 1}$.
\end{discussion}
\begin{discussion}[Specifying discrete capacities]
  In parallel with the conductivity specification, we begin by assigning the dual capacity $\tilde{\pi}$. Note that $0$-cells in $\mathcal{K}$ correspond to all cells in $\mathcal{M}$.
  Capacities are given as positive scalar values on $p$-cells of $\mathcal{M}$ for $p>0$ only. For any $0$-cell $[a, a]$ in $\mathcal{K}$, we assign the capacity of $a$ if $a$ is not a $0$-cell of $\mathcal{M}$, and zero otherwise.  The corresponding primal capacity $\pi$ is again represented as a diagonal matrix, determined via a least-squares fit to satisfy either
  $\star_0 \, \tilde{\pi} = \pi \, \star_0$ or
  $\star_D \, \pi = \tilde{\pi} \, \star_D$.
\end{discussion}

\begin{formulation}[CMC transient model for transport phenomena]
  By translating \Cref{tab:continuum/equations} into the language of combinatorial meshes, under the assumptions of \Cref{notation:discrete/parameters}, we arrive at the model given in \Cref{tab:discrete/equations}.
\end{formulation}
\begin{table}[!ht]
  \caption{Governing equations for transport phenomena in the CMC formulation}
  \label{tab:discrete/equations}
  \centering
  \begin{tabular}{|ll|l|l|l|}
    \hline
    Equation
    &
    & Dimension
    & Law
    & Continuum reference \topStrut \\[2pt]
    \hline
    \hline
    $\frac{\partial Q}{\partial t}$
    & $= f - \delta_{D - 1} q$
    & $\amount \time^{-1}$
    & Conservation law
    & \Cref{eq:continuum/conservation_law_differential_form} \topStrut \\[2pt]
    \hline
    $\frac{\partial Q}{\partial t}$
    & $= \star_0 \widetilde{\pi} \frac{\partial u}{\partial t} = \pi \star_0 \frac{\partial u}{\partial t}$
    & $\amount \time^{-1}$
    & Relation between potential and amount
    & \Cref{eq:continuum/potential_to_amount} \topStrut \\[2pt]
    \hline
    $q$
    & $= q_\diffusive + q_\advective$
    & $\amount \time^{-1}$
    & Flow rate decomposition
    & \Cref{eq:continuum/flow_rate_decomposition} \topStrut \\[2pt]
    \hline
    $q_\diffusive$
    & $= \kappa d^\star_D \star_0 u$
    & $\amount \time^{-1}$
    & Constitutive law
    & \Cref{eq:continuum/constitutive_law} \topStrut \\[2pt]
    \hline
    $q_\advective$
    & $= (\star_D Q) \smile v$
    & $\amount \time^{-1}$
    & Advective flow rate formula
    & \Cref{eq:continuum/advective_flow_rate_formula} \topStrut \\[2pt]
    \hline
    $\tr_{\Gamma_{\Dirichlet, 0}}(u)$
    & $= g_\Dirichlet$
    & $\potential$
    & Dirichlet boundary condition
    & \Cref{eq:continuum/dirichlet_boundary_condition} \topStrut \\[2pt]
    \hline
    $\tr_{\Gamma_{\Neumann, D - 1}}(q)$
    & $= g_\Neumann$
    & $\amount \time^{-1}$
    & Neumann boundary condition
    & \Cref{eq:continuum/neumann_boundary_condition} \topStrut \\[2pt]
    \hline
    $u(t_0)$
    & $= u_0$
    & $\potential$
    & Initial condition for potential
    & \Cref{eq:continuum/initial_condition_potential} \topStrut \\[2pt]
    \hline
    $Q(t_0)$
    & $= Q_0$
    & $\amount$
    & Initial condition for amount
    & \Cref{eq:continuum/initial_condition_amount} \topStrut \\[2pt]
    \hline
  \end{tabular}
\end{table}

\subsection{Primal weak formulation}
\label{sec:discrete/primal_weak}

\begin{formulation}[CMC transient primal weak formulation for transport phenomena]
  \label{formulation:discrete/primal_weak/transient}
  The following formulation is a discrete version of \Cref{formulation:continuum/primal_weak/transient}.
  Under the assumptions of \Cref{notation:discrete/parameters}
  define the following operators:
  \begin{subequations}
    \label{eq:discrete/primal_weak/operators}
    \begin{alignat}{3}
      & A_{\diffusive} \colon C^0 \mathcal{K} \times \mathcal{C}^\infty(I, C^0 \mathcal{K}) \to \R, \quad
      && A_{\diffusive}(w, u) := \inner{\delta_0 w}{\widetilde{\kappa} \delta_0 u}_{\mathcal{K}, 1} \qquad
      && [\amount \time^{-1} \potential^{-1}], \\
      & A_{\advective} \colon C^0 \mathcal{K} \times \mathcal{C}^\infty(I, C^0 \mathcal{K}) \to \R, \quad
      && A_{\advective}(w, u) := (\delta_0 w \smile (\widetilde{\pi} u \smile v))[\mathcal{K}] \qquad
      && [\amount \time^{-1} \potential^{-1}], \\
      & B \colon C^0 \mathcal{K} \times \mathcal{C}^\infty(I, C^0 \mathcal{K}) \to \R, \quad
      && B(w, u) := \inner{w}{\widetilde{\pi} u}_{\mathcal{K}, 0} \qquad
      && [\amount \potential^{-1}], \\
      & G \colon C^0 \mathcal{K} \to \R, \quad
      && G(w)
        := (\tr_{\Gamma_{\Neumann}} w \smile g_{\Neumann})[\Gamma_{\Neumann}] \qquad
      && [\amount \time^{-1}], \\
      & F_1 \colon C^0 \mathcal{K} \to \R, \quad
      && F_1(w) := (w \smile f)[\mathcal{K}] \qquad
      && [\amount \time^{-1}], \\
      & F_2 \colon C^0 \mathcal{K} \to \R, \quad
      && F_2(w)
        := (\delta_0 w \smile (\widetilde{S_0} \smile v))[\mathcal{K}] \qquad
      && [\amount \time^{-1}].
    \end{alignat}
  \end{subequations}
  The unknown variable is the potential
  $u [\potential] \in \mathcal{C}^\infty(I, C^0 \mathcal{K})$.
  We are solving the following system for $u$:
  \begin{subequations}
    \label{eq:discrete/primal_weak/transient_formulation}
    \begin{alignat}{4}
      & \forall w [\potential] \in \Ker \tr_{\Gamma_{\Dirichlet}, 0}, \quad
      && B(w, \frac{\partial u} {\partial t}) + A_{\diffusive}(w, u) - A_{\advective}(w, u)
      && = F_1(w) + F_2(w) - G(w) \qquad
      && [\amount \time^{-1} \potential], \\
      &
      && \tr_{\Gamma_{\Dirichlet}, 0}(u)
      && = g_{\Dirichlet} \qquad
      && [\potential], \\
      &
      && u(t_0)
      && = u_0 \qquad
      && [\potential].
    \end{alignat}
  \end{subequations}
  The flow rate $q [\amount \time^{-1}] \in \mathcal{C}^\infty(I, C^{D - 1} \mathcal{K})$
  is calculated in the post-processing phase as follows: for any $t \in I,\ c \in \mathcal{K}_{D - 1}$,
  \begin{equation}
    \label{eq:discrete/primal_weak/transient_post_processing}
    q(t, c_\bullet) =
    \begin{cases}
      (-\star_1\widetilde{\kappa} \delta_0 u + (\widetilde{S_0} + \widetilde{\pi} u) \smile v)(t, c_\bullet),
        & c \notin (\Gamma_\Neumann)_{D - 1} \\
      g_{\Neumann}(t, c_\bullet), & c \in (\Gamma_\Neumann)_{D - 1}
    \end{cases}.
  \end{equation}
\end{formulation}

\begin{formulation}[CMC steady-state primal weak formulation for transport phenomena]
  \label{formulation:discrete/primal_weak/steady_state}
  Take the time-independent versions of the assumptions in \Cref{notation:discrete/parameters} and the operators in \Cref{eq:discrete/primal_weak/operators}.
  The unknown variable is the potential
  $u [\potential] \in C^0 \mathcal{K}$.
  We are solving the following system for $u$:
  \begin{subequations}
    \label{eq:discrete/primal_weak/steady_state_formulation}
    \begin{alignat}{4}
      & \forall w [\potential] \in \Ker \tr_{\Gamma_{\Dirichlet}, 0}, \quad
      && A_{\diffusive}(w, u) - A_{\advective}(w, u)
      && = F_1(w) + F_2(w) - G(w) \qquad
      && [\amount \time^{-1} \potential], \\
      &
      && \tr_{\Gamma_{\Dirichlet}, 0}(u)
      && = g_{\Dirichlet} \qquad
      && [\potential].
    \end{alignat}
  \end{subequations}
  The flow rate $q [\amount \time^{-1}] \in C^{D - 1} \mathcal{K}$
  is calculated in the post-processing phase as follows: for any $c \in \mathcal{K}_{D - 1}$,
  \begin{equation}
    \label{eq:discrete/primal_weak/steady_state_post_processing}
    q(c_\bullet) =
    \begin{cases}
      (-\star_1 \widetilde{\kappa} \delta_0 u + \widetilde{Q_0} \smile v)(c_\bullet),
        & c \notin (\Gamma_\Neumann)_{D - 1} \\
      g_{\Neumann}(c_\bullet), & c \in (\Gamma_\Neumann)_{D - 1}
    \end{cases}.
  \end{equation}
\end{formulation}

\subsection{Mixed weak formulation}
\label{subsec:discrete/mixed_weak}

\begin{formulation}[CMC transient mixed weak formulation for transport phenomena]
  \label{formulation:discrete/mixed_weak/transient}
  The following formulation is a discrete version of \Cref{formulation:continuum/mixed_weak/transient}.
  Under the assumptions of \Cref{notation:discrete/parameters}
  define the following operators
  \begin{subequations}
    \label{eq:discrete/mixed_weak/operators}
    \begin{alignat}{3}
      & A \colon C^{D - 1} \mathcal{K} \times \mathcal{C}^\infty(I, C^{D - 1} \mathcal{K}) \to \R,
        \;
      && A(r, s)
        := \inner{r}{\kappa^{-1} s}_{\mathcal{K}, D - 1} \;
      && [\amount^{-1} \time \potential], \\
      & B_{\diffusive} \colon C^D \mathcal{K} \times \mathcal{C}^\infty(I, C^{D - 1} \mathcal{K}) \to \R, \;
      && B_{\diffusive}(\widetilde{w}, r)
        := \inner{d_{D - 1} r}{\widetilde{w}}_{\mathcal{K}, D} \;
      && [\length^{-D}], \\
      & B_{\advective} \colon C^D \mathcal{K} \times \mathcal{C}^\infty(I, C^{D - 1} \mathcal{K}) \to \R, \;
      && B_{\advective}(\widetilde{w}, r)
        := \inner{r}{\kappa^{-1} (\star_D \pi \widetilde{w} \smile v)}_{\mathcal{K}, D - 1} \;
      && [\length^{-D}], \\
      & C \colon C^D \mathcal{K} \times \mathcal{C}^\infty(I, C^D \mathcal{K}) \to \R, \;
      && C(\widetilde{w}, \widetilde{u}) := \inner{\pi \widetilde{u}}{\widetilde{w}}_{\mathcal{K}, D} \;
      && [\amount \length^{-2D} \potential^{-1}], \\
      & G_1 \colon C^{D - 1} \mathcal{K} \to \R, \;
      && G_2(r)
        := (\tr_{\Gamma_{\Dirichlet}, D - 1} r \smile g_{\Dirichlet})[\Gamma_{\Dirichlet}] \;
      && [\potential], \\
      & G_2 \colon C^{D - 1} \mathcal{K} \to \R, \
      && G_2(r) := \inner{r}{\kappa^{-1} (\widetilde{S_0} \smile v)}_{\mathcal{K}, D - 1} \
      && [\potential], \\
      & F \colon C^D \mathcal{K} \to \R, \;
      && F(\widetilde{w}) := \inner{f}{\widetilde{w}}_{\mathcal{K}, D} \;
      && [\amount \time^{-1} \length^{-D}], \\
      & \mathtt{flow\_rate} \colon C^0 \mathcal{K} \to C^{D - 2} \mathcal{K}, \;
      && \mathtt{flow\_rate}(w) := \kappa \delta^\star_D w + (\widetilde{S_0} + \tilde{\pi} w) \smile v\;
      && [\amount \time^{-1} \potential^{-1}].
    \end{alignat}
  \end{subequations}
  The unknown variables are:
  \begin{itemize}
    \item
      $q [\amount \time^{-1}] \in \mathcal{C}^\infty(I, C^{D - 1} \mathcal{K})$ (flow rate);
    \item
      $\widetilde{u} [\potential \length^D] \colon \mathcal{C}^\infty(I, C^D \mathcal{K})$ (dual potential).
  \end{itemize}
  We are solving the following problem for $q$ and $\widetilde{u}$:
  \begin{subequations}
    \label{eq:discrete/mixed_weak/transient_formulation}
    \begin{alignat}{4}
      & \forall r [\amount \time^{-1}] \in \Ker \tr_{\Gamma_{\Neumann}, D - 1}, \;
      && A(r, q) - B_{\diffusive}^T(r, \widetilde{u}) - B_{\advective}^T(r, \widetilde{u})
      && = G_2(r) - G_1(r) \;
      && [\amount \time^{-1} \potential], \\
      & \forall \widetilde{w} [\potential \length^D] \in C^D \mathcal{K}, \;
      && - B_{\diffusive}(\widetilde{w}, q) - C(\widetilde{w}, \frac{\partial \widetilde{u}}{\partial t})
      && = - F(\widetilde{w}) \;
      && [\amount \time^{-1} \potential], \\
      &
      && \tr_{\Gamma_{\Neumann}, D - 1} q
      && = g_{\Neumann} \;
      && [\amount \time^{-1}], \\
      &
      && \widetilde{u}(t_0)
      && = \star_0 u_0 \;
      && [\potential \length^D], \\
      &
      && q(t_0)
      && = \mathtt{flow\_rate}(u_0) \;
      && [\amount \time^{-1}].
    \end{alignat}
  \end{subequations}
  The potential $u [\potential] \in \mathcal{C}^\infty(I, C^0 \mathcal{K})$ is calculated in the post-processing phase as follows: for any $t \in I,\ c \in \mathcal{K}_0$,
  \begin{equation}
    \label{eq:discrete/mixed_weak/transient_post_processing}
    u(t, c_\bullet) :=
    \begin{cases}
      u_0(c_\bullet), & t = t_0 \\
      (\star_D \widetilde{u})(t, c_\bullet), & t > t_0\ \text{and}\ c \notin (\Gamma_{\Dirichlet})_0 \\
      g_{\Dirichlet}(t, c_\bullet), & t > t_0\ \text{and}\ c \in (\Gamma_{\Dirichlet})_0
    \end{cases}.
  \end{equation}
\end{formulation}
\begin{formulation}[CMC steady-state mixed weak formulation for transport phenomena]
  \label{formulation:discrete/mixed_weak/steady_state}
  Take the time-independent versions of the assumptions in \Cref{notation:discrete/parameters} and the operators in \Cref{eq:discrete/mixed_weak/operators}.
  The unknown variables are:
  \begin{itemize}
    \item
      $q [\amount \time^{-1}] \in C^{D - 1} \mathcal{K}$ (flow rate);
    \item
      $\widetilde{u} [\potential \length^D] \in C^D \mathcal{K}$ (dual potential).
  \end{itemize}
  We are solving the following problem for $q$ and $\widetilde{u}$:
  \begin{subequations}
    \label{eq:discrete/mixed_weak/steady_state_formulation}
    \begin{alignat}{4}
      & \forall r [\amount \time^{-1}] \in \Ker \tr_{\Gamma_{\Neumann}, D - 1}, \;
      && A(r, q) - B_{\diffusive}^T(r, \widetilde{u}) - B_{\advective}^T(r, \widetilde{u})
      && = G_2(r) - G_1(r) \;
      && [\amount \time^{-1} \potential], \\
      & \forall \widetilde{w} [\potential \length^D] \in C^D \mathcal{K}, \;
      && - B_{\diffusive}(\widetilde{w}, q)
      && = - F(\widetilde{w}) \;
      && [\amount \time^{-1} \potential], \\
      &
      && \tr_{\Gamma_{\Neumann}, D - 1} q
      && = g_{\Neumann} \;
      && [\amount \time^{-1}].
    \end{alignat}
  \end{subequations}
  The potential $u [\potential] \in C^0 \mathcal{K}$ is calculated in the post-processing phase as follows: for any $c \in \mathcal{K}_0$,
  \begin{equation}
    \label{eq:discrete/mixed_weak/steady_state_post_processing}
    u(c_\bullet) :=
    \begin{cases}
      (\star_D \widetilde{u})(c_\bullet), &  c \notin (\Gamma_{\Dirichlet})_0 \\
      g_{\Dirichlet}(c_\bullet), & c \in (\Gamma_{\Dirichlet})_0
    \end{cases}.
  \end{equation}
\end{formulation}

\subsection{Comparison with other methods}

\begin{discussion}
  Two methods that are notoriously based on differential geometry and meshes (i.e, forms and cochains) are discrete exterior calculus (DEC, \cite{hirani2003discrete}) and finite element exterior calculus (FEEC, \cite{arnold2018finite}).

  Recent work has clarified the close relationship between discrete exterior
  calculus and finite element exterior calculus through generalised Whitney
  forms \cite{guzman2025framework}.
  The present framework aligns with this perspective while emphasising a
  combinatorial organisation of variables and operators, together with a
  physically grounded interpretation and workflow oriented toward multi-physics
  coupling on cell complexes.

  FEEC is a formalism for creating finite element spaces in a structured way based on smooth exterior calculus.
  It is purely a discretisation method and cannot accommodate material properties of microstructural features of different dimensions.
  Polynomial finite elements are generally constructed via polynomial mappings of reference elements and cannot handle general curvilinear meshes, although a large class of such elements can be handled with polynomial approximations \cite{botti2018assesment}.
  Virtual finite elements, on the other hand, can handle general shapes \cite{beirao2019virtual}.

  In CMC the shapes of the cells do not matter for calculations as all the required information is encoded in the face lattice, relative orientations and cell measures -- everything else is canonically represented.
  The shapes may matter in the pre-processing phase if the discrete input comes from a continuous problem.
  The embedding is also required for visualisation.
\end{discussion}

\section{Simulation examples}
\label{sec:simulations}

\begin{discussion}
  Although the aim of this article is to develop intrinsic spatial-discrete formulations for transport phenomena, we will verify the proposed CMC formulations by comparing them to formulations with exact solutions in the continuum, as if the discrete formulations were discretisations.
  Indeed, we first introduced the necessary apparatus of exterior calculus on manifolds and used it to formulate strong, primal weak and mixed weak formulations.
  Then we introduced an analogous apparatus on meshes (CMC), made the connection with exterior calculus, and used it to formulate analogous discrete formulations.
  Since we are not doing a discretisation, we could directly derive strong and weak formulations.
  However, the reasons for our approach are the following:
  \begin{enumerate}
    \item
      exterior calculus is a well established mathematical theory, and so is instructive to relate to it; clearly, it has guided us in the development of the discrete theory;
    \item
      we could pose problems with manufactured solutions using exterior calculus, and pre-process (discretise) them for input to discrete problems.
      We can them compare the discrete solutions with the discretised versions of the continuum solutions.
    \item
      since Hodge stars $\star_p$ and $\star_{D - p}$ do not cancel (up to a sign $(-1)^{p (D - p)}$) in the discrete case, we would have to keep Hodge stars in the bilinear forms.
      Going through exterior calculus formulations and translating the resulting final forms allows us to motivate not using Hodge star.
      Non-invertibility and rank deficiency of discrete Hodge operators are
      common features of structure-preserving discretisations
      \cite{hiptmair2001discrete}, including FEEC, and should be regarded as
      intrinsic rather than pathological.
  \end{enumerate}
\end{discussion}
\begin{discussion}[Pre-processing of continuous input]
  \label{discussion:simulations/pre_processing}
  Consider a continuous steady-state problem
  (\Cref{formulation:continuum/primal_weak/steady_state} or
  \Cref{formulation:continuum/mixed_weak/steady_state})
  defined on a $D$-dimensional manifold $M$,
  with $\partial M = \Gamma_\Dirichlet \cup \Gamma_\Neumann$
  (\Cref{eq:continuum/boundary_decomposition}),
  and input data $(\kappa, f, g_\Dirichlet, g_\Neumann)$
  (time-independent versions from \Cref{tab:continuum/parameters}).
  Consider a mesh $\mathcal{M}$ that realises $M$ through an embedding
  $\varphi \colon \mathcal{M} \to \mathcal{P}(M)$,
  such that all of its boundary cells lie in full in
  $\Gamma_\Dirichlet$ or $\Gamma_\Neumann$,
  i.e., it is not possible for a cell to have non-empty intersection with both of
  $\Gamma_\Dirichlet$ and $\Gamma_\Neumann$,
  while not being on their common boundary.
  Take the Forman decomposition $\mathcal{K}$ of $\mathcal{M}$, and by
  $\Gamma_\Dirichlet^\mathcal{K}$ and $\Gamma_\Neumann^\mathcal{K}$
  denote the discrete versions of
  $\Gamma_\Dirichlet$ and $\Gamma_\Neumann$ respectively.
  Assume that $\kappa$ is constant everywhere, so its discrete counterpart $\kappa^\mathcal{K}$ has the same value on all $(D - 1)$-cells
  (or $\widetilde{\kappa}^\mathcal{K}$ on $1$-cells).
  $f$, $v$, $g_\Dirichlet,$ and $g_\Neumann$ are discretised using the corresponding de Rham maps:
  \begin{equation}
    f^\mathcal{K} = R_{\mathcal{K}, D}(f),\
    v^\mathcal{K} = R_{\mathcal{K}, D - 1}(v),\
    g_\Dirichlet^\mathcal{K} = R_{\Gamma_\Dirichlet^\mathcal{K}, 0}(g_\Dirichlet),\
    g_\Neumann^\mathcal{K} = R_{\Gamma_\Neumann^\mathcal{K}, D - 1}(g_\Neumann).
  \end{equation}
  Hence, we get the following discrete input for
  \Cref{formulation:discrete/primal_weak/steady_state} and
  \Cref{formulation:discrete/mixed_weak/steady_state}:
  \begin{equation}
    (\mathcal{K}, \Gamma_\Dirichlet^\mathcal{K}, \Gamma_\Neumann^\mathcal{K}, \kappa^\mathcal{K}, f^\mathcal{K}, g_\Dirichlet^\mathcal{K}, g_\Neumann^\mathcal{K}).
  \end{equation}
  In the examples, $v^\mathcal{K}$ is assumed to be $0$, i.e., we will not include advective terms.
\end{discussion}
\begin{discussion}[Translating densities to differential forms]
  In this article we have done all continuous modelling in the language of smooth exterior calculus which is better suited to be mimicked in a discrete formulation, and as a source of continuous input.
  We will not discuss in detail the full translation of smooth exterior calculus formulations into the familiar vector calculus ones.
  However, the input parameters, represented as differential forms, may be supplied as densities for which we have to do a continuous-to-continuous pre-processing before doing the continuous-to-discrete pre-processing.

  If internal production rate is supplied as a density, i.e., as $0$-form
  $\tilde{f} [\amount \time^{-1} \length^{- D}]$,
  we define $f := \star_0 \tilde{f}$.

  If the prescribed flow rate is given as a density (a $0$-form)
  $\widetilde{g_\Neumann} [\amount \time^{-1} \length^{1 - D}]$ on $\Gamma_\Neumann$,
  we define
  $g_\Neumann := \star_{\Gamma_\Neumann, 0} \widetilde{g_\Neumann}$.

  Pre-processing vector fields is a bit more involved, e.g., as in the case of the volumetric flow rate $v$ when advection is present.
  Assume instead of $v$ we have the flow velocity vector field ${\bf v}$, usually stated in Cartesian coordinates by its velocity components, i.e.,
  \begin{equation}
    {\bf v} = \sum_{p = 1}^D v_p \frac{\partial}{\partial x_p},\ v_p [\length \time^{-1}].
  \end{equation}
  Because
  $[[\frac{\partial}{\partial x_p}]] = \length^{-1}$
  (since $[[d x^p]] = \length$ and vector fields are dual to forms),
  $[[{\bf v}]] = \time^{-1}$.
  (This may sound counter-intuitive but note that the \emph{components} $\{v_p\}_{p = 1}^{D}$ of ${\bf v}$ have the dimension of speed, $\length \time^{-1}$.)
  Next, define the \newterm{flat operator} (one of the \newterm{musical isomorphisms}, the other one being its inverse, the \newterm{sharp operator}),
  $\flat \colon \mathcal{X} M \to \Omega^1 M$,
  as follows: for any $X, Y \in \mathcal{X} M$,
  \begin{equation}
    (X^\flat)(Y) = g(X, Y),
  \end{equation}
  whose dimension is the same as $g$, i.e, $\length^2$.
  Hence, we get the volumetric flow rate density $\tilde{v} [\length^2 \time^{-1}] := {\bf v}^\flat.$
  Finally, we apply the Hodge star to get the actual volumetric flow rate $v [\length^D \time^{-1}]$, i.e.,
  $v := \star_1 \tilde{v}$.
\end{discussion}
\begin{discussion}[Solving discrete systems]
  We can solve the resulting problem
  (primal: \Cref{eq:discrete/primal_weak/steady_state_formulation},
  or mixed: \Cref{eq:discrete/mixed_weak/steady_state_formulation})
  with standard methods to get
  the potential $u^\mathcal{K} \in C^0 \mathcal{K}$ and
  the flow rate $q^\mathcal{K} \in C^{D - 1} \mathcal{K}$.
  Indeed, the operators and boundary conditions involved have the same forms for all discrete or discretised weak formulations, and so the LHS matrices, RHS vectors, and boundary constraints have the same form
  (the details can be found in \cite[Equations 2.7--2.10, 2.21--2.24]{berbatov2021guide}).
  However, one important feature of the mixed weak CMC formulation presented here is that the matrix $A$ is diagonal, and so the straightforward solution method of variable elimination works without introducing dense matrices, thus transforming a mixed LHS matrix to a a sparse symmetric positive definite one.
  Indeed, the mixed system is represented in the form $A q - B^T u = - g,\ - B q = - f$, and we can express $q$ from the first equation by $q = A^{-1}(-g + B^T u)$, which leads to
  \begin{equation}
    - B A^{-1}(-g + B^T u) = - f \Rightarrow B A^{-1} B^T u = B A^{-1} g + f,
  \end{equation}
  and the matrix $B A^{-1} B^T$ is sparse (since $A$ is diagonal) and positive definite.
  Hence, both primal and mixed formulations can be solved using sparse Cholesky decomposition.

  While similar block-diagonal structures arise in hybridised mixed finite
  element methods (\cite[Section 7.2]{boffi2013mixed},
  \cite[Introduction]{arnold1985mixed}), in the present framework this structure
  follows directly from combinatorial inner products, without the introduction
  of additional hybridisation variables.
\end{discussion}
\begin{discussion}[Post-processing and comparison with exact continuous solutions]
  $u^\mathcal{K}$ is visualised trivially by applying it to the nodes of the mesh and creating a heat map.
  To visualise the $q^\mathcal{K}$ on a $(D - 1)$-cell $S$ we proceed as follows.
  Let $S$ be boundary of $V_-$ and $V_+$ (if $S$ is on the boundary, we assume a ghost cell as the background).
  The direction of the flow is from $V_+$ to $V_-$ if $q^\mathcal{K}(S) > 0$, from $V_-$ to $V_+$ if $q^\mathcal{K}(S) < 0$, and there is no flow if $q^\mathcal{K}(S) = 0$.
  The flow rate is then drawn as an arrow in the direction of the flow, with magnitude given by the heat map value of $q^\mathcal{K}(S)$ (no normalisation, e.g., dividing it by its measure $\mu_{D - 1}(S)$, is done).
  No arrow is drawn if the flow is zero.
  We use a rainbow colour scheme, shown on \Cref{figure:colorbar/horizontal} (representing lowest value in red and highest value in magenta).

  The continuous solution $(u, q)$ is discretised into $(u', q')$ by the de Rham maps $R_0$ and $R_{D - 1}$, and then visualised the same way as $(u^\mathcal{K}, q^\mathcal{K})$.
  For both formulations and both variables we compute global relative errors with respect to the Euclidean norm
  $\norm{(x_1, ..., x_n)}_2 := (x_1^2 + ... + x_n^2)^{1 / 2}$, i.e.,
  \begin{equation}
    u_{\rm rel} := \frac{\norm{u^\mathcal{K} - u'}_2}{\norm{u'}_2},\
    q_{\rm rel} := \frac{\norm{q^\mathcal{K} - q'}_2}{\norm{q'}_2}.
  \end{equation}
\end{discussion}
\begin{figure}
  \centering
  \includegraphics[width=\linewidth, keepaspectratio]{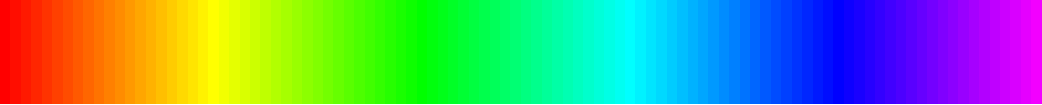}
  \caption{Rainbow colour scheme used for visualisation of potentials and flow rates}
  \label{figure:colorbar/horizontal}
\end{figure}
\begin{discussion}
  We will consider four examples for the steady-state formulation, that will illustrate the CMC formulations for various problems with prescribed exact solutions (we start with desired potential and flow rate and derive the input data that leads to these solutions).
  In all examples the conductivity will be constant across their domains, and there will be no advective terms.
  \begin{enumerate}
    \item
      In \Cref{sec:simulations/cube} we will consider a $3$D example for a simple problem with exact potential on the unit cube.
      Application of the CMC formulations to more complex $3$D domains is left for future modelling of real microstructures.
    \item
      In \Cref{sec:simulations/disk} we will simulate a problem on a disk with a quadratic potential.
      We will be using regular meshes (concentric circles intersected by equally spaced rays), but will illustrate the natural handling of curved cells.
      While classical finite element and isogeometric methods can represent
      curved geometries through isoparametric or spline-based constructions,
      the present approach treats curved cells directly at the level of the cell
      complex, without reference to polynomial reconstruction spaces.
    \item
      The example in \Cref{sec:simulations/hemisphere} is a further complication of the previous one -- this time the whole space is curved, being a hemisphere.
      Again, we will be working with the sphere as as a curved manifold.
      We will calculate the metric tensor and impose input parameters in spherical coordinates, and discretise the input parameters for the discrete formulations.
    \item
      In \Cref{sec:simulations/neper} we will simulate a problem with linear potential (and, hence, a constant flow rate) on a rectangle.
      We will, however test the CMC formulations on an irregular mesh generated by Neper \cite{quey2011large}, \url{https://neper.info}.
  \end{enumerate}
  The code for these examples (and many more) can be found at \url{https://github.com/kipiberbatov/cmc}.
\end{discussion}

\subsection{Quadratic potential on the unit cube}
\label{sec:simulations/cube}
\phantom{T}

\begin{example}
  Let $M = [0, 1]^3$ be the unit cube with the Euclidean metric.
  Consider an exact potential
  \begin{equation}
    u \colon M \to \R,\ u(x, y, z) = x^2 + y^2 + z^2.
  \end{equation}
  Take $\kappa \equiv 2$.
  Then the flow rate is given by
  \begin{equation}
    q = -2 \star_1 d_0 u
    = -2 \star_1(2 x\, d x + 2 y\, d y + 2 z\, d z)
    = -4(x\, d y \wedge d z + y\, d z \wedge d x + z\, d x \wedge d y).
  \end{equation}
  The internal production rate is
  \begin{equation}
    f = d q = -12\, d x \wedge dy \wedge d z.
  \end{equation}
  Let the Dirichlet boundary $\Gamma_\Dirichlet$ consists of the front, back, top, and bottom faces, while the Neumann boundary $\Gamma_\Neumann$ consists of the left and right faces, i.e.,
  \begin{equation}
    \Gamma_\Dirichlet = [0, 1] \times \{0, 1\} \times [0, 1] \cup [0, 1] \times [0, 1] \times \{0, 1\},\ \Gamma_\Neumann = \{0, 1\} \times [0, 1] \times [0, 1].
  \end{equation}
  The prescribed potential and flow rate are equal to
  \begin{equation}
    g_\Dirichlet(x, y, z)
    = (\tr_{\Gamma_\Dirichlet, 0}(u))(x, y, z)
    =
    \begin{cases}
      x^2 + z^2, & y = 0 \\
      x^2 + z^2 + 1, & y = 1 \\
      x^2 + y^2, & z = 0 \\
      x^2 + y^2 + 1, & z = 1
    \end{cases},\
    g_\Neumann = \tr_{\Gamma_\Neumann, 1}(q) =
    \begin{cases}
      0, & x = 0 \\
      -4 d y \wedge d z, & x = 1
    \end{cases}.
  \end{equation}
  The relative errors are:
  0
   for primal weak potential;
  0.0467428
   for mixed weak potential;
  0.129099
   for primal weak flow rate;
  7.2207e-16
   for mixed weak flow rate.

  An interesting feature of the solution to the discrete primal weak formulation on a regular grid is that it is exact for all potentials of degree at most two, achieving this with the lowest amount of degrees of freedom, that is only nodal ones.
  (The reported nonzero error for primal weak flow rate is due to imperfect post-processing on the boundary.)
  Indeed, the same cannot be said for finite differences (because of Neumann boundary conditions have precision only $O(h)$) and lowest order (Lagrange) finite elements (they are exact for products of linear polynomials of the space coordinates).
  Similarly, mixed flow rate perfectly approximates the exact flow rate.
\end{example}

\subsection{Quadratic potential on a 2D disk}
\label{sec:simulations/disk}

\begin{example}
  Let $M = \set{(x, y) \in \R^2}{x^2 + y^2 \leq 1}$ be the unit disk with the Euclidean metric.
  Consider an exact potential
  \begin{equation}
    u \colon M \to \R,\ u(x, y) = x^2 + y^2.
  \end{equation}
  Take $\kappa \equiv 1$.
  Then the flow rate is given by
  \begin{equation}
    \label{eq:simulations/disk/flow_rate}
    q
    = - \star_1 d_0 u
    = - \star_1(2 x\, d x + 2 y\, d y)
    = 2 y\, d x - 2 x\, d y.
  \end{equation}
  The internal production rate is
  \begin{equation}
    f
    = d q
    = d(2 y\, d x - 2 x\, d y)
    = 2\, d y \wedge d x - 2\, d x \wedge d y
    = - 4\, d x \wedge d y.
  \end{equation}
  The boundary of $M$ is the unit circle $\partial M = \set{(x, y) \in \R^2}{x^2 + y^2 = 1}$.
  Let the Dirichlet boundary $\Gamma_\Dirichlet$ be the right half-circle, while and the Neumann boundary $\Gamma_\Neumann$ be the left half-circle, i.e.,
  \begin{equation}
    \Gamma_\Dirichlet = \set{(x, y) \in \partial M}{x \geq 0},\ \Gamma_\Neumann = \set{(x, y) \in \partial M}{x \leq 0}.
  \end{equation}
  The prescribed potential equals to
  \begin{equation}
    g_\Dirichlet(x, y) = (\tr_{\Gamma_\Dirichlet, 0}(u))(x, y) = 1.
  \end{equation}
  We will describe the prescribed flow rate $g_\Neumann$ on $\Gamma_\Neumann$ in polar coordinates, i.e.,
  $(x, y) = (\cos \varphi, \sin \varphi),\ \varphi \in [\pi / 2, 3 \pi / 2]$.
  Let $\widetilde{g_\Neumann}$ be the prescribed flow rate in polar coordinates.
  Substituting $x = \cos \varphi$ and $y = \sin \varphi$ in \Cref{eq:simulations/disk/flow_rate} gives
  \begin{equation}
    \widetilde{g_\Neumann}(\varphi)
    = 2 \sin \varphi\, d(\cos \varphi) - 2 \cos \varphi\, d(\sin \varphi)
    = - 2 \sin^2 \varphi\, d \varphi - 2 \cos^2 \varphi\, d \varphi
    = -2\, d \varphi.
  \end{equation}
  A regular polar mesh $\mathcal{M}$ for $M$ is constructed by dividing domains $[0, 1]$ (of radius) and $[0, 2 \pi]$ (of rotation angle) into $4$ and $3$ equal parts respectively.
  The embedding of the Forman subdivision $\mathcal{K}$ is chosen so that the regularity of the mesh is preserved.
  $\mathcal{M}$ and $\mathcal{K}$ are shown on \Cref{figure:mesh/circular_4_3}.
  Exact solutions for the continuous problem on $M$ (discretised on $\mathcal{K})$ and solutions for the primal and mixed weak discrete formulations on $\mathcal{K}$ are visualised on \Cref{figure:diffusion/disk_quadratic}.
  The relative errors are:
  0.0243588
   for primal weak potential;
  0.0802977
   for mixed weak potential;
  0.0581986
   for primal weak flow rate;
  4.72913e-06
   for mixed weak flow rate.
\end{example}

\begin{figure}[!ht]
  \begin{subfigure}{.45\textwidth}
    \centering
    \includegraphics[width = \linewidth, keepaspectratio]{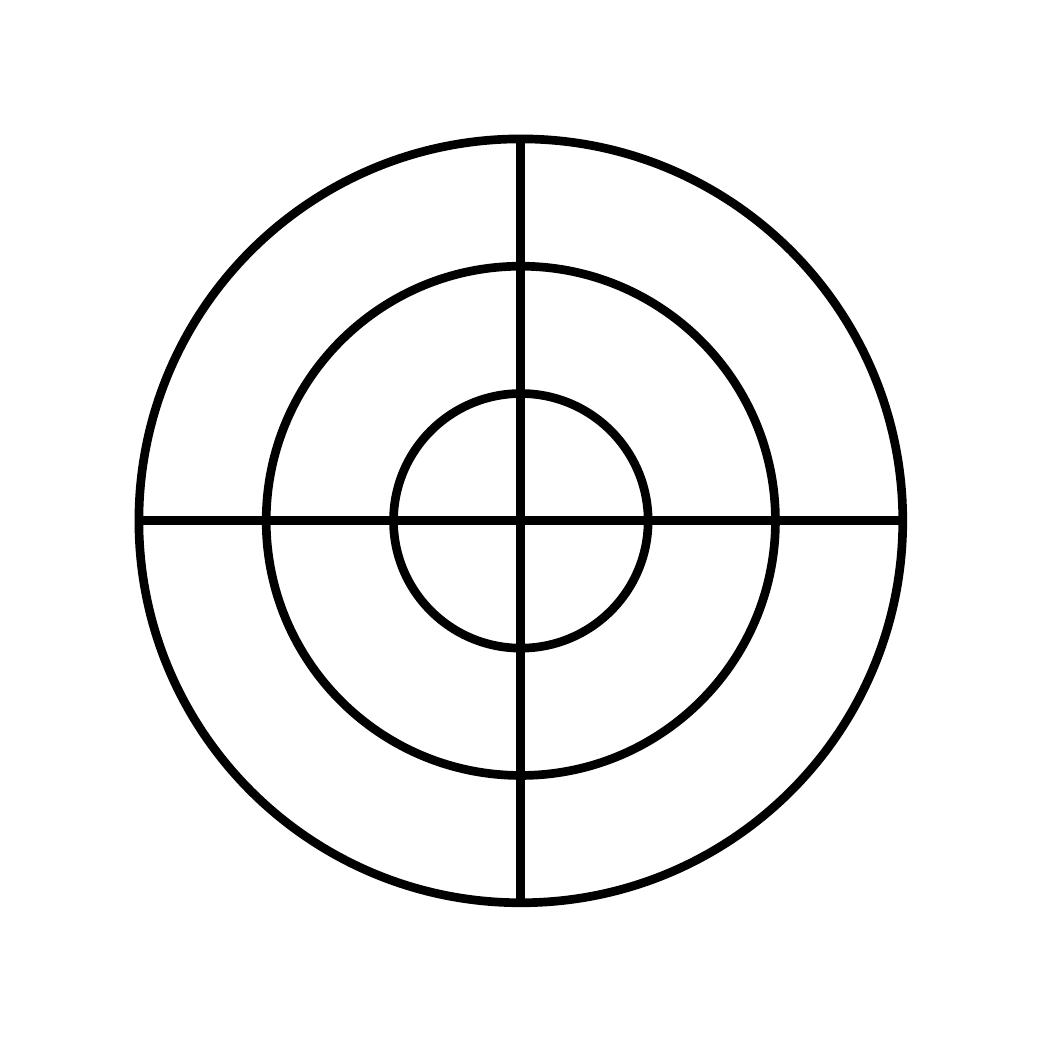}
    \caption{Regular polar mesh $\mathcal{M}$ for a circle}
  \end{subfigure}
  \begin{subfigure}{.45\textwidth}
    \centering
    \includegraphics[width = \linewidth, keepaspectratio]{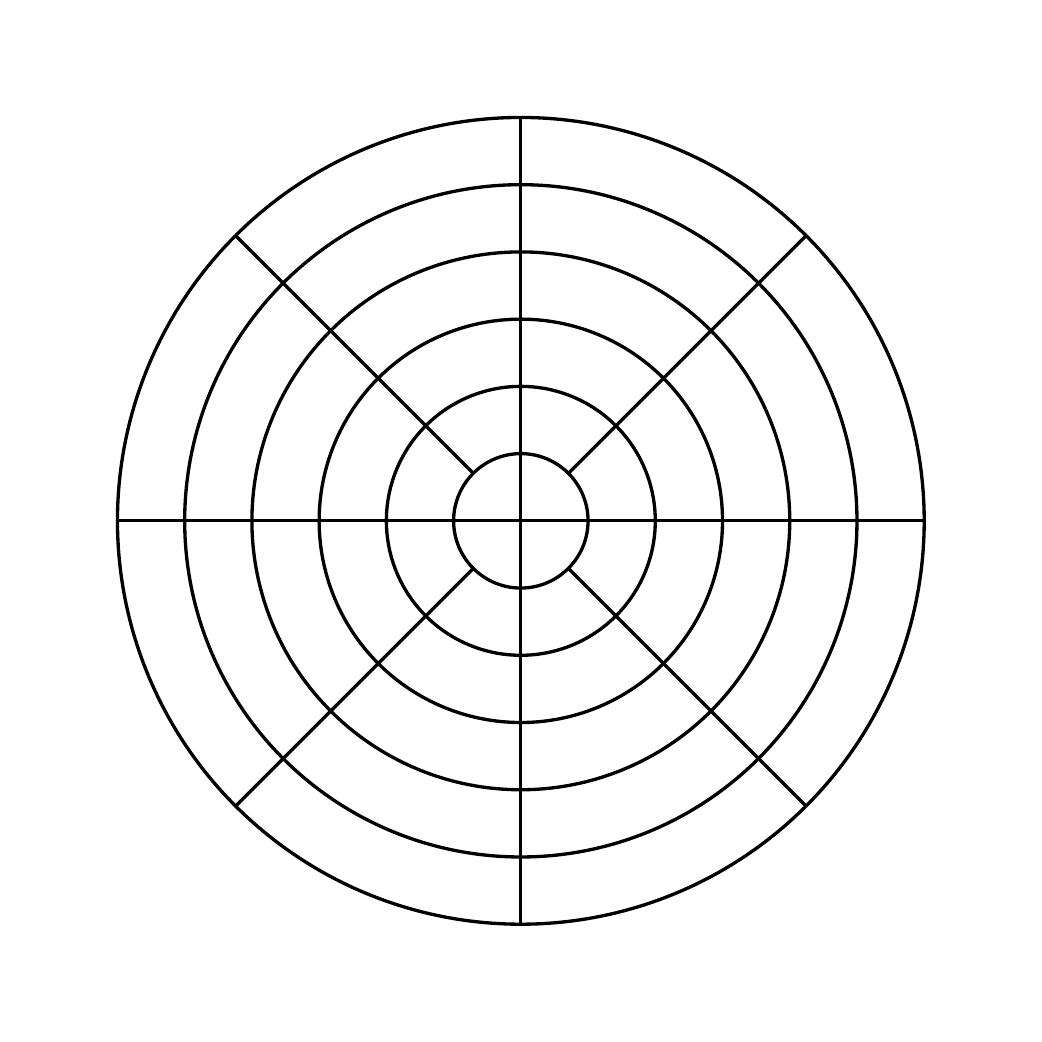}
    \caption{Forman subdivision $\mathcal{K}$ of $\mathcal{M}$}
  \end{subfigure}
  \caption{Regular curvilinear polar mesh on a disk and its Forman subdivision}
  \label{figure:mesh/circular_4_3}
\end{figure}

\begin{figure}[!ht]
  \begin{subfigure}{.3\textwidth}
    \centering
    \includegraphics[width = \linewidth, keepaspectratio]{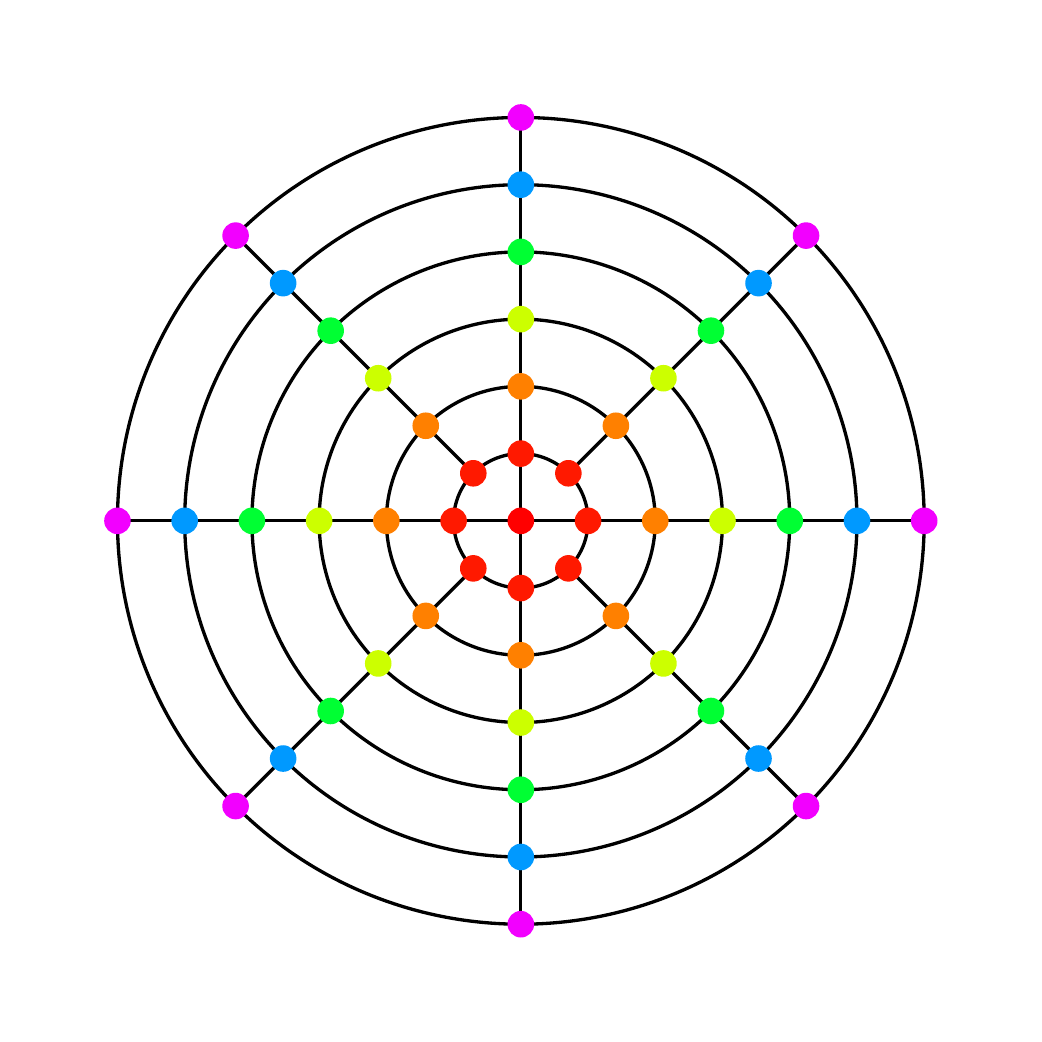}
    \caption{Exact continuous potential}
  \end{subfigure}
  \begin{subfigure}{.3\textwidth}
    \centering
    \includegraphics[width = \linewidth, keepaspectratio]{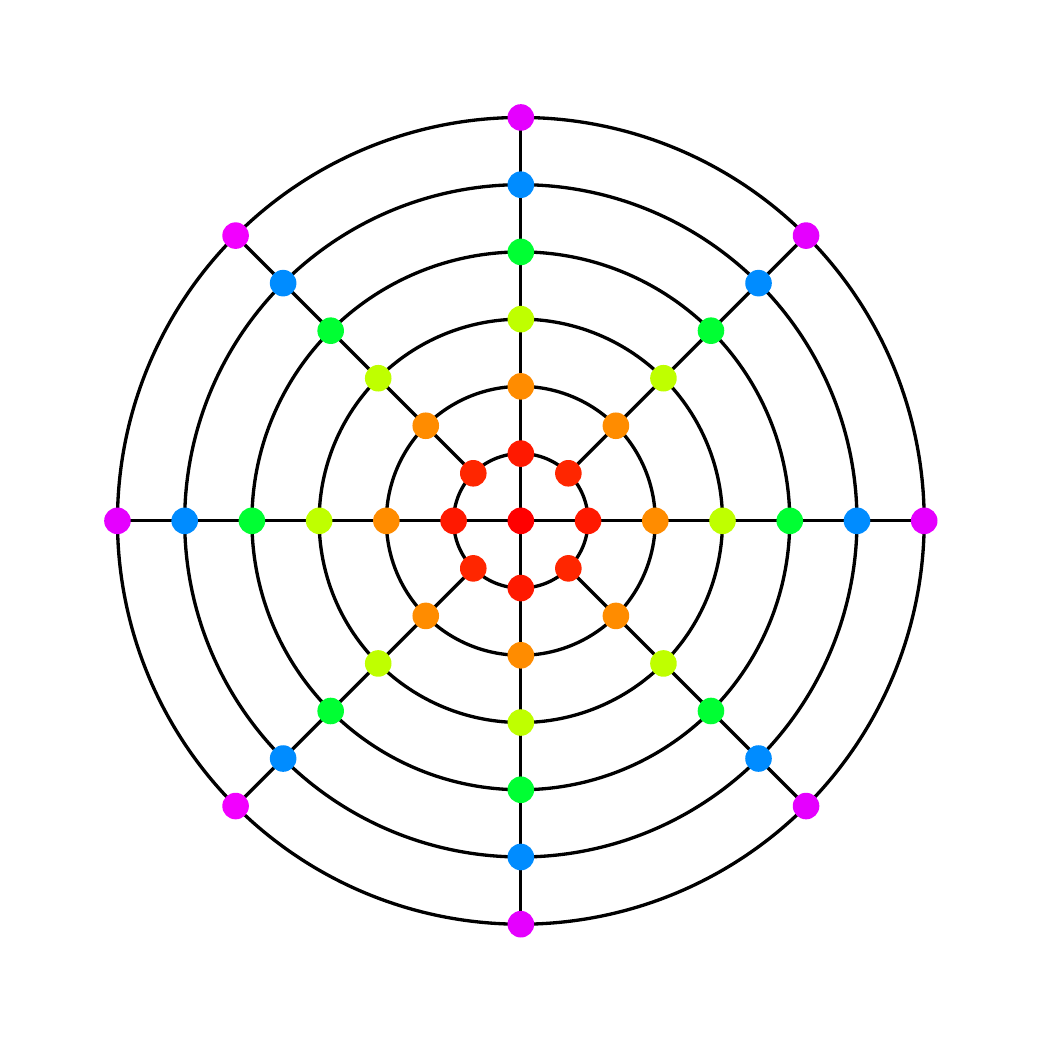}
    \caption{Discrete primal weak potential}
  \end{subfigure}
  \begin{subfigure}{.3\textwidth}
    \centering
    \includegraphics[width = \linewidth, keepaspectratio]{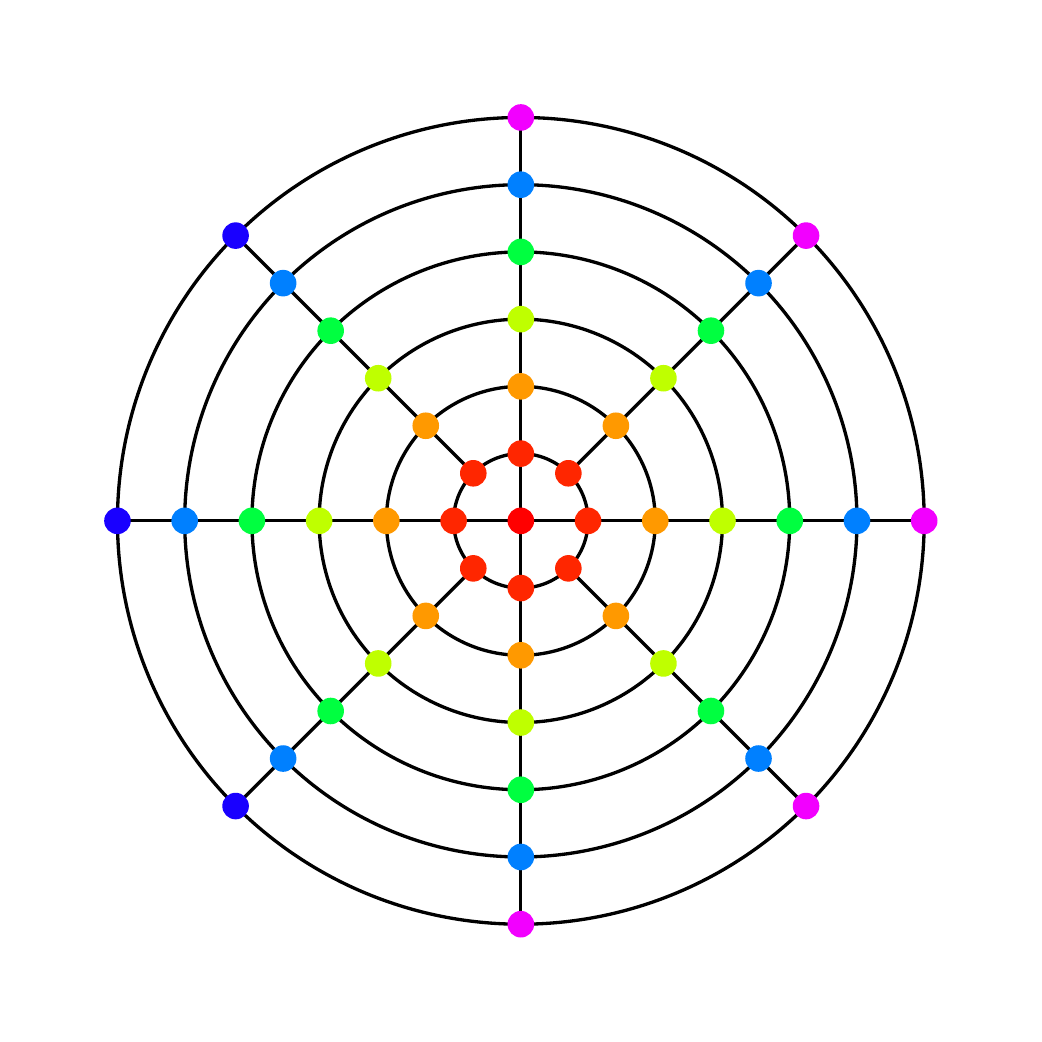}
    \caption{Discrete mixed weak potential}
  \end{subfigure}

  \begin{subfigure}{.3\textwidth}
    \centering
    \includegraphics[width = \linewidth, keepaspectratio]{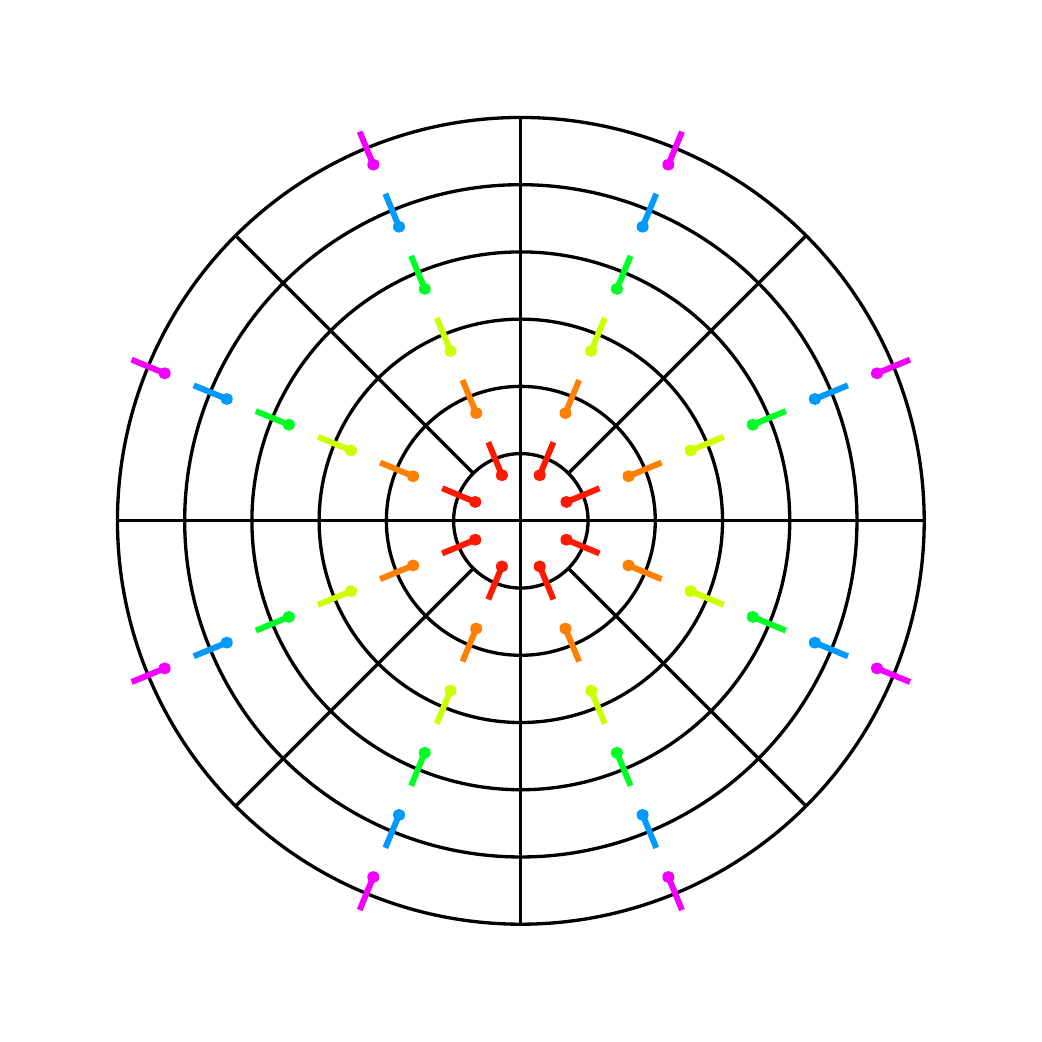}
    \caption{Exact continuous flow rate}
  \end{subfigure}
  \begin{subfigure}{.3\textwidth}
    \centering
    \includegraphics[width = \linewidth, keepaspectratio]{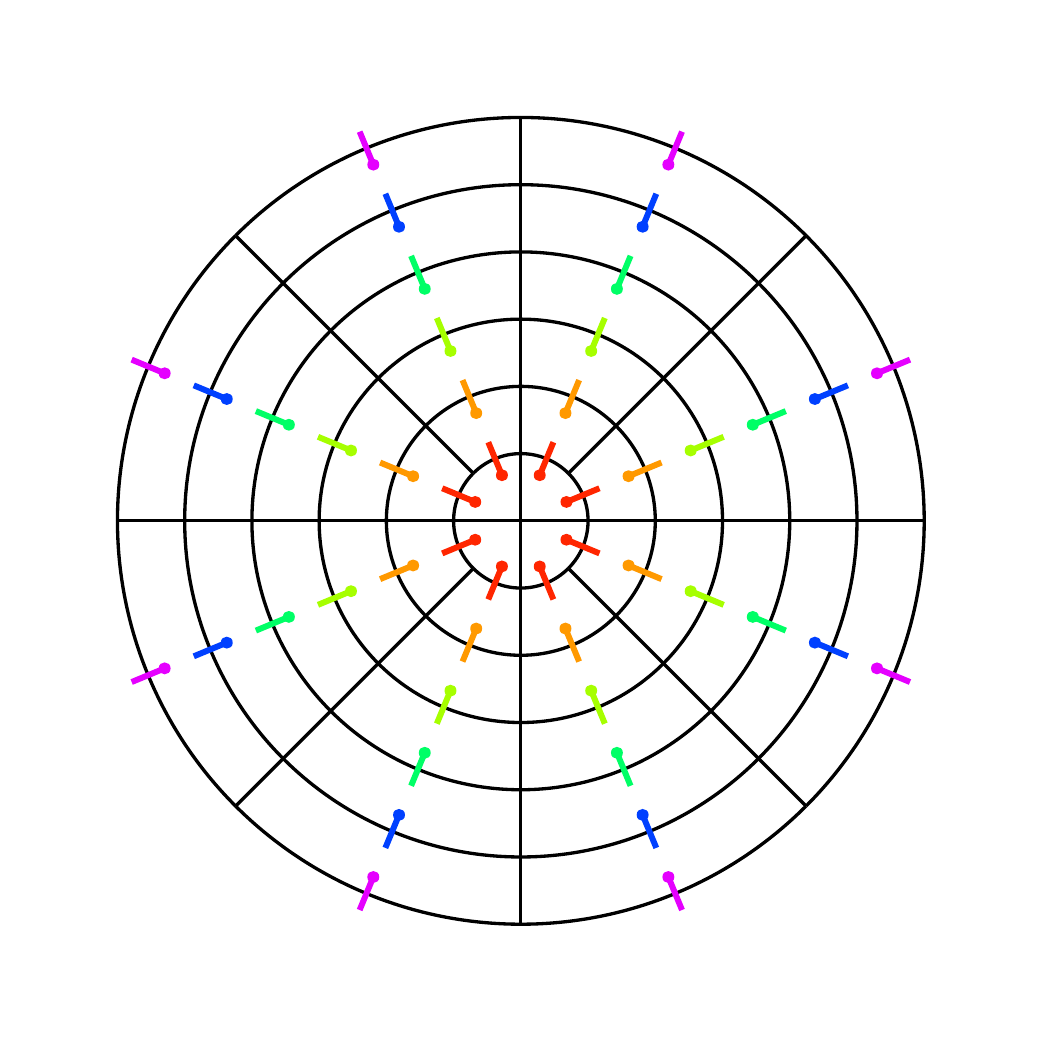}
    \caption{Discrete primal weak flow rate}
  \end{subfigure}
  \begin{subfigure}{.3\textwidth}
    \centering
    \includegraphics[width = \linewidth, keepaspectratio]{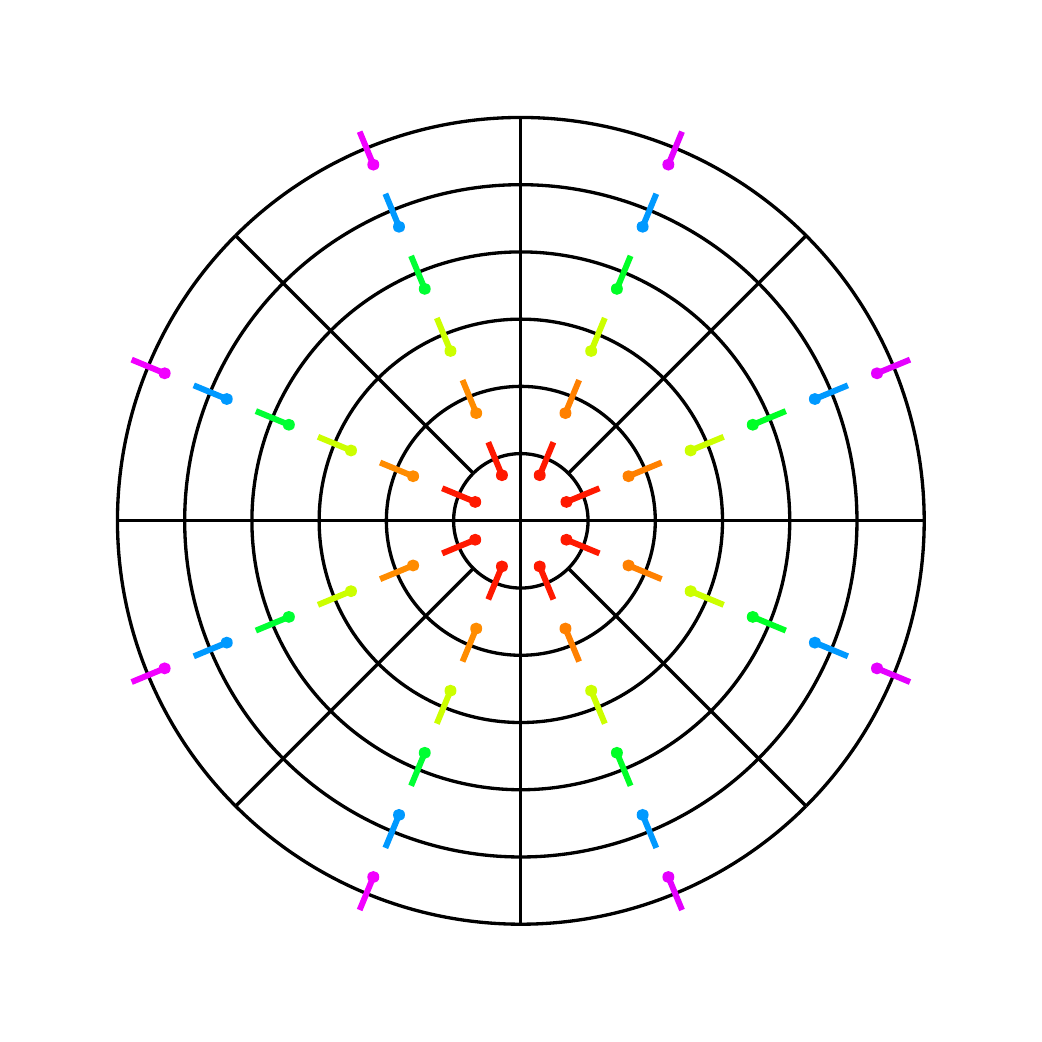}
    \caption{Discrete mixed weak flow rate}
  \end{subfigure}

  \caption{Solutions for diffusion with quadratic potential on a disk with a regular polar mesh $\mathcal{K}$}
  \label{figure:diffusion/disk_quadratic}
\end{figure}

\subsection{Linear (in spherical coordinates) potential on a hemisphere}
\label{sec:simulations/hemisphere}

\begin{discussion}
  Let $M = \set{(x, y, z)^3 \in \R^3}{x^2 + y^2 + z^2 = 1,\ z \geq 0}$ be the northern hemisphere, $g$ be the pullback metric of the Euclidean metric with respect to the inclusion map $\iota \colon M \to \R^3$.
  Introduce the spherical coordinates
  \begin{equation}
    (x, y, z) = (\sin \theta \cos \varphi, \sin \theta \sin \varphi, \cos \theta),\
    0 \leq \theta \leq \pi / 2,\
    0 \leq \varphi \leq 2 \pi.
  \end{equation}
  (Technically, there are singularities for $\theta = 0$ and $\varphi = 2 \pi$, but these will not cause problems.)
  Then,
  \begin{equation}
    \begin{split}
      g
      & = \iota^*(d x \otimes d x + d y \otimes d y + d z \otimes d z) \\
      & = (d(\sin \theta \cos \varphi))^2
        + (d(\sin \theta \sin \varphi))^2
        + (d(\cos \theta))^2 \\
      & = (\cos \theta \cos \varphi\, d \theta
           - \sin \theta \sin \varphi\, d \varphi)^2
          + (\cos \theta \sin \varphi\, d \theta
             + \sin \theta \cos \varphi\, d \varphi)^2
          + (- \sin \theta\, d \theta)^2 \\
      & = (d \theta)^2 + (\sin \theta)^2\, (d \varphi)^2.
    \end{split}
  \end{equation}
  We need the dual metric on forms which is given by
  \begin{equation}
    g^*_1 = \frac{\partial}{\partial \theta} \otimes \frac{\partial}{\partial \theta}
      + \frac{1}{(\sin \theta)^2} \frac{\partial}{\partial \varphi} \otimes \frac{\partial}{\partial \varphi}.
  \end{equation}
  A positively oriented orthonormal basis for this metric is
  \begin{equation}
    (e_1, e_2) = (d \theta,\ \sin \theta\, d \varphi),
  \end{equation}
  and hence $\star_1 e_1 = e_2,\ \star_1 e_2 = - e_1$.
  The volume form is given by
  \begin{equation}
    \vol = e_1 \wedge e_2 = \sin \theta\, d \theta \wedge d \varphi.
  \end{equation}
\end{discussion}
\begin{example}
  Consider an exact potential $u$, given in spherical coordinates as
  \begin{equation}
    u(\theta, \varphi) = \theta.
  \end{equation}
  Take $\kappa \equiv 2$.
  Then the flow rate is given by
  \begin{equation}
    \label{eq:simulations/hemisphere/flow_rate}
    q
    = - \star_1 \kappa d_0 u
    = - 2 \star_1(d \theta)
    = - 2 \sin \theta\, d \varphi.
  \end{equation}
  The internal production rate is
  \begin{equation}
    f
    = d q
    = - 2 \cos \theta\, d \theta \wedge d \varphi.
  \end{equation}
  The boundary of $M$ is the equator $\theta = \pi / 2$, i.e.,
  \begin{equation}
    \partial M = \set{(x, y, 0) \in \R^3}{x^2 + y^2 = 1} = \set{(\cos \varphi, \sin \varphi, 0)}{0 \leq \varphi \leq 2 \pi}.
  \end{equation}
  Let the Dirichlet boundary $\Gamma_\Dirichlet$ be has $y \leq 0$, while and the Neumann boundary $\Gamma_\Neumann$ has $y \geq 0$, i.e.,
  \begin{equation}
    \Gamma_\Dirichlet
    = \set{(\cos \varphi, \sin \varphi, 0)}
    {\pi \leq \varphi \leq 2 \pi},\
    \Gamma_\Neumann = \set{(\cos \varphi, \sin \varphi, 0)}{0 \leq \varphi \leq \pi}.
  \end{equation}
  The prescribed potential equals to
  \begin{equation}
     g_\Dirichlet(\varphi) = u(\pi / 2, \varphi) = 1.
  \end{equation}
  The prescribed flow rate is
  \begin{equation}
    g_\Neumann = \restrict{q}{\theta = \pi / 2} = -2 d \varphi.
  \end{equation}
  Projections on the $xy$-plane of a regular curvilinear mesh $\mathcal{M}$ for $M$ and its Forman subdivision $\mathcal{K}$ are shown on \Cref{figure:mesh/hemisphere_6_6}.
  Exact solutions for the continuous problem on $M$ (discretised on $\mathcal{K})$ and solutions for the primal and mixed weak discrete formulations on $\mathcal{K}$ are visualised on \Cref{figure:diffusion/hemisphere_spherical_linear}.
  The relative errors are:
  0.0190061
   for primal weak potential;
  0.0256953
   for mixed weak potential;
  0.0161111
   for primal weak flow rate;
  0.000889324
   for mixed weak flow rate.
\end{example}

\begin{figure}[!ht]
  \begin{subfigure}{.45\textwidth}
    \centering
    \includegraphics[width = \linewidth, keepaspectratio]{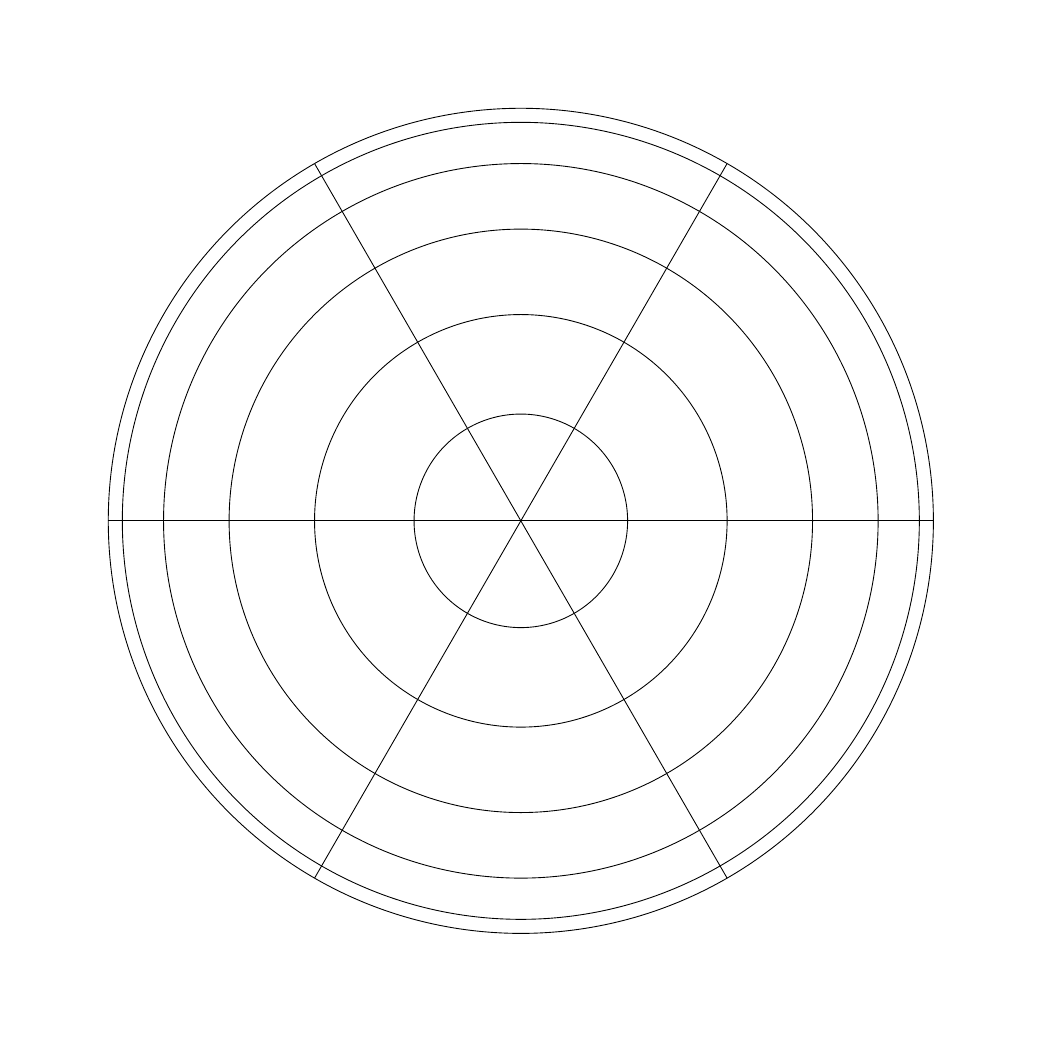}
    \caption{Projection of a regular mesh $\mathcal{M}$ for a hemisphere}
  \end{subfigure}
  \begin{subfigure}{.45\textwidth}
    \centering
    \includegraphics[width = \linewidth, keepaspectratio]{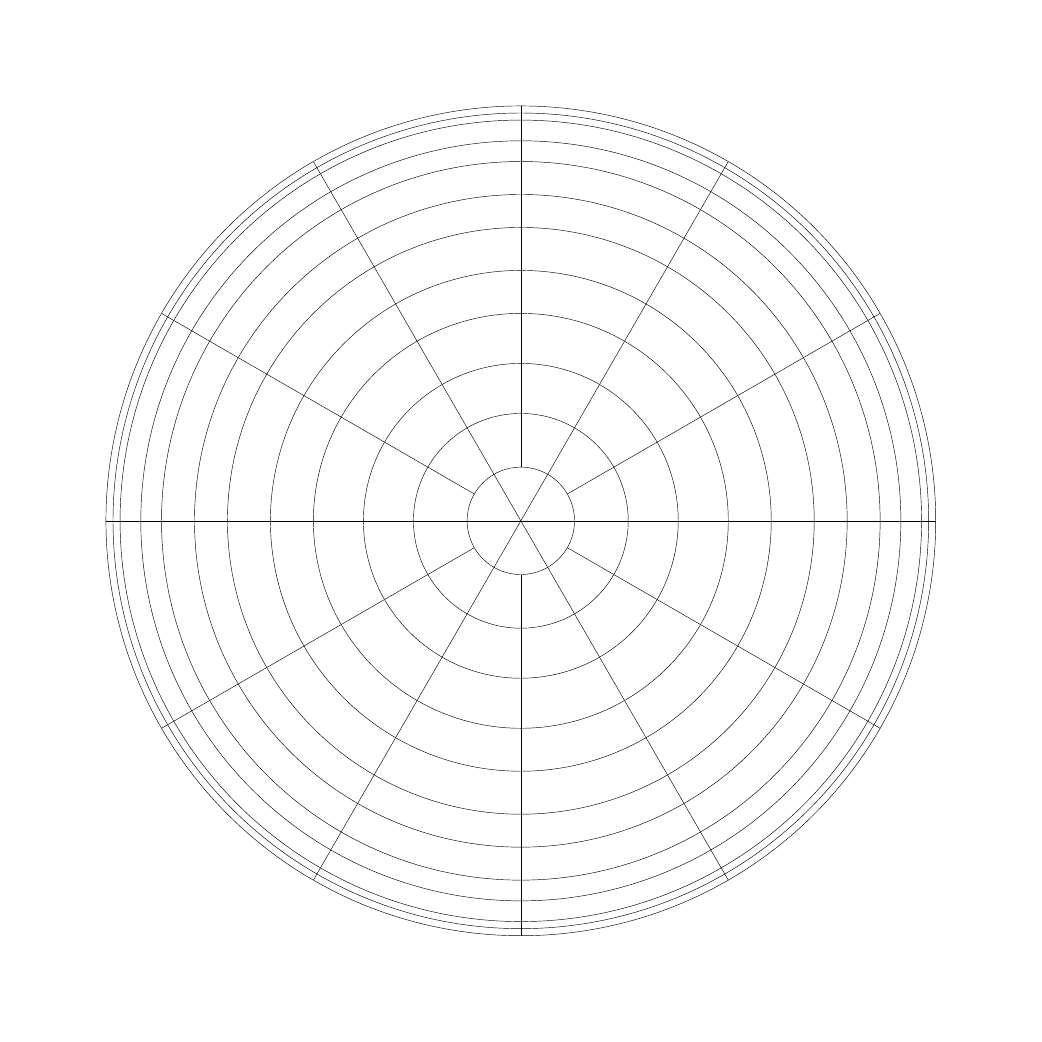}
    \caption{Projection of the Forman subdivision $\mathcal{K}$ of $\mathcal{M}$}
  \end{subfigure}
  \caption{Projections of a regular mesh on a hemisphere and its Forman subdivision}
  \label{figure:mesh/hemisphere_6_6}
\end{figure}

\begin{figure}[!ht]
  \begin{subfigure}{.3\textwidth}
    \centering
    \includegraphics[width = \linewidth, keepaspectratio]{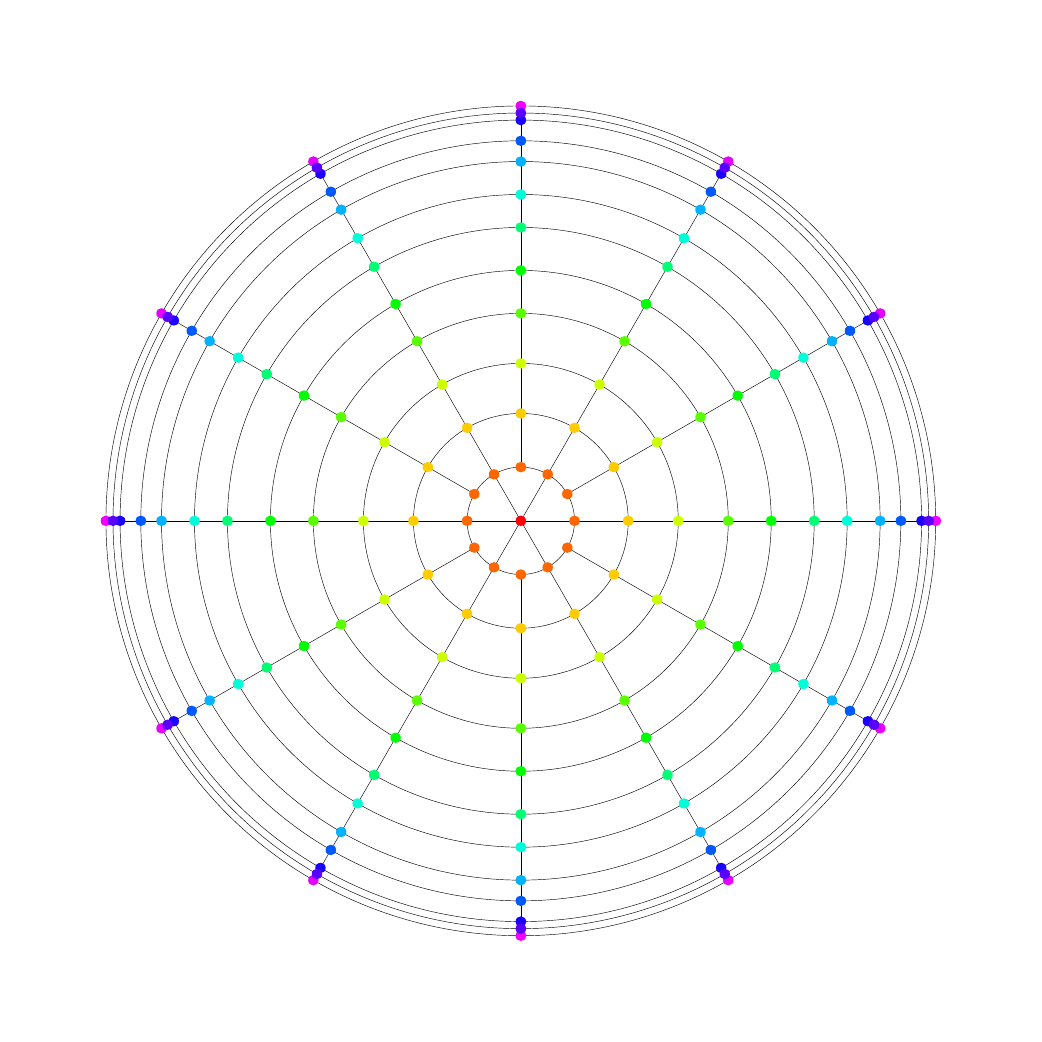}
    \caption{Exact continuous potential}
  \end{subfigure}
  \begin{subfigure}{.3\textwidth}
    \centering
    \includegraphics[width = \linewidth, keepaspectratio]{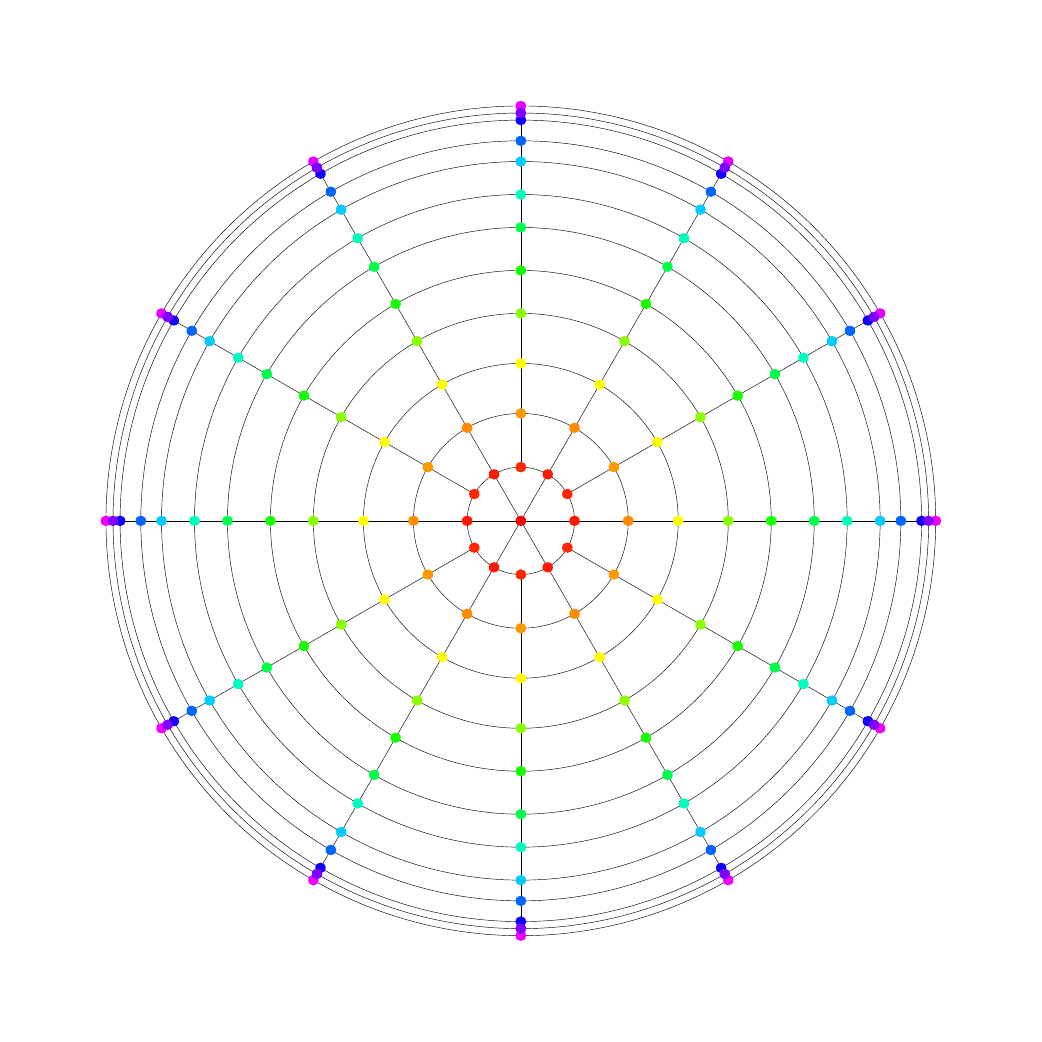}
    \caption{Discrete primal weak potential}
  \end{subfigure}
  \begin{subfigure}{.3\textwidth}
    \centering
    \includegraphics[width = \linewidth, keepaspectratio]{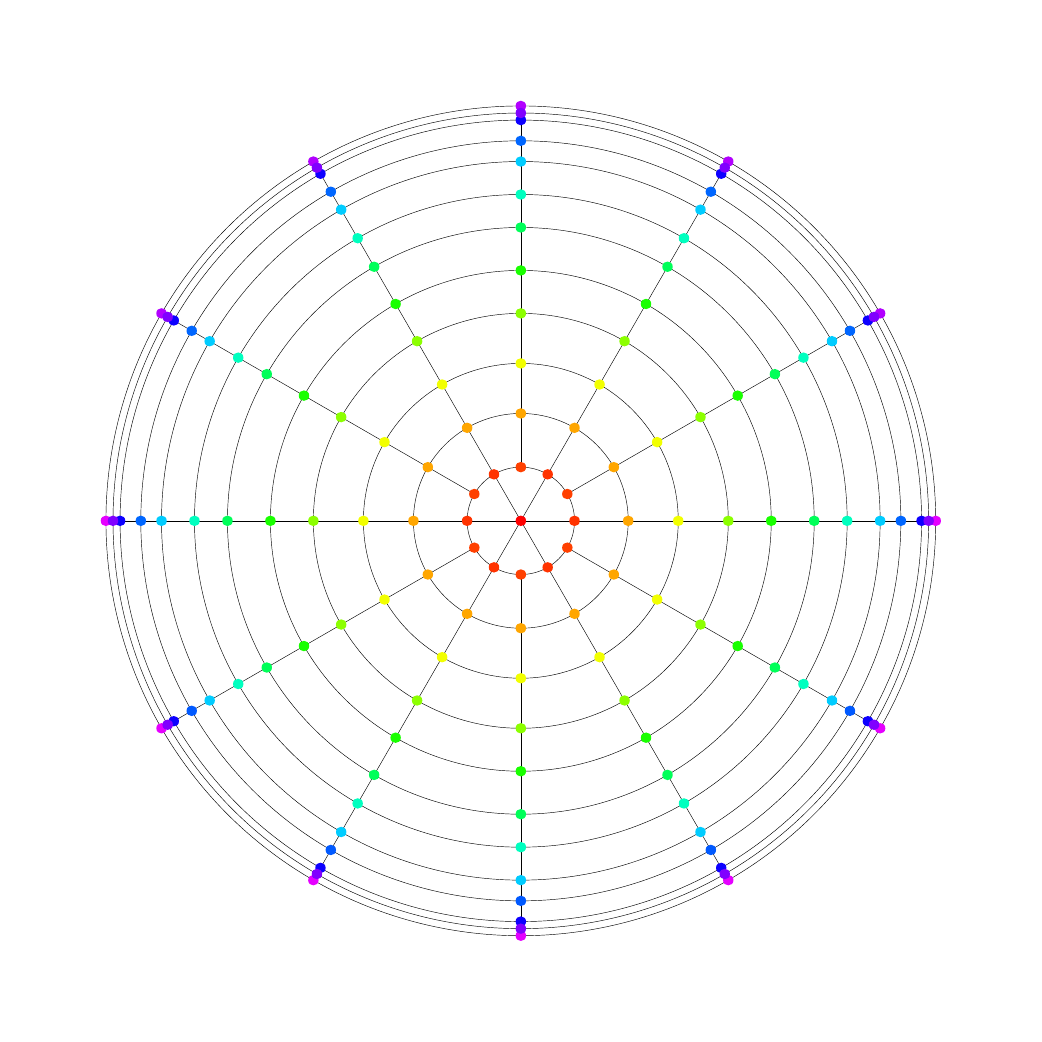}
    \caption{Discrete mixed weak potential}
  \end{subfigure}

  \begin{subfigure}{.3\textwidth}
    \centering
    \includegraphics[width = \linewidth, keepaspectratio]{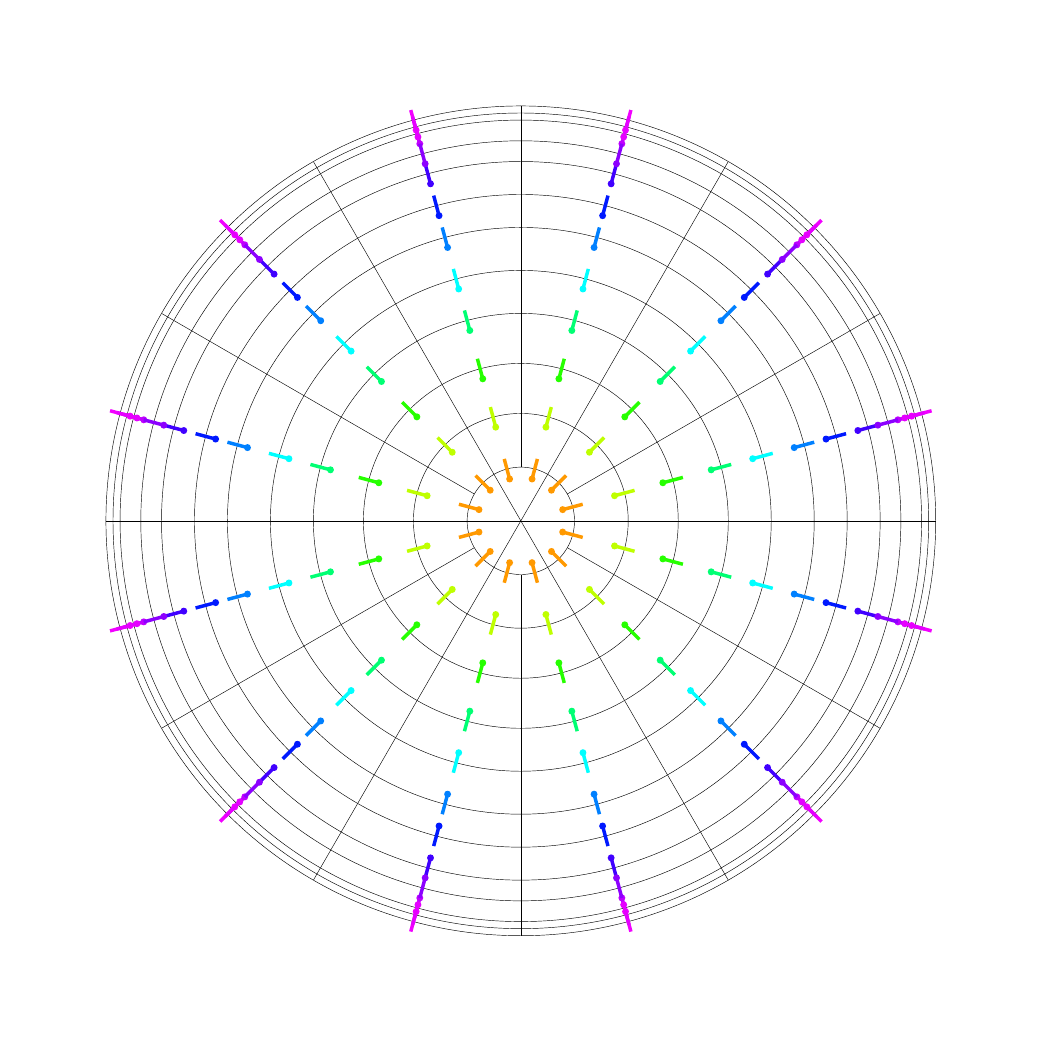}
    \caption{Exact continuous flow rate}
  \end{subfigure}
  \begin{subfigure}{.3\textwidth}
    \centering
    \includegraphics[width = \linewidth, keepaspectratio]{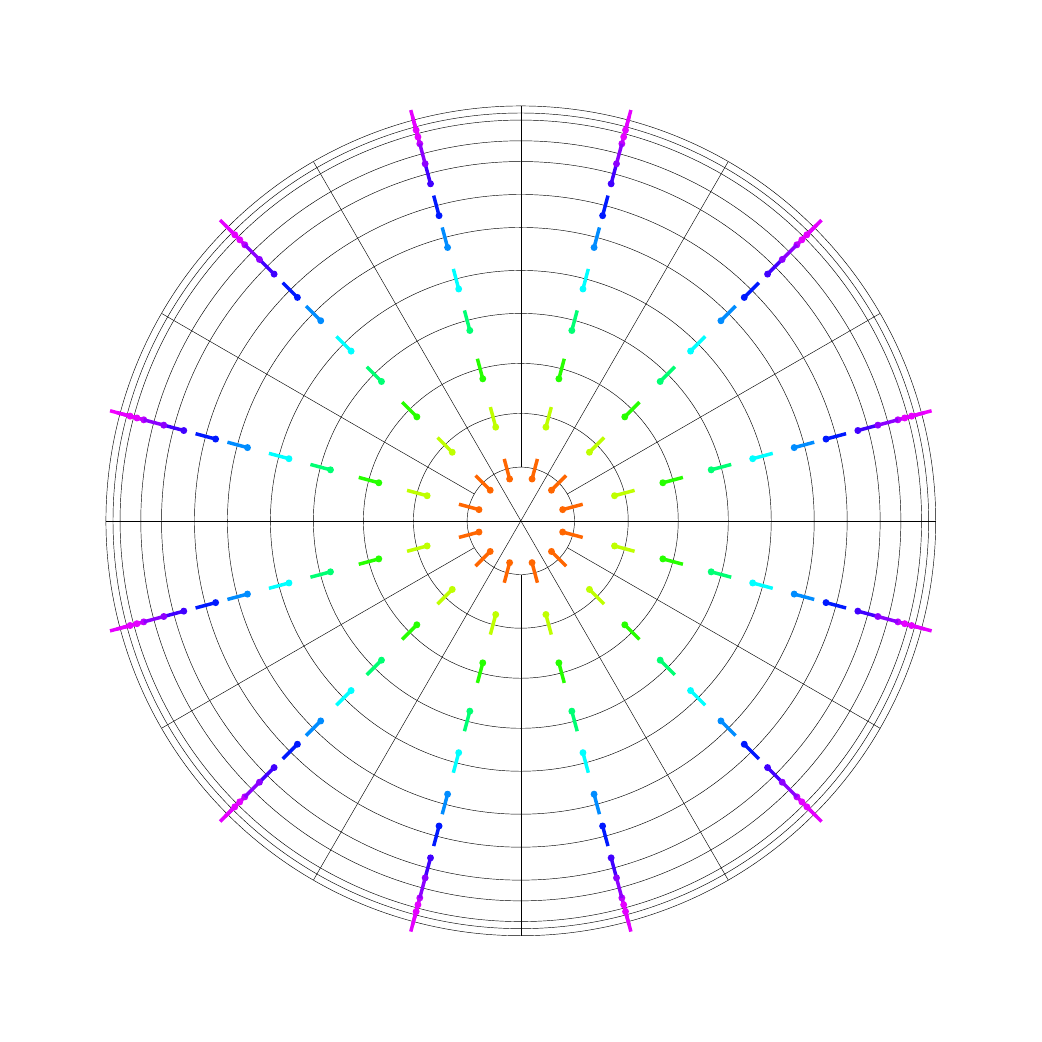}
    \caption{Discrete primal weak flow rate}
  \end{subfigure}
  \begin{subfigure}{.3\textwidth}
    \centering
    \includegraphics[width = \linewidth, keepaspectratio]{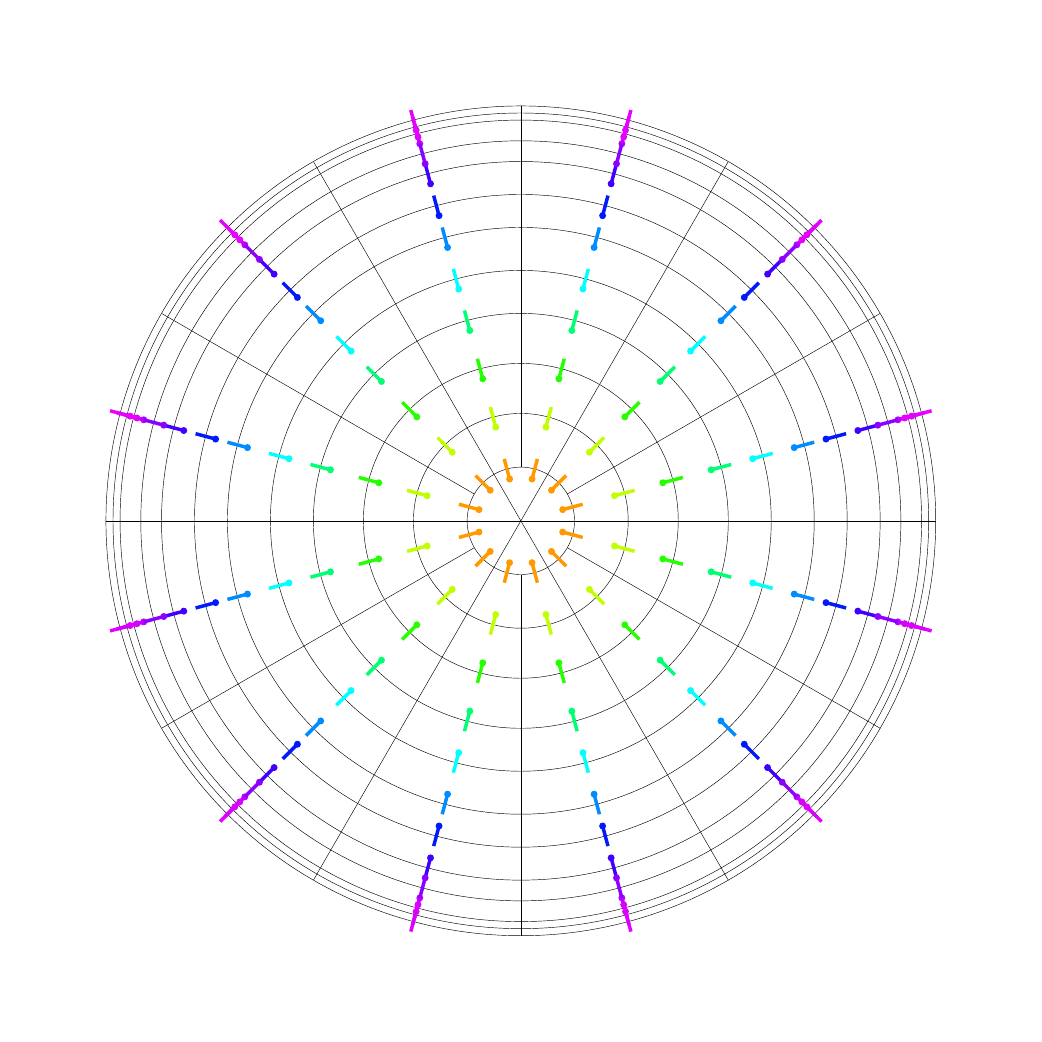}
    \caption{Discrete mixed weak flow rate}
  \end{subfigure}

  \caption{Projections of solutions for diffusion with quadratic potential on a hemisphere with a spherical mesh}
  \label{figure:diffusion/hemisphere_spherical_linear}
\end{figure}

\subsection{Linear potential on an irregular 2D mesh}
\label{sec:simulations/neper}

\begin{example}
  Let $M = [0, 20] \times [0, 15]$ be a rectangle with the Euclidean metric.
  Consider an exact potential
  \begin{equation}
    u \colon M \to \R,\ u(x, y) = 5 x.
  \end{equation}
  Take $\kappa \equiv 6$.
  Then the flow rate is given by
  \begin{equation}
    q = - 6 \star_1 d_0 u = - 6 \star_1(5\, d x) = - 30\, d y.
  \end{equation}
  The internal production rate is
  \begin{equation}
    f = d q = 0.
  \end{equation}
  Let the Dirichlet boundary $\Gamma_\Dirichlet$ consists of the vertical sides, while and the Neumann boundary $\Gamma_\Neumann$ consists of the horizontal sides, i.e.,
  \begin{equation}
    \Gamma_\Dirichlet = \{0, 20\} \times [0, 15],\ \Gamma_\Neumann = [0, 20] \times \{0, 15\}.
  \end{equation}
  The prescribed potential and flow rate are equal to
  \begin{equation}
    g_\Dirichlet(x, y)
    = (\tr_{\Gamma_\Dirichlet, 0}(u))(x, y)
    \begin{cases}
      0, & x = 0 \\
      100, & x = 20
    \end{cases},\
    g_\Neumann = \tr_{\Gamma_\Neumann, 1}(q) = 0.
  \end{equation}
  An irregular mesh $\mathcal{M}$ of $10$ polygons for $M$ and its Forman subdivision $\mathcal{K}$ are shown on \Cref{figure:mesh/2d_10_grains}.
  Exact solutions for the continuous problem on $M$ (discretised on $\mathcal{K})$ and solutions for the primal and mixed weak discrete formulations on $\mathcal{K}$ are visualised on \Cref{figure:diffusion/rectangle_linear}.
  The relative errors are:
  0.0942374
   for primal weak potential;
  0.132704
   for mixed weak potential;
  0.397656
   for primal weak flow rate;
  0.329299
   for mixed weak flow rate.
\end{example}

\begin{figure}[!ht]
  \begin{subfigure}{.45\textwidth}
    \centering
    \includegraphics[width = \linewidth, keepaspectratio]{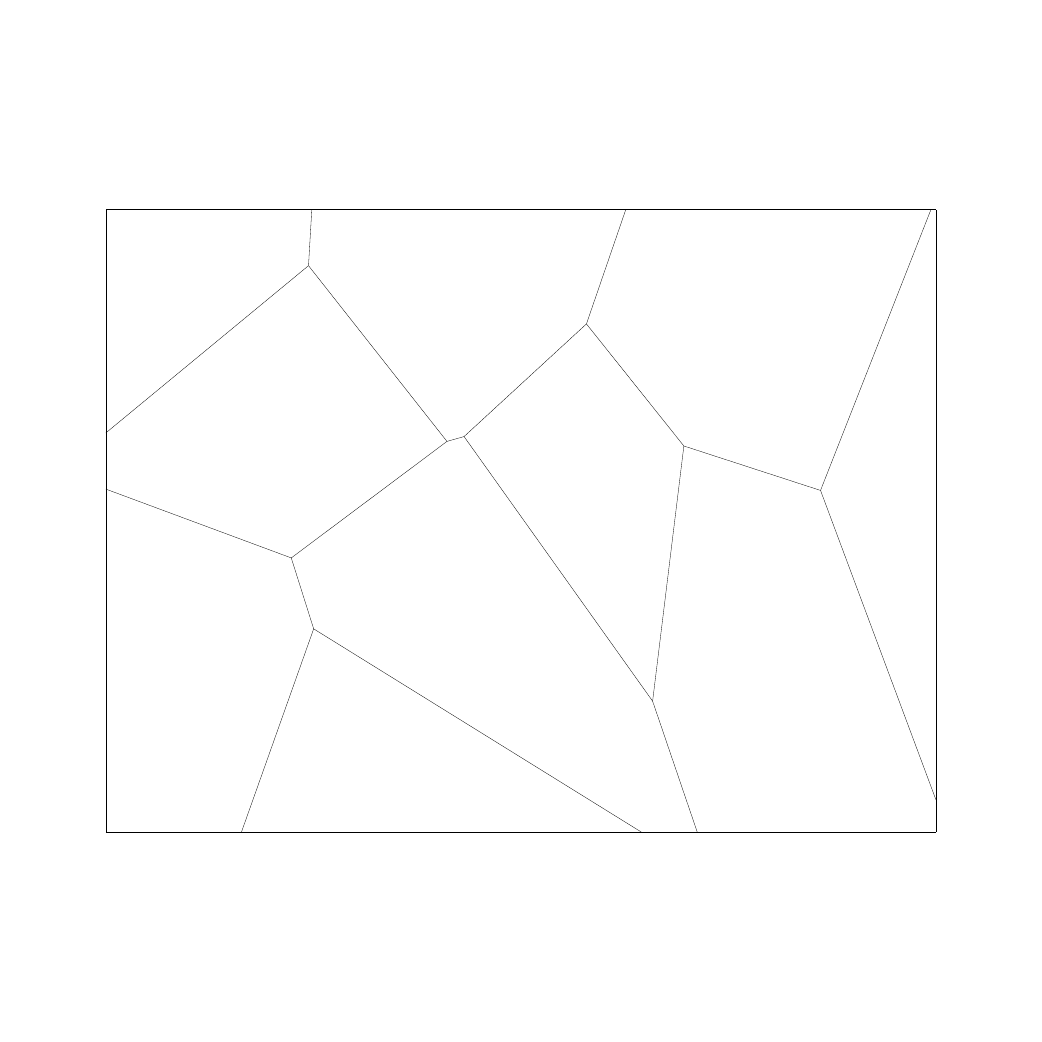}
    \caption{Irregular mesh $\mathcal{M}$ (produced by Neper)}
  \end{subfigure}
  \begin{subfigure}{.45\textwidth}
    \centering
    \includegraphics[width = \linewidth, keepaspectratio]{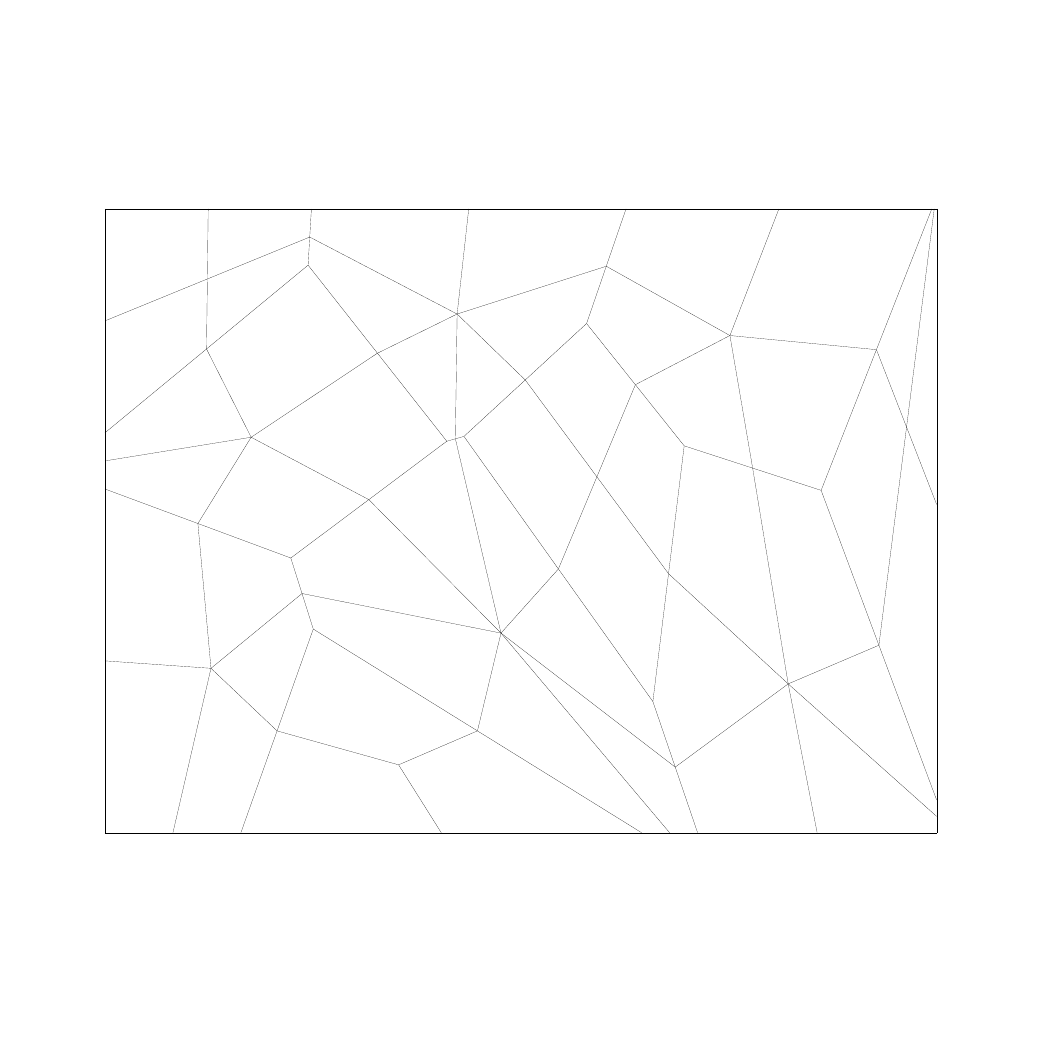}
    \caption{Forman subdivision $\mathcal{K}$ of $\mathcal{M}$}
  \end{subfigure}
  \caption{Irregular mesh on a rectangle and its Forman subdivision}
  \label{figure:mesh/2d_10_grains}
\end{figure}

\begin{figure}[!ht]
  \begin{subfigure}{.3\textwidth}
    \centering
    \includegraphics[width = \linewidth, keepaspectratio]{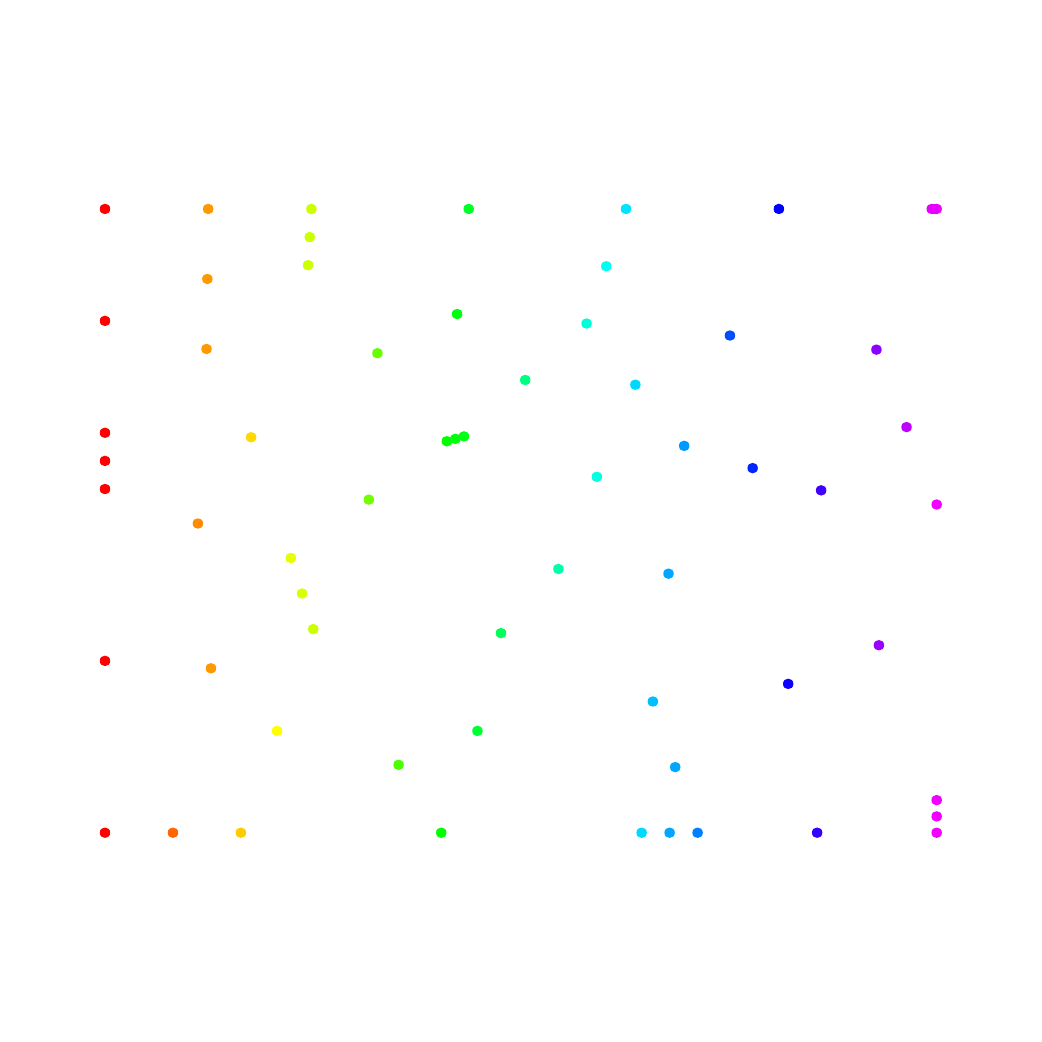}
    \caption{Exact continuous potential}
  \end{subfigure}
  \begin{subfigure}{.3\textwidth}
    \centering
    \includegraphics[width = \linewidth, keepaspectratio]{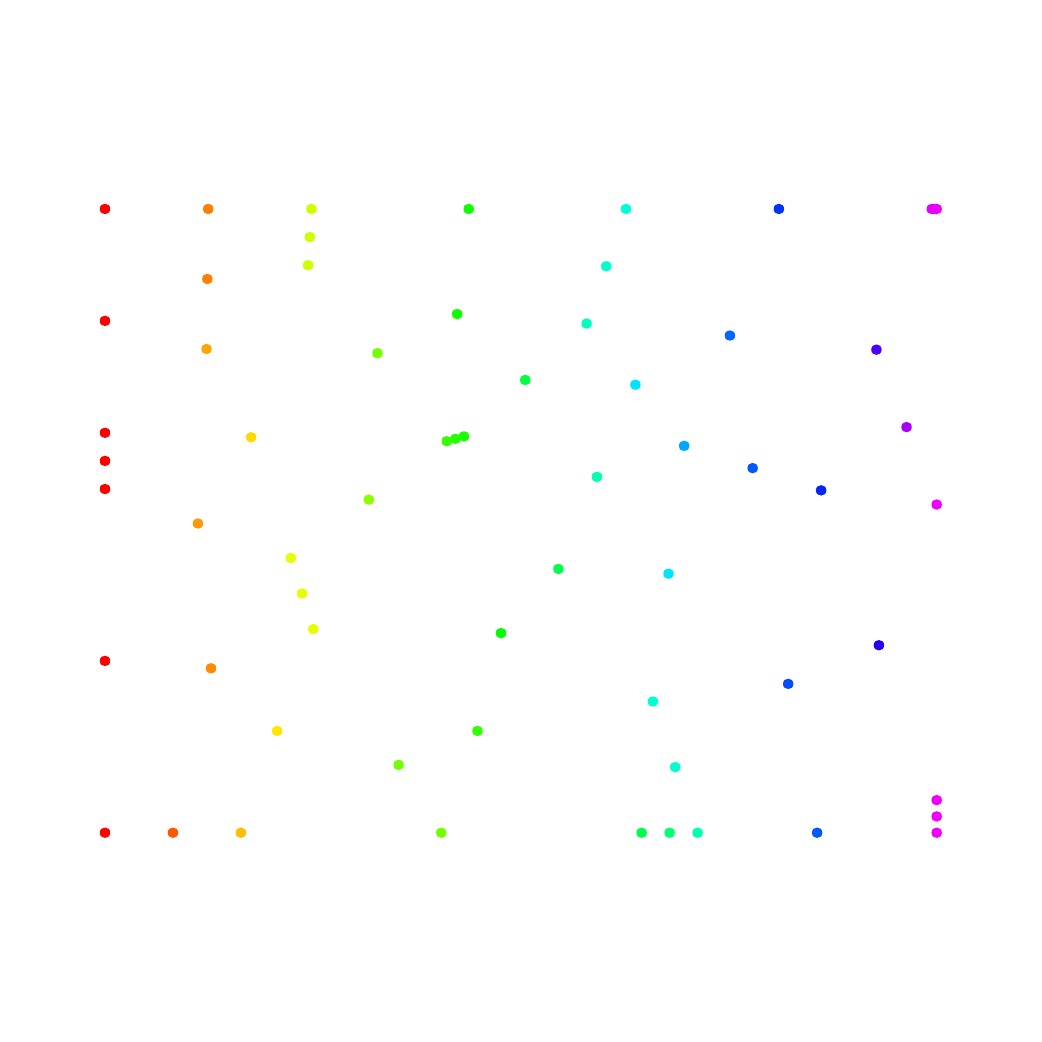}
    \caption{Discrete primal weak potential}
  \end{subfigure}
  \begin{subfigure}{.3\textwidth}
    \centering
    \includegraphics[width = \linewidth, keepaspectratio]{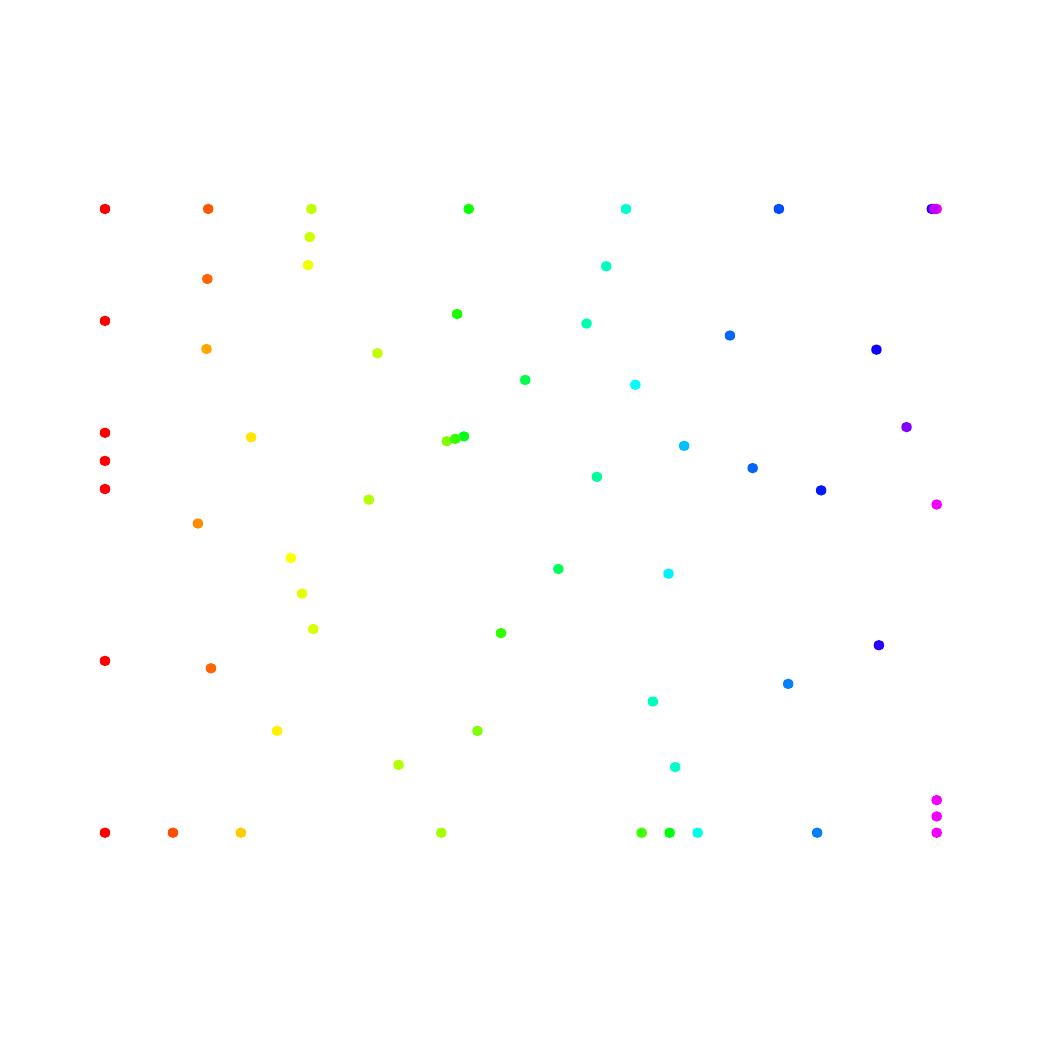}
    \caption{Discrete mixed weak potential}
  \end{subfigure}

  \begin{subfigure}{.3\textwidth}
    \centering
    \includegraphics[width = \linewidth, keepaspectratio]{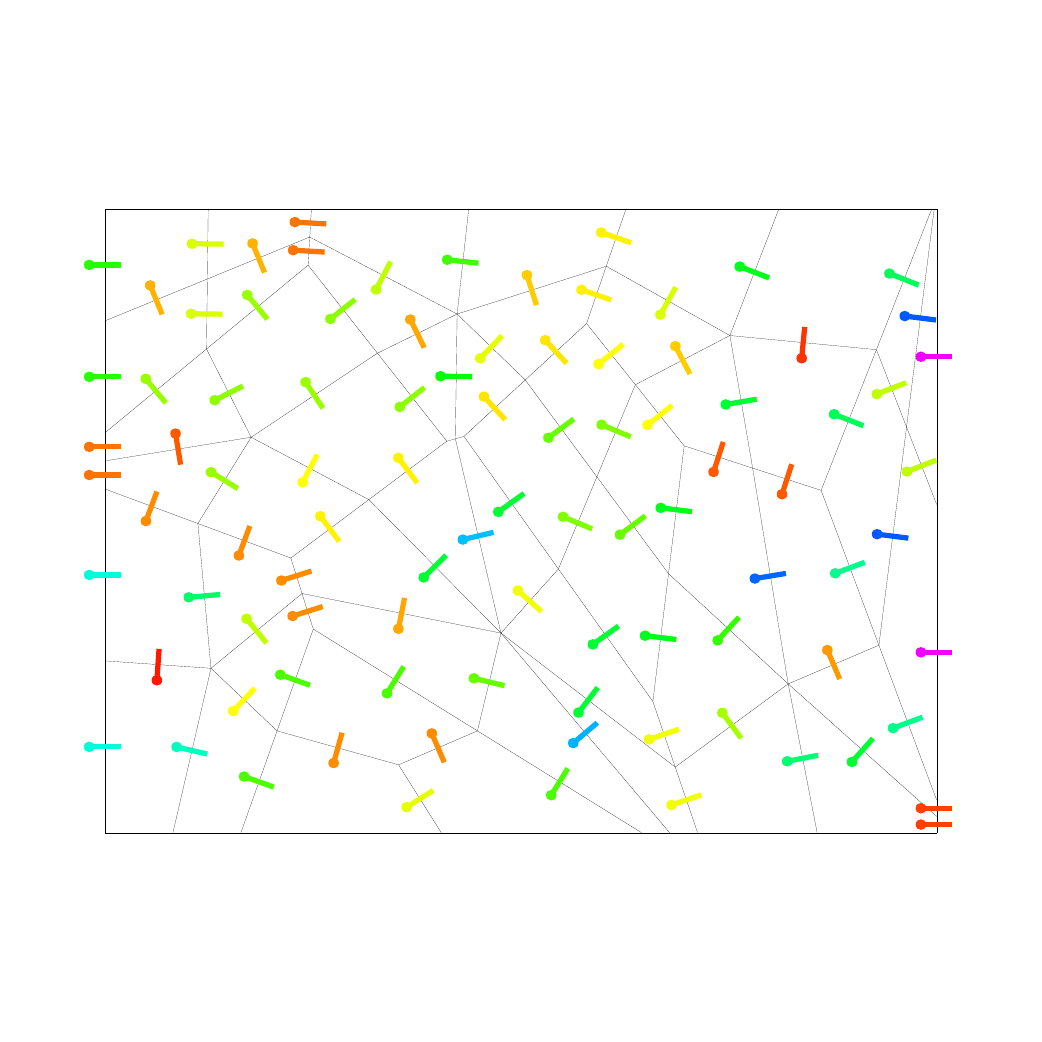}
    \caption{Exact continuous flow rate}
  \end{subfigure}
  \begin{subfigure}{.3\textwidth}
    \centering
    \includegraphics[width = \linewidth, keepaspectratio]{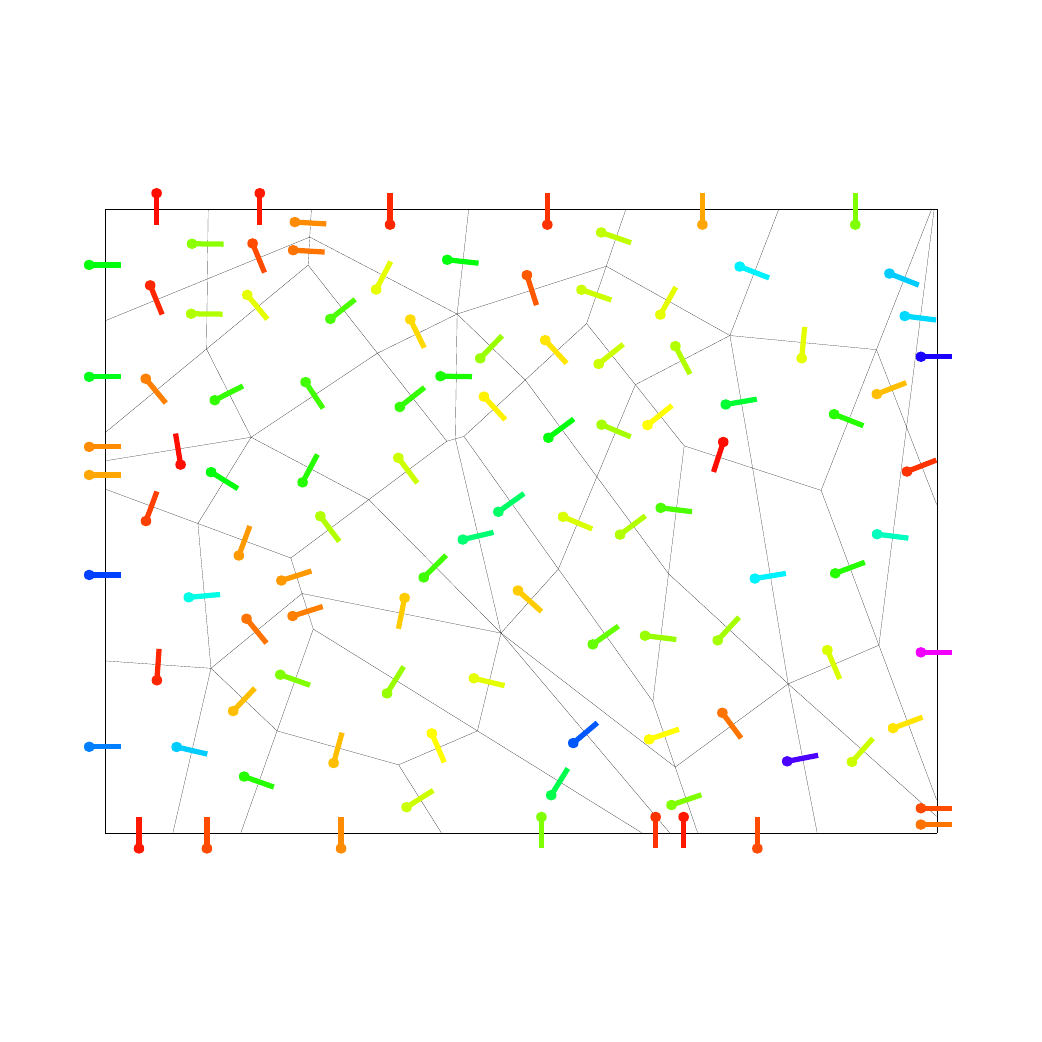}
    \caption{Discrete primal weak flow rate}
  \end{subfigure}
  \begin{subfigure}{.3\textwidth}
    \centering
    \includegraphics[width = \linewidth, keepaspectratio]{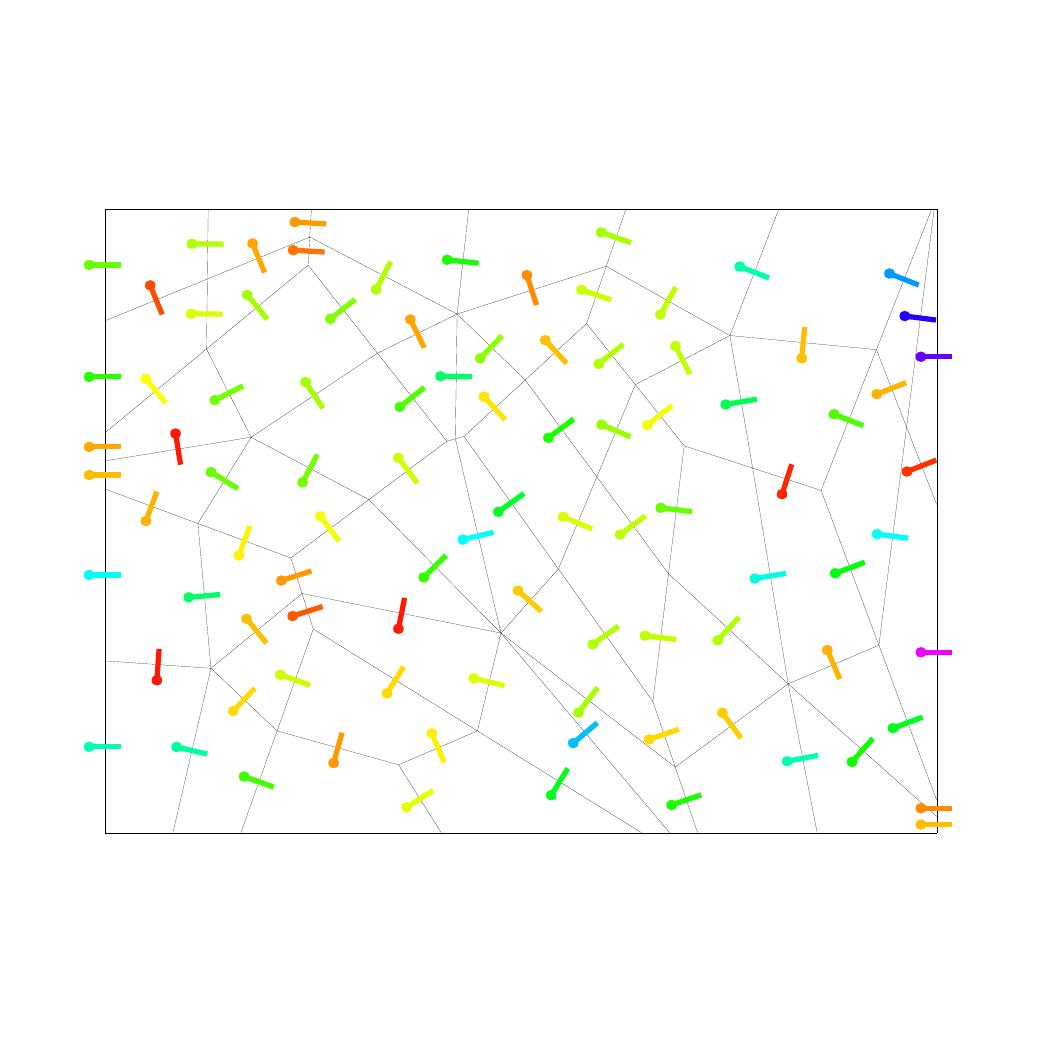}
    \caption{Discrete mixed weak flow rate}
  \end{subfigure}

  \caption{Solutions for diffusion with linear potential on a rectangle with irregular mesh $\mathcal{K}$}
  \label{figure:diffusion/rectangle_linear}
\end{figure}
\begin{discussion}
  In the first three examples we observed small relative errors for both potentials and flow rates.
  A common feature of those examples was that in the respective embedded meshes the topologically orthogonal sides were also geometrically orthogonal. This is not the case for the irregular mesh in the last example, suggesting that non-orthogonality of the embedded mesh is the reason for the high relative errors. Moreover, our tests on a problem with a parallelogram grid showed similarly high errors.
  As with other structure-preserving discretisations, the numerical accuracy of
  the proposed formulation depends on mesh geometry, with skewed or
  non-orthogonal meshes leading to reduced performance, as observed in this
  example.
  A compelling reason for this is that the inner product on $1$-cells does not produce optimal Hodge star when topologically orthogonal edges are not geometrically orthogonal.
  This can be mended potentially by using non-diagonal inner products, i.e., by introducing additional measures for pairs of $p$-cells having a common boundary $(p-1)$-cell.
  Such an inner product will lead to non-local Hodge star (see \cite[Section 8]{wilson2007cochain}) and adjoint coboundary operators but will still be computationally efficient.
  Indeed, the resulting linear systems for both primal and mixed formulations use only inner products, and can be expressed as sparse linear systems (although elimination of variables in the mixed formulation will not be effective).
  Regarding post-processing in the primal weak formulation, it can be reduced to solving a linear system with a sparse positive definite matrix, the matrix of the inner product.

  The definition and usage of inner products with non-orthogonal basis cochains (but still local) is a topic of an ongoing research.
\end{discussion}

\section{Conclusions}

We have developed a new approach to modelling transport phenomena that bridges the gap between discrete microstructural topology and continuum physics. The Combinatorial Mesh Calculus (CMC) presented here provides the first variational formulations -- both primal and mixed -- for conservation laws on cell complexes, establishing a mathematical framework that is more faithful to material physics.

\textit{Theoretical significance}: Our work demonstrates that transport phenomena can be formulated directly on discrete topological structures without reference to smooth manifolds. The exact preservation of conservation laws, discrete-continuous correspondences, and the natural treatment of multi-dimensional transport pathways represent a departure from approximating smooth equations to working intrinsically with the discrete topology of materials.

\textit{Computational impact}: The diagonal structure of key matrices in the mixed formulation enables efficient solution strategies. Our examples show that CMC handles curved geometries and irregular meshes naturally. While the examples presented use manufactured solutions for verification, they demonstrate the method's readiness for complex microstructural problems where different cell dimensions require distinct material properties.

\textit{Future directions}: The CMC framework opens several research avenues:
\begin{enumerate}
    \item Extension to momentum conservation for modelling deformation and defect dynamics (work in progress)
    \item Coupling with experimental microstructural data through direct topology extraction
    \item Multi-physics formulations leveraging the unified treatment of different conservation laws
    \item Development of adaptive refinement strategies based on topological rather than geometric criteria
\end{enumerate}

\textit{Broader implications}: As materials science increasingly reveals the importance of microstructural features across scales, modelling approaches must evolve beyond the continuum-discrete dichotomy. CMC represents a mathematical framework aligned with physical reality -- materials are neither collections of points nor smooth continua but complex assemblies of interacting components with distinct topological dimensions. By providing rigorous mathematics for this intermediate scale, we enable predictive modelling of phenomena from grain boundary engineering to defect-mediated transport.
The computational efficiency demonstrated here, combined with the method's intrinsic handling of complex geometries, positions CMC as a practical tool for materials design and analysis. The open source implementation ensures accessibility for both theoretical development and practical applications. This work establishes the foundation for a new generation of structure-aware computational methods in materials science.

\section*{CRediT authorship contrinution statement}

Kiprian Berbatov:
  Writing -- original draft,
  Writing -- review \& editing,
  Conceptualization,
  Methodology,
  Investigation,
  Software,
  Validation,
  Visualization.
Andrey P. Jivkov:
  Writing -- original draft,
  Writing -- review \& editing,
  Conceptualization,
  Methodology,
  Funding acquisition,
  Project administration.

\section*{Data availability}

Code repository for this project can be found at
\url{https://github.com/kipiberbatov/cmc}.

\section*{Declaration of competing interest}

The authors declare no conflict of interests.

\section*{Acknowledgements}

Jivkov acknowledges the financial support from the Engineering and Physical Sciences Research Council, UK, via grant EP/N026136/1.

\end{document}